\documentclass[10pt]{article}
\pdfoutput=1

\usepackage{caption}
\usepackage{subcaption}

\usepackage[backend=bibtex, style = numeric, doi = false, url = false, isbn = false, maxbibnames = 6]{biblatex}
\bibliography{jointBib}     
\renewbibmacro{in:}{}      
\newbibmacro{string+doi}[1]{\iffieldundef{doi}{#1}{\href{https://dx.doi.org/\thefield{doi}}{#1}}}
\DeclareFieldFormat{title}{\usebibmacro{string+doi}{\mkbibemph{#1}}}
\DeclareFieldFormat[article]{title}{\usebibmacro{string+doi}{\mkbibquote{#1}}}
\DeclareFieldFormat[incollection]{title}{\usebibmacro{string+doi}{\mkbibquote{#1}}}                   
\DeclareFieldFormat[inproceedings]{title}{\usebibmacro{string+doi}{\mkbibquote{#1}}}

\usepackage[margin=1em]{caption}
\usepackage[utf8]{inputenc}

\usepackage{amssymb}
\usepackage{amsfonts}
\usepackage{amsmath}
\usepackage{graphicx}
\usepackage{indentfirst}
\usepackage{dsfont}

\allowdisplaybreaks[4]

\usepackage{mathtools}

\usepackage{enumerate}
\usepackage{url}
\usepackage{lmodern}
\usepackage{newpxtext}

\usepackage{array}
\usepackage{multirow}
\usepackage{longtable}

\usepackage{amsthm}
\newtheorem{remark}{Remark}
\newtheorem{theorem}{Theorem}
\newtheorem{corollary}[theorem]{Corollary}
\newtheorem{proposition}[theorem]{Proposition}
\newtheorem{lemma}[theorem]{Lemma}

\usepackage[symbol]{footmisc}

\usepackage{tikz,pgfplots}
\definecolor{darkblue}{rgb}{0.15,0.35,0.55}
\definecolor{reddish}{rgb}{.8, 0.2, 0.2}
\definecolor{plotblue}{RGB}{0,119,187}
\definecolor{plotgreen}{RGB}{0,153,136}
\definecolor{plotorange}{RGB}{238,119,51}
\definecolor{plotmagenta}{RGB}{238,51,119}
\definecolor{plotgray}{RGB}{128,128,128}
\definecolor{plotcyan}{RGB}{51,187,238}
\definecolor{plotred}{RGB}{204,51,17}
\usepgfplotslibrary{fillbetween}

\usepackage[pdfpagelabels, linktocpage=true,pdfusetitle]{hyperref}
\hypersetup{
	colorlinks=true,
	citecolor=darkblue,
	linkcolor=reddish,
	urlcolor=darkblue,
	pdfauthor={},
	pdfsubject={},
	breaklinks=true,
}

\usepackage{cleveref}

\oddsidemargin 0 cm
\evensidemargin 0 cm
\topmargin -1.5 cm \textheight 23 cm \textwidth 16.5 cm
\raggedbottom

%vsCom.sty 16th Nov'17 --> Edited 22Jun'21
% Definitions for Latex2e amsmath
% GENERAL, MATH CAL (SCRIPT), BOLD FACE, BLACKBOARD BOLD, GREEK

% GENERAL LIST
\long\def\ca#1\cb{} %Use for commenting out: \ca...\cb

\newcommand{\becs}{\begin{cases}}
	\newcommand{\bem}{\begin{matrix}}

		 %Symmetrizes some math expressions

		 %Degree
		 %\tst for textstyle
		\newcommand{\dya}[1]{|#1\rangle\langle#1|}

		\newcommand{\encs}{\end{cases}}
	\newcommand{\enm}{\end{matrix}}

\newcommand{\inpd}[2]{\langle#1|#2\rangle }
\newcommand{\ket}[1]{|#1\rangle }

 %Imaginary, replaces standard Tex definition
 
 % And see \rp
  % And see \Rp

% \renewcommand{\mod}{\mathop{\rm mod}\nolimits} %Replaces Latex form

 %Also see \sm

\newcommand{\ot}{\otimes }
 %\phd{4ex}
 %\phu{10pt}

%\pval\int ->principal value integral

 %Real, replaces standard Tex definition

%\newcommand{\Rn}{{\rm Rn}}

 %Also see \negg

\newcommand{\Tr}{{\rm Tr}}

 %See \dst for displaystyle

% MATH CAL (SCRIPT)

\newcommand{\BC}{{\mathcal B}}
\newcommand{\CC}{{\mathcal C}}
\newcommand{\DC}{{\mathcal D}}

\newcommand{\FC}{{\mathcal F}}

\newcommand{\HC}{{\mathcal H}}
\newcommand{\IC}{{\mathcal I}}

\newcommand{\LC}{{\mathcal L}}
\newcommand{\MC}{{\mathcal M}}
\newcommand{\NC}{{\mathcal N}}
\newcommand{\OC}{{\mathcal O}}
\newcommand{\PC}{{\mathcal P}}
\newcommand{\QC}{{\mathcal Q}}
\newcommand{\RC}{{\mathcal R}}

\newcommand{\TC}{{\mathcal T}}

\newcommand{\WC}{{\mathcal W}}

% BOLD FACE

% CAPS BOLD FACE

% LOWER CASE BOLD FACE

\newcommand{\bB}{\textbf{b}}

\newcommand{\pB}{\textbf{p}}
\newcommand{\qB}{\textbf{q}}

\newcommand{\sB}{\textbf{s}}
\newcommand{\tB}{\textbf{t}}

\newcommand{\vB}{\textbf{v}}
\newcommand{\wB}{\textbf{w}}
\newcommand{\xB}{\textbf{x}}
\newcommand{\yB}{\textbf{y}}

% SPECIAL BOLD

% BLACKBOARD BOLD
\newcommand{\Cbb}{\mathbb{C}}

% GREEK
% The following Greek definitions do not replace
% anything in Tex or Latex.  
%The following are NOT abbreviated: eta mu nu xi pi rho tau phi chi psi
% omicron does not occur: use o
% The following caps are NOT abbreviated: Xi, Pi, Phi, Psi
\newcommand{\al}{\alpha }
\newcommand{\bt}{\beta }
\newcommand{\gm}{\gamma }

\newcommand{\dl}{\delta }
\newcommand{\Dl}{\Delta }
\newcommand{\ep}{\epsilon}

\newcommand{\zt}{\zeta }
\renewcommand{\th}{\theta } %Latex \th = thor n

\newcommand{\Th}{\Theta }

\newcommand{\Lm}{\Lambda }

\newcommand{\sg}{\sigma }

\newcommand{\om}{\omega }

\DeclareMathOperator{\tr}{tr}
\newcommand{\one}{\mathds{1}}
\newcommand{\herm}{^{\mathsf{H}}}
\DeclareMathOperator{\Herm}{Herm}

\DeclareMathOperator{\Span}{span}

\usepackage[normalem]{ulem}

\def\outl#1{\par{\medskip\noindent\hspace*{0.1cm}\bf
		\mathversion{bold}#1\mathversion{normal}\smallskip} }
   \def\xa{} \def\xb{}  

\def\outl#1{}\def\xa{}\def\xb{}

\ca
\def\outl#1{\par{\medskip\noindent\hspace*{.5cm}\bf
		\mathversion{bold}#1\mathversion{normal}\smallskip} }
\long\def\xa#1\xb{} %
 
\cb

  \usepackage[affil-it]{authblk}

\title{The platypus of the quantum channel zoo}
\author[1,2,3]{Felix Leditzky\thanks{\href{mailto:leditzky@illinois.edu}{\texttt{leditzky@illinois.edu}}
	}
}
\author[2,3]{Debbie Leung\thanks{\href{mailto:wcleung@uwaterloo.ca}{\texttt{wcleung@uwaterloo.ca}}}
}
\author[4,5]{Vikesh Siddhu\thanks{
    \href{mailto:vsiddhu@protonmail.com}{\texttt{vsiddhu@protonmail.com}}, Present address:
IBM Quantum}}

\author[4,6]{Graeme Smith\thanks{\href{mailto:graeme.smith@colorado.edu}{graeme.smith@colorado.edu}}}
\author[7]{John A. Smolin\thanks{\href{mailto:smolin@us.ibm.com}{smolin@us.ibm.com}}}

\affil[1]{Department of Mathematics \& IQUIST, University of Illinois at Urbana-Champaign}
\affil[2]{Institute for Quantum Computing \& Department of Combinatorics and Optimization, University of Waterloo}
\affil[3]{Perimeter Institute for Theoretical Physics}
\affil[4]{JILA, University of Colorado Boulder}
\affil[5]{Department of Physics \& Quantum Computing Group, Carnegie Mellon University}
\affil[6]{CTQM \& Department of Physics, University of Colorado Boulder}
\affil[7]{IBM Quantum, IBM T.J.~Watson Research Center}

\begin{document}
	\maketitle

\begin{abstract}
    Understanding quantum channels and the strange behavior of their capacities
    is a key objective of quantum information theory.  Here we study a
    remarkably simple, low-dimensional, single-parameter family of quantum
    channels with exotic quantum information-theoretic features.  As the
    simplest example from this family, we focus on a qutrit-to-qutrit channel
    that is intuitively obtained by hybridizing together a simple degradable
    channel and a completely useless qubit channel. Such hybridizing makes this
    channel's capacities behave in a variety of interesting ways. For instance,
    the private and classical capacity of this channel coincide and can be
    explicitly calculated, even though the channel does not belong to any class
    for which the underlying information quantities are known to be additive.
    Moreover, the quantum capacity of the channel can be computed explicitly,
    given a clear and compelling conjecture is true.  This ``spin alignment
    conjecture,'' which may be of independent interest, is proved in certain
    special cases and additional numerical evidence for its validity is
    provided.  Finally, we generalize the qutrit channel in two ways, and the
    resulting channels and their capacities display similarly rich behavior.
    In the companion paper \cite{paper2PRL}, we further show that the qutrit
    channel demonstrates superadditivity when transmitting quantum information
    jointly with a variety of assisting channels, in a manner unknown before. 
\end{abstract}

\newpage
\tableofcontents
\newpage

\section{Introduction}\label{sec:introduction}

Quantum channels model noisy communication links between quantum parties.  The
channel noise affecting signals can be mitigated by encoding the messages
across many channel uses.  The highest rate at which information can be sent
reliably is known as a capacity of the channel.  Depending on the type of
information to be transmitted, we obtain different capacity quantities; for
example, a quantum channel may be used to transmit classical, quantum, or
private classical information.  The capacities in each case are the classical
capacity $\CC$, the quantum capacity $\QC$, and the private classical capacity
$\PC$, measured in bits per channel use, qubits per channel use, and private
bits per channel use, respectively.  The various capacities of a quantum
channel quantify its usefulness in the respective communication setting.

There is a variety of synergy effects that may occur during a quantum channel's
transmission of the various types of (quantum, private, classical) information.
These include super-additivity of coherent information
\cite{ShorJohn96,DiVincenzoShorEA98}, private information
\cite{SmithRenesEA08}, Holevo information \cite{Hastings09}, superactivation of
quantum capacity \cite{SmithYard08}, and private communication at a rate above
the quantum
capacity~\cite{HorodeckiHorodeckiEA00,HorodeckiHorodeckiEA05,LeungLiEA14} (see
Section~\ref{sec:prelim} for details).  These nonadditive effects enable
exciting and novel communication protocols, but at the same time they obscure a
succinct mathematical characterization of the corresponding quantum channel
capacities.

Remarkably, such nonadditivities appear to be common; some are exhibited even
for simple channels such as the depolarizing channel.  On the other hand, for
certain classes of quantum channels the information quantities listed above can
be additive, thus simplifying their information-theoretical characterization.
Unfortunately this is only known to be true in a few special cases.  For
example, the coherent information of a PPT channel is additive and indeed zero
\cite{HorodeckiHorodeckiEA00}.  The only other channel classes with known
additive coherent information are the degradable \cite{DevetakShor05} and
antidegradable channels, which as a result have a so-called single-letter
formula for their quantum capacity.  These channels also have the pleasing
property of having their private capacity equal to the quantum capacity
\cite{Smith08}; as a result, also the private capacity is given by a
single-letter formula for these channels.  There exists a smattering of
channels whose Holevo information is additive, and therefore these channels
have a single-letter formula for their classical
capacity~\cite{King02,King03,Shor02,KingMatsumotoEA07}.  However, beyond
special examples of proven nonadditivities and additivities, little is known
about most capacities of most channels.

The best path towards a deeper understanding of nonadditivity effects in
quantum information---in fact, a better understanding of quantum information
itself---is to better understand and develop the menagerie of these phenomena.
However, clean and clear examples of channels that isolate different aspects of
nonadditivity are in short supply.  As a result, over the past two decades
significant effort has been dedicated to elucidating these phenomena, leading
to numerous exciting findings; yet a full understanding still remains elusive.
Without such understanding, we lack a theory on how to best communicate with
quantum channels, and fail to answer the kinds of questions resolved in
classical information theory.  This is substantiated by the fact that random
codes can be suboptimal; that we cannot evaluate capacities beyond special
examples; that the known capacity quantities may not fully capture the
communication potential of a noisy channel; and that our understanding of error
correction in the quantum setting is incomplete, whether the data is classical,
private, or quantum.

This paper studies novel examples of additivity based on combining two
well-understood but very different classes of channels.  
In the simplest case, we combine two qubit-channels in a qutrit channel so that their inputs overlap along one dimension.  
The superposed identities of the subchannels used in the
transmission convey additional information to the receiver and thus becomes an integral
part of the communication sent. 

Our channel is constructed as follows.  We start with a degradable channel,
which we know has single-letter quantum and private capacities.  We then
explicitly break degradability by adding an extra input state that lets the
sender transmit quantum information directly to the environment but allows no
additional information to be sent to the output.  This makes it impossible for
the channel output to simulate the environment as required for degradability,
and thus takes the channel outside any class known to have additive private
or coherent information.  Nevertheless, the coherent information appears to
remain additive, which we can show up to a very reasonable conjecture.  Even
more surprisingly, the private information remains additive but takes a
\emph{much} larger value than the original channel.  This difference in private
and quantum capacity is a clear signature of the nondegradability of the
channel.  The apparent additivity of coherent information must therefore be
coming from some new mechanism of additivity, which we seek to understand.  

\subsection{Main results}

% Old paragraph from v1:
%We study a remarkably simple, low-dimensional, single-parameter family of
%channels $\NC_s$ (which first appeared in \cite{Siddhu21} and is defined in
%\eqref{isoDef1} below).  This family exhibits many strange behaviors for
%quantum communication while having uncomplicated classical and private
%classical capacities:  The classical and private classical capacities can be
%calculated explicitly because the underlying information quantities (Holevo and
%private information, respectively) are additive, even though the channel does
%not belong to any of the known additivity classes.  The same holds true for the
%quantum capacity: the coherent information of this family is additive, provided
%that a certain entropy minimization conjecture (stated in the main text) is
%true.  We give evidence for the validity of this ``spin alignment conjecture''
%in the main text.  We also present generalizations of this channel family to
%higher dimensions and to a 3-parameter family of channels.  The
%higher-dimensional generalization retains many of the interesting features of
%$\NC_s$ mentioned above, while the 3-parameter family displays rich behavior,
%including separations between all three capacities, which we analyze
%numerically.

% New paragraph for v2:
We study a remarkably simple, single-parameter family of qutrit channels $\NC_s$, 
along with generalizations $\MC_d$ and $\OC$ of this channel family to arbitrary dimension $d$. 
These channel families exhibit many strange behaviors for quantum communication while having uncomplicated classical and private classical capacities:  
The classical and private classical capacities can be calculated explicitly because the underlying information quantities (Holevo and private information, respectively) are additive, even though neither channel family belongs to any of the known additivity classes.  
The same holds true for the quantum capacity: the coherent informations of these  families are additive, provided that a certain entropy minimization conjecture is true.  
We give evidence for the validity of this ``spin alignment conjecture'' in the main text.
Despite the additivity of the capacities, both channel families have strictly larger private capacity than quantum capacity.
The simplicity of our channel families enables us to further generalize the study to a 3-parameter family of channels displaying rich behavior, including separations between all three capacities, which we analyze numerically.

The results in the current paper are even more surprising in the light of
additional findings in a companion paper \cite{paper2PRL}:
The coherent information of the channel $\NC_s$ tensored with an assisting
channel is super-additive, for a large swath of values of $s$ and for some
generically chosen assisting channel.  
The super-additivity can be lifted to quantum capacity for degradable assisting
channels if the spin alignment conjecture holds.  The assisting
channel can have positive or vanishing quantum capacity.  The mechanism behind
this superadditivity is novel and in particular differs from the known
explanation of super-activation \cite{SmithYard08,Oppenheim08}.
For the $d$-dimensional generalization $\MC_d$ of $\NC_{1/2}$, we focus on
super-additivity with a $(d{-}1)$-dimensional erasure channel and recover
results similar to those for $\NC_s$.  In addition, even stronger qualitative
results can be obtained.  First, super-additivity of quantum capacity can be
proved unconditionally (without the spin alignment conjecture).  Second, the
effect holds for all nontrivial values of the erasure probability, for an
appropriately large local dimension $d$.

\subsection{Structure of the paper}

This paper is structured as follows.  In Section~\ref{sec:prelim} we give some
general background on quantum channels, their various capacities, and special
classes of channels.  The main objects of this paper, the channel $\NC_s$ and
its $d$-dimensional generalizations $\MC_d$ and $\OC$, are defined in
Section~\ref{sec:channel-defs}.  The channel coherent information and the
quantum capacity of these channels are discussed in Sections~\ref{sec:CI} and
\ref{sec:qcap}, respectively.  Section~\ref{sec:spinAl} formulates the spin
alignment conjecture relative to which the coherent information of $\NC_s$ is
additive.  We then derive bounds on the quantum capacity of $\NC_s$ and $\MC_d$
in Section~\ref{sec:qcap-bounds}, and determine their private and classical
capacities exactly in Section~\ref{sec:pccap}.
Section~\ref{sec:cap-discussion} summarizes and further discusses the results
on the capacities of $\NC_s$.  Finally, we present the 3-parameter
generalization of $\NC_s$ and a numerical analysis of its capacities in
Section~\ref{sec:3parameter}.
MATLAB and Python code used to obtain the numerical results mentioned above will be made available at \cite{github}.

\section{Preliminaries}
%\section{Definition of channels}
\label{sec:prelim}
\xb 
\outl{Hilbert space, dual, space of linear operators, isometry, channel pair,
Unital, coherent information, channel coherent information, quantum channel
capacity} 
\xa

\subsection{Quantum channels}\label{sec:capacities}

Let $\HC$ be a Hilbert space of finite dimension $d$.  Let $\HC^{\dag}$ be the
dual of $\HC$, and $\hat \HC \cong \HC \ot \HC^{\dag}$ be the space of linear
operators acting on $\HC$.  Let $\HC_a, \HC_b,$ and $\HC_c$ be three Hilbert
spaces of dimensions $d_a, d_b,$ and $d_c$ respectively.  An isometry $E:\HC_a
\mapsto \HC_b \ot \HC_c$, i.e., a map satisfying $E^{\dag} E = I_a$~(the
identity on $\HC_a$), takes an input Hilbert space $\HC_a$ to a subspace of a
pair of output spaces $\HC_b \ot \HC_c$.  This isometry generates a quantum
channel pair, $(\BC, \BC^c)$, i.e., a pair of completely positive trace
preserving~(CPTP) maps, with superoperators,
\begin{align}
    \BC(X) = \Tr_c(EXE^{\dag}), \quad \text{and} \quad
    \BC^c(X) = \Tr_b(EXE^{\dag}),
    \label{eq:chanPair}
\end{align}
that take any element $X \in \hat \HC_a$ to $\hat \HC_b$ and $\hat \HC_c$,
respectively.  Each channel in this pair $(\BC,\BC^c)$ may be called the
complement of the other. 
The isometry $E$ can be written as $E = \sum_i K_i \ot \ket{i}_c =
\sum_j L_j \ot \ket{j}_b$ where $K_i: \HC_a \mapsto \HC_b$ and $L_j : \HC_a
\mapsto \HC_c$ are Kraus operators for the $\BC$ and $\BC^c$ channels
respectively, satisfying $\BC(X_a) = \sum_{i=1}^{d_c} K_i X_a K_i^\dagger$ and $\sum_{i=1}^{d_c} K_i^\dagger K_i = I_a$, and similarly for $\lbrace L_j\rbrace_{j=1}^{d_b}$.

If the input of the isometry $E$ is restricted to a
subspace $\HC_{\bar a}$ of $\HC_a$, then such a restricted map is still an
isometry on $\HC_{\bar a}$ and defines a pair of channels $(\bar \BC, \bar
\BC^c)$, where each channel $\bar \BC$ and $\bar \BC^c$ is called a {\em
sub-channel} of $\BC$ and $\BC^c$, respectively.  When focussing on some quantum
channel $\BC$, it is common to denote $\HC_a,\HC_b,$ and $\HC_c$ as the channel
input, output, and environment respectively.  Any CPTP map~(together with its
complement) may be written as~\eqref{eq:chanPair} in terms of a suitable
isometry $E$.  
Another representation of a CPTP map comes from its Choi-Jamio\l{}kowski
operator.  To define this operator, consider a linear map $\BC \colon \hat
\HC_a \mapsto \hat \HC_b$ and a maximally entangled state,
\begin{align}
    \ket{\phi} = \frac{1}{\sqrt{d_a}} \sum_i \ket{i}_{a} \ot \ket{i}_{a},
    \label{eq:maxEnt}
\end{align}
on $\HC_{a} \ot \HC_a$. The {\em unnormalized} Choi-Jamio\l{}kowski operator
of $\BC$ is
\begin{align}
    J^{\BC}_{ab} = d_a(\IC_{a} \ot \BC)  (\dya{\phi}),
    \label{eq:cjOp}
\end{align}
where $\IC_a$ denotes the identity map acting on $\hat \HC_a$.  The linear map
$\BC$ is CPTP if and only if the above operator is positive semidefinite and
its partial trace over $\HC_b$ is the identity $I_{a}$ on $\HC_{a}$.

\subsection{Quantum capacity}

The quantum capacity $\QC(\BC)$ of a quantum channel
$\BC\colon\hat{\HC_a}\to\hat{\HC_b}$ is defined as the largest rate at which
quantum information can be sent faithfully through the channel.  It can be
expressed in terms of an entropic quantity as follows.  Let $\rho_a$ denote a
density operator~(unit trace positive semi-definite operator) on $\HC_a$ and
for any $\rho_a$ let $\rho_b \coloneqq \BC(\rho_a)$ and $\rho_c \coloneqq
\BC^c(\rho_a)$.  The coherent information~(or entropy bias) of a channel $\BC$
at a density operator $\rho_a$ is
\begin{align}
    \Dl(\BC, \rho_a) = S(\rho_b) - S(\rho_c),
    \label{entBias}
\end{align}
where $S(\rho) = -\Tr(\rho \log \rho)$~(we use $\log$ base $2$ by default) is
the von-Neumann entropy of $\rho$. The channel coherent information~(sometimes
called the {\em single-letter} coherent information),
\begin{align}
    \QC^{(1)}(\BC) = \max_{\rho_a} \Dl(\BC, \rho_a),
    \label{eq:q1Deg}
\end{align}
represents an achievable rate for sending quantum information across the
channel $\BC$, and hence $\QC(\BC)\geq
\QC^{(1)}(\BC)$~\cite{Lloyd97,Shor02a,Devetak05}. The maximum achievable rate
is equal to the quantum capacity of $\BC$, and given by a {\em multi-letter}
formula~(sometimes called a regularized
expression)~\cite{BarnumNielsenEA98,BarnumKnillEA00,Lloyd97,Shor02a,Devetak05},
\begin{align}
    \QC(\BC) = \sup_{n \in \mathbb{N}} \frac{1}{n} \QC^{(1)}(\BC^{\ot n}),
    \label{eq:chanCap}
\end{align}
where $\BC^{\ot n}$ represent $n \in \mathbb{N}$ parallel~(sometimes called
joint) uses of $\BC$.  The regularization in~\eqref{eq:chanCap} is necessary because the
channel coherent information is {\em super-additive}, i.e., for any two quantum
channels $\BC$ and $\BC'$ used together, the channel coherent information of
the joint channel $\BC \ot \BC'$ satisfies an inequality,
\begin{align}
    \QC^{(1)}(\BC \ot \BC') \geq \QC^{(1)}(\BC) + \QC^{(1)}(\BC'),
    \label{eq:nonAddCI}
\end{align}
which can be strict~\cite{DiVincenzoShorEA98, FernWhaley08,SmithSmolin07,
SmithYard08, LeditzkyLeungEA18, BauschLeditzky19, BauschLeditzky20,
SiddhuGriffiths20, Siddhu20, Siddhu21}. 
The coherent information $\QC^{(1)}(\BC)$ is said to be {\em weakly additive}
if equality holds in~\eqref{eq:nonAddCI} whenever $\BC'$ is a tensor power of
$\BC$. If this equality holds for arbitrary $\BC'$ then $\QC^{(1)}(\BC)$ is
said to be {\em strongly additive}.

\subsection{Private capacity}

The private capacity $\PC(\BC)$ of a quantum channel
$\BC\colon\hat{\HC_a}\to\hat{\HC_b}$ is operationally defined as the largest
rate at which classical information can be faithfully sent through the channel
in such a way that the environment, $\HC_c$, gains no meaningful knowledge
about the information being sent.  A formula for the private capacity was
derived by Cai et al.~\cite{CaiWinterEA04,Devetak05} in terms of a quantity called the
channel private information:
\begin{align}
    \PC^{(1)}(\BC) = \max_{\lbrace p_x,\rho_a^x\rbrace} \left[ \Dl(\BC,\bar{\rho}_a) - \sum\nolimits_x p_x \Dl(\BC,\rho_a^x) \right].
    \label{eq:private-information}
\end{align}
Here, $\Dl(\BC,\rho)$ denotes the coherent information of the channel $\BC$
with respect to the state $\rho$ as defined in \eqref{entBias}, the
maximization is over all quantum state ensembles $\lbrace
p_x,\rho_a^x\rbrace_x$, and $\bar{\rho}_a = \sum_x p_x \rho_a^x$ denotes the
ensemble average of the density operators $\{\rho_a^x\}$ over the probability
distribution $\{p_x\}$~($p_x>0$ and $\sum_x p_x = 1$),
Restricting the maximization in~\eqref{eq:private-information} to an ensemble
of pure states $\rho_a^x$ makes $\Dl(\BC, \rho_a^x)=0$,
reducing~\eqref{eq:private-information} to \eqref{eq:q1Deg}, and resulting in a
maximum value which simply equals the channel coherent information
$\QC^{(1)}(\BC)$. This value is at most $\PC^{(1)}(\BC)$, i.e., 
\begin{align}
   \QC^{(1)}(\BC) \leq \PC^{(1)}(\BC).
   \label{eq:q1LeqP1}
\end{align}
There are channels for which this inequality is strict,
thus for such channels $\PC^{(1)}(\BC)$ cannot be obtained using an ensemble of
pure states alone.

A channel's private information can also be written as 
\begin{align}
    \PC^{(1)}(\BC) = \max_{\lbrace p_x,\rho_a^x\rbrace} \left[ I(\mathsf{x};b)_\sigma - I(\mathsf{x};c)_\sigma\right],
    \label{eq:pi-2}
\end{align}
where the mutual information at a quantum state $\sg$, 
$I(a;b)_{\sg} \coloneqq S(\sg_a) + S(\sg_b) - S(\sg_{ab})$, is evaluated above at
\begin{align} 
    \sigma_{\mathsf{x}be} = (\one_{\mathsf{x}}\otimes J) 
    \left(\sum\nolimits_x p_x [x]_{\mathsf{x}}\otimes \rho_a^x\right) 
    (\one_{\mathsf{x}}\otimes J)^\dagger,
    \label{eq:sigma-xbe}
\end{align}
with the label $\mathsf{x}$ denoting a classical register, $J$ denoting the
$\BC$ channel isometry~\eqref{eq:chanPair}, and the notation
\begin{align}
    [\psi] = \dya{\psi},
    \label{eq:notation}
\end{align}
represents a dyad onto any ket $\ket{\psi}$.
Cai et al.~\cite{CaiWinterEA04} and Devetak \cite{Devetak05} proved that the
private information $\PC^{(1)}(\BC)$ is an achievable rate for private
information transmission, $\PC(\BC)\geq \PC^{(1)}(\BC)$, and furthermore the
private capacity is bounded from above by the regularized private information.
As a result, we have the following coding theorem for the private capacity
\cite{CaiWinterEA04,Devetak05}:
\begin{align}
    \PC(\BC) = \sup_{n\in\mathbb{N}} \frac{1}{n} \PC^{(1)}(\BC^{\otimes n}).
	\label{eq:Pcap}
\end{align}
Except for a few special cases such as degradable
channels~\cite{DevetakShor05,Smith08}~(for definition see
Sec.~\ref{sec:spChanCl}) the regularization in~\eqref{eq:Pcap} is required. 
This requirement arises from super-additivity of $\PC^{(1)}$
~\cite{SmithRenesEA08}, i.e., $\PC^{(1)}$ satisfies an inequality of the form
in~\eqref{eq:nonAddCI}.
From~\eqref{eq:q1LeqP1} it follows that each term, $\QC^{(1)}(\BC^{\ot n})$, in the
limit~\eqref{eq:chanCap} is at most $\PC^{(1)}(\BC^{\ot n})$, and thus a channel's
quantum capacity $\QC(\BC)$ is at most its private classical capacity
$\PC(\BC)$~\eqref{eq:Pcap},
\begin{align}
    \QC(\BC) \leq \PC(\BC).
	\label{eq:QC-PC-relation}
\end{align}

\subsection{Classical capacity}

The classical capacity $\CC(\BC)$ of a quantum channel
$\BC\colon\hat{\HC_a}\to\hat{\HC_b}$ is defined as the largest rate at which
classical information can be faithfully sent through the channel.  In contrast
to private information transmission discussed above, there is no security
criterion involving the environment, and hence 
\begin{align}
    \PC(\BC) \leq \CC(\BC).
	\label{eq:PC-relation}
\end{align}
The classical capacity can be expressed in terms of an
information quantity called the Holevo information $\chi(\BC)$:
\begin{align}
    \chi(\BC) = \max_{\lbrace p_x,\rho_a^x\rbrace} \left[ 
    S(\bar{\rho}_b) - \sum\nolimits_x p_x S(\rho_b^x)
    \right] = \max_{\lbrace p_x,\rho_a^x\rbrace} I(\mathsf{x};b)_\sigma,
    \label{eq:chi}
\end{align}
where $\bar{\rho}_a = \sum_x p_x \rho_a^x$ and $\sigma_{\mathsf{x}be}$ is
defined in terms of the quantum state ensemble $\lbrace p_x,\rho_a^x\rbrace_x$
as in \eqref{eq:sigma-xbe} above. Quantum states in the
maximization~\eqref{eq:chi} of $\chi(\BC)$ can be chosen to be
pure~\cite{Wilde16}.
The classical capacity $\CC(\BC)$ of a quantum channel $\BC$ can be expressed
as~\cite{Holevo98,SchumacherWestmoreland97},
\begin{align}
    \CC(\BC) = \sup_{n\in\mathbb{N}} \frac{1}{n} \chi(\BC^{\otimes n}).
    \label{eq:Ccap}
\end{align}
Again, apart from a few special classes of channels the regularization of the
Holevo information in \eqref{eq:Ccap} is necessary due to super-additivity of
$\chi(\BC)$~\cite{Hastings09,Shor04}.
Comparing~\eqref{eq:chi} with~\eqref{eq:pi-2} reveals $\PC^{(1)}(\BC) \leq
\chi^{(1)}(\PC)$, as a result, each term, $\PC^{(1)}(\BC^{\ot n})$, in the
limit~\eqref{eq:Pcap} is at most $\chi^{(1)}(\PC^{\ot n})$, and thus a channel's
private channel capacity $\PC(\BC)$ is at most its classical capacity
$\CC(\BC)$~\eqref{eq:PC-relation}.

\subsection{Entanglement-assisted classical capacity}

Finally, we discuss the entanglement-assisted classical capacity $\CC_{E}(\BC)$
of a quantum channel $\BC$, which is defined as the optimal rate of faithful
classical information transmission when the sender and receiver have access to
unlimited entanglement assisting the encoding and decoding process.  It can be
expressed in terms of the channel mutual information $I(\BC)$, defined as
\begin{align}
    I(\BC) = \max_{\rho_a} I(a';b)_\sigma = \max_{\rho_{a}} \left[S(\sg_{a'}) + S(\sigma_b) - S(\sigma_{a'b}) \right],
    \label{eq:entasscap} 
\end{align}
with $\sigma_{a'b} = (\IC_{a'}\otimes \BC)(\psi_{a'a})$ and
$|\psi\rangle_{a'a}$ an (arbitrary) purification of the input state $\rho_a$.
Note that $I(a';b)$ is concave in the input state $\rho_a$ and can therefore be
computed efficiently~\cite{Wilde16,FawziFawzi18}.  The entanglement-assisted
classical capacity of a quantum channel is equal to its channel mutual
information~\cite{BennettShorEA02}: 
\begin{align}
    \CC_{E}(\BC) = I(\BC) \,,
\end{align}
and is an upper bound of $\CC(\BC)$ by definition.

\subsection{Special channel classes}
\label{sec:spChanCl}
\xb 
\outl{Unital, Non-additivity, Less noisy, more capable, additivity, Degradable,
anti-degradable, additivity. PPT, Hadamard, Entanglement Breaking}
\xa

If a channel maps the identity element at its input to the identity at its
output, then the channel is called a {\em unital channel}. 
A channel $\BC$ is called {\em degradable}, and its complement $\BC^c$ {\em
anti-degradable}, if there is another channel $\DC$ such that $\DC \circ \BC =
\BC^c$~\cite{DevetakShor05,CubittRuskaiEA08}. Sometimes this channel $\DC$ is
called the {\em degrading map} of the degradable channel $\BC$. For any two
channels $\BC'$ and $\BC$, each either degradable or anti-degradable, the joint
channel $\BC \ot \BC'$ has additive coherent information, i.e., equality holds
in~\eqref{eq:nonAddCI}~\cite{DevetakShor05, LeditzkyDattaEA18}.  For a
degradable channel $\BC$, the coherent information $\Dl(\BC, \rho_a)$ is
concave in $\rho_a$~\cite{YardHaydenEA08}, and thus $\QC^{(1)}(\BC)$ can be
computed with relative ease~\cite{FawziFawzi18, RamakrishnanItenEA21}. As a
result the quantum capacity of a degradable channel, which simply equals
$\QC^{(1)}(\BC)$, can also be computed efficiently.  An anti-degradable channel has no
quantum capacity due to the no-cloning theorem.  An instance of an anti-degradable channel is a {\em
measure-and-prepare} or {\em entanglement-breaking}~(EB)
channel~\cite{HorodeckiShorEA03}.  An EB channel is one whose
Choi-Jamio\l{}kowski operator~\eqref{eq:cjOp} is
separable~\cite{HorodeckiShorEA03}.  The complement of an EB channel is called
a {\em Hadamard channel} \cite{Wilde16}.
Besides anti-degradable channels, the only other known class of zero-quantum-capacity
channels are {\em entanglement binding} or {\em positive under
partial-transpose}~(PPT) channels~\cite{HorodeckiHorodeckiEA00}.  A channel is
PPT if its Choi-Jamio\l{}kowski operator~\eqref{eq:cjOp} is positive under
partial transpose.  If a channel $\BC^c$ has zero quantum capacity then its
complement $\BC$ is called a {\em more capable} channel. A more capable channel
has equal quantum and private capacity. If a more capable channel $\BC$ has a
complement $\BC^c$ with zero private capacity, then $\BC$ has additive coherent
information in the sense that equality holds in eq.~\eqref{eq:nonAddCI} with
$\BC'=\BC$~\cite{Watanabe12}.

\section{Definition of channels}\label{sec:channel-defs}
\subsection{The $\NC_s$ channel}
\label{sec:NsChan}
\xb
\outl{Channel isometry $F$, $d_a = d_b = 3$, $d_c = 2$, Schmidt coefficient
$s \leq 1-s$, Channel pair $\NC_s, \NC_s^c$, Two key properties: degradable
sub-channels $\NC_{si}$ and perfect sub-channel to environment~(proofs at end
of Sec.), intuitive picture, outside the class of (anti)~degradable~(proof
follows from property), Unital~(simple cal), not EB/PPT~(proof $\NC_s$ later,
$\QC^{(1)}>0$), Hadamard~(EB argument)}
\xa

Let $\HC_a, \HC_b,$ and $\HC_c$ have dimensions $d_a = d_b = 3$, and $d_c=2$.
Consider an isometry $F_s\colon \HC_a \mapsto \HC_b \ot \HC_c$ of the form,
\begin{align}
    F_s \ket{0} &= \sqrt{s} \; \ket{0} \ot \ket{0} + \sqrt{1-s} \; \ket{1} \ot \ket{1},  \nonumber \\
    F_s \ket{1} &= \ket{2} \ot \ket{0}, \nonumber \\
    F_s \ket{2} &= \ket{2} \ot \ket{1},
    \label{isoDef1}
\end{align}
where $0 \leq s \leq 1/2$. This isometry was introduced previously by one of us
in~\cite{Siddhu21} with $\ket{1}$ and $\ket{2}$ in $\HC_a$ exchanged.
Its Kraus operators are unitarily equivalent to those of a channel $\LC_\alpha$ introduced
even earlier in~\cite{WangDuan18} and studied further in~\cite{WangXieEA17}.
The isometry can be written as $F_s = \sum_i K_i \ot \ket{i}$ where Kraus
operators $K_i: \HC_a \mapsto \HC_b$ match those in Sec.~IV.C
of~\cite{WangXieEA17} if one permutes the computational basis of $\HC_a$ as $\ket{0}_a \mapsto
\ket{2}_a \mapsto \ket{1}_a \mapsto \ket{0}_a$, exchanges $\ket{1}_b$ and
$\ket{2}_b$, and rewrites $s$ as $\sin ^2 \al$, $0 \leq \al \leq \pi/4$.
Through an equation of the form~\eqref{eq:chanPair} the
isometry~\eqref{isoDef1} gives rise to a complementary pair of channels
$\NC_s\colon \hat \HC_a \mapsto \hat \HC_b$ and $\NC^c_s \colon \hat \HC_a
\mapsto \hat \HC_c$. This channel pair has two simple properties.  

The first property, proved in Sec.~\ref{sec:proofsChan}, is the existence of
degradable sub-channels of $\NC^c_s$, obtained by restricting the channel input
to operators on a qubit subspace,
\begin{align}
    \HC_{ai} = \rm span \{ \ket{0}, \ket{i} \},
    \label{aiSub}
\end{align}
where $i$ is either $1$ or $2$. This restriction also results in an
anti-degradable sub-channel of $\NC_s^c$.
Quantum states lying solely in this qubit input sub-space can be used to send
an equal amount of quantum and private information to the $\NC_s$ output
$\HC_b$ but such states cannot be used to send any quantum or private
information to the $\NC_s^c$ output $\HC_c$.

The second channel property, also proved in Sec.~\ref{sec:proofsChan}, is the presence
of a perfect sub-channel of $\NC_s^c$ obtained by restricting the channel input
to operators on a qubit subspace $\HC_{a'}$ spanned by $\{ \ket{1}, \ket{2}
\}$. 
Quantum states lying solely in this qubit input subspace $\HC_{a'}$ can be used
to perfectly send information to the $\NC_s^c$ output $\HC_c$ while sending no
information to the $\NC_s$ output $\HC_b$. This perfect transmission of a qubit
to $\HC_c$ implies that the quantum, private, and classical capacities of
$\NC_s^c$ are at least $1$.  Since the dimension of the channel output $\HC_c$
is two, all these capacities of $\NC_s^c$ are at most $1$, thus
\begin{align}
    \QC(\NC_s^c)  = \PC(\NC_s^c) = \CC(\NC_s^c) = 1.
    \label{eq:NsCompCap}
\end{align}

Together, the intuitive picture above along with the two simple properties of
$\NC_s$ and $\NC_s^c$ help classify each of these channels.  For instance, one
can easily infer that each channel in the $(\NC_s, \NC_s^c)$ pair is neither
degradable nor anti-degradable.
If $\NC_s$ was degradable, then all its sub-channels would also be degradable.
However the sub-channel of $\NC_s$ obtained by restricting its input to
$\HC_{a'}$ is anti-degradable. This anti-degradable sub-channel is the
complement of a perfect~(and hence degradable) sub-channel to $\HC_c$ obtained
by restricting the channel input to $\HC_{a'}$. In similar vein, $\NC_s^c$ is
not degradable. Consider a sub-channel of $\NC_s^c$ obtained by restricting its
input to $\HC_{ai}$.  This sub-channel is anti-degradable since its complement,
a sub-channel of $\NC_s$ with input $\HC_{ai}$, is degradable.

Since each channel in the $(\NC_s, \NC_s^c)$ pair is neither degradable nor
anti-degradable, both channels are not EB since EB channels are anti-degradable.
A Hadamard channel is the complement of an EB channel, and since each channel
in the complementary pair $(\NC_s, \NC_s^c)$ is not EB, both the channels are not
Hadamard channels.
Another class of interesting channel are more capable channels. As mentioned in
Sec.~\ref{sec:spChanCl}, a more capable channel has a complement with no
quantum capacity. Both $\NC_s$ and $\NC_s^c$ don't belong to this more capable
channel class because these complementary channels both have non-zero quantum
capacity. An argument showing that the quantum capacity of $\NC_s$ is non-zero
was presented previously. An argument showing that the quantum capacity of
$\NC_s$ is non-zero is given in Sec.~\ref{sec:CI} and also in~\cite{Siddhu21}.
Since both $\NC_s$ and $\NC_s^c$ have non-zero quantum capacity, these channels
are not PPT which have zero quantum capacity.  Finally, it is easy to verify
that each channel in the $(\NC_s, \NC_s^c)$ pair is not unital.

\newcommand{\chA}{\RC_1}
\newcommand{\chB}{\RC_2^s}
\newcommand{\isomA}{R_1}
\newcommand{\isomB}{R_2^s}

The $\NC_s$ channel can also be viewed as a hybrid of two simple qubit input channels.
The first channel, $\chA$, perfectly maps its  input, $\HC_1$, to
the environment $\HC_c$. The second channel, $\chB$ for $s\in[0,1/2]$, is a degradable channel.
To define the first channel, $\chA\colon \hat \HC_1 \mapsto \hat \HC_b$, 
we use $\{\ket{1}, \ket{2}\}$ to label an orthonormal basis of $\HC_1$.
Then an isometry $\isomA\colon \HC_1 \mapsto \HC_b \ot \HC_c$ of the form
\begin{align}
    \begin{aligned}
        \isomA \ket{1} = \ket{2} \ot \ket{0}, \quad \text{and} \quad
        \isomA \ket{2} = \ket{2} \ot \ket{1},
    \end{aligned}
    \label{eq:F2Iso}
\end{align}
defines $\chA(X) = \Tr_{c}(\isomA X \isomA^{\dag}) = \Tr(X) \dya{2}$. Clearly
$\chA$ traces out its input to a fixed pure state while sending everything to
the environment. Thus $\chA$ cannot send any information to its output,
$\QC(\chA) = \PC(\chA) = \CC(\chA) = 0$.
To define the second channel $\chB$, we use $\{ \ket{0}, \ket{2} \}$ to label
an orthonormal basis of $\HC_2$.
Then $\chB\colon \hat \HC_{2} \mapsto \hat
\HC_{b}$ is generated by an isometry, $\isomB\colon \HC_{2} \mapsto \HC_{b} \ot
\HC_{c}$, of the form
\begin{align}
    \begin{aligned}
        \isomB \ket{0} = \sqrt{s} \; \ket{0} \ot \ket{0} + \sqrt{1-s} \; \ket{1} \ot \ket{1}, 
        \quad \text{and} \quad
        \isomB \ket{2} = \ket{2} \ot \ket{1},
    \end{aligned}
    \label{eq:F1Iso}
\end{align}
where $0 \leq s \leq 1/2$.
One can show that $\chB$ is degradable and thus $\QC^{(1)}(\chB) = \QC(\chB) =
\PC(\chB)$. The channel's classical capacity is one bit. This can be shown by
noticing that a bit encoded into inputs $\ket{0}$ and $\ket{2}$ leads to
orthogonal outputs achieving a rate of $1$, which saturates the maximal entropy
determined by the dimensional bound of the input of $\chB$. 

Notice $\HC_a$ is the union of $\HC_1$ and $\HC_2$ and the isometry $F_s :
\HC_a \mapsto \HC_b \ot \HC_c$, which gives rise to $\NC_s$, can be written as
a hybrid,
\begin{align}
    \begin{aligned}
        F_s \ket{0} &= \isomB \ket{0}, \quad
        F_s \ket{1} &= \isomA \ket{1}, \quad \text{and} \quad
        F_s \ket{2} &= \isomA \ket{2} = \isomB \ket{2},
    \end{aligned}
    \label{eq:hybrid}
\end{align}
of isometries $\isomA$ and $\isomB$ that give rise to $\chA$ and $\chB$, respectively.

\subsection{The $\MC_d$ channel}
\label{sec:MdChan}
\xb
\outl{Channel isometry $G$, $d_a = d_b = d$, $d_c = d-1$, equal Schmidt
coefficient $\frac{1}{\sqrt{d-1}}$, Channel pair $\MC_d, \MC_d^c$, repeat two
key properties: degradable sub-channels $\MC_{di}$ and perfect sub-channel to
environment~(proofs at end of Sec.), intuitive picture, outside the class of
(anti)~degradable~(proof follows from property), Unital~(simple cal), 
EB/PPT~(proof $\NC_s$ later, $\QC^{(1)}>0$), and Hadamard~(EB argument) channels}
\xa

We now consider a higher dimensional generalization of the isometry
in~\eqref{isoDef1} with $s=1/2$. This generalization, $G:\HC_a
\mapsto \HC_b \ot \HC_c$ with $G^{\dag} G = I_a$, operates on Hilbert spaces with dimensions $d_a = d_b
= d$, $d_c = d-1$, and $G$ has the form,
\begin{align}
    G \ket{0} = \frac{1}{\sqrt{d-1}}\sum_{j=0}^{d-2} \ket{j} \ot \ket{j}, \quad 
    G \ket{i} = \ket{d-1} \ot \ket{i-1},
    \label{isoDef2}
\end{align}
where $1 \leq i \leq d-1$ and $d \geq 3$. When $d=3$, the isometry above is
equivalent to $F_s$ in~\eqref{isoDef1} at $s=1/2$ and defines a channel pair
$(\NC_{1/2}, \NC_{1/2}^c)$.  In general $d \geq 3$ and the
isometry~\eqref{isoDef2} defines a pair of channels denoted by $\MC_d \colon
\hat \HC_a \mapsto \hat \HC_b$ and $\MC_d^c : \hat \HC_a \mapsto \hat \HC_c$. 
Let $U$ be any unitary on a $(d-1)$ dimensional Hilbert space.  Using $U$
and the standard basis on $\HC_a, \HC_b$, and $\HC_c$ we define unitary
operators $U_a = [0] \oplus U, U_b = U \oplus [d-1],$ and $U_c = U^*$ on
$\HC_a, \HC_b$, and $\HC_c$ respectively, here $*$ denotes complex conjugation
in the standard basis of $\HC_c$. The isometry $G$~\eqref{isoDef2} is
symmetrical in the sense that
\begin{align}
    G U_a = (U_b \ot U_c) G.
    \label{eq:symG}
\end{align}
Due to this symmetry, both channels, $\MC_d$ and $\MC_d^c$, have the property that
\begin{align}
    \MC_d (U_a \rho U_a^{\dag}) = U_b \MC_d (\rho) U_b^{\dag},  
    \quad
    \MC_d^c (U_a \rho U_a^{\dag}) = U_c \MC_d (\rho) U_c^{\dag},
    \label{eq:symMd}
\end{align}
where $\rho$ is any density operator on $\HC_a$.

The channels $\MC_d$ and $\MC_d^c$ have properties similar to those of $\NC_s$
and $\NC_s^c$, respectively. Like $\NC_s$, the channel $\MC_d$ has several degradable
sub-channels. Later in Sec.~\ref{sec:proofsChan} we show that a degradable
sub-channel of $\MC_d$ is obtained by restricting its input to operators on a
qubit subspace $\HC_{ai}$~\eqref{aiSub} or by restricting the input to a qubit
subspace obtained by applying $U_a$, defined above in \eqref{eq:symG}, to each
state in $\HC_{ai}$, where $i$ is some fixed number between $1$ and $d-1$.
Quantum states lying solely at the input of such a degradable qubit sub-channel
can be used to send an equal amount of quantum and private information to the
$\MC_d$ channel output $\HC_b$ but such states cannot be used to send any
quantum or private information to the $\MC_d^c$ output $\HC_c$.

Like $\NC_s^c$, the channel $\MC_d^c$ has a perfect sub-channel to its output $\HC_c$.  In
Sec.~\ref{sec:proofsChan}, we show this perfect sub-channel is obtained by
restricting the channel input to $\HC_{a'}$, a $(d-1)$-dimensional subspace of
$\HC_a$ spanned by $\{\ket{i}\}_{i=1}^{d-1}$. 
Quantum states lying solely in the channel input subspace $\HC_{a'}$ can be
used to perfectly send information to the $\MC_d^c$ output $\HC_c$, while
sending no information to the $\MC_d$ output $\HC_b$. 
This perfect transmission, along with arguments similar to those
above~\eqref{eq:NsCompCap}, can be used to show that the capacities
of $\MC_d^c$ satisfy
\begin{align}
    \QC(\MC_d^c)  = \PC(\MC_d^c) = \CC(\MC_d^c) = \log (d-1).
    \label{eq:MdCompCap}
\end{align}

A variety of channel classes were discussed in Sec.~\ref{sec:spChanCl}.  These
classes include degradable channels, anti-degradable channels, EB channels,
Hadamard channels, and less noisy channels. If a sub-channel of a channel does
not belong to any of these classes, then the channel itself does not belong to
that same class. Notice $\NC_{1/2}$ is a sub-channel of $\MC_d$ arising from a
restriction of the $\MC_d$ channel input to operators on a qutrit
subspace spanned by $\{\ket{0},\ket{1}, \ket{2}\}$. 
In Sec.~\ref{sec:NsChan}, we mentioned that $\NC_{1/2}$ is not degradable,
anti-degradable, EB, Hadamard, or less noisy.
As a result $\MC_d$ is not (anti)-degradable, EB, Hadamard, or less noisy. One
can easily verify that $\MC_d$, like $\NC_{1/2}$, is not a unital channel or a
PPT channel. This verification can be done by a direct computation. In the case
of ruling out PPT behaviour one can also use the fact that PPT channels have
zero quantum capacity, however $\NC_{1/2}$, a sub-channel of $\MC_d$ has
non-zero quantum capacity and thus $\MC_d$ also has non-zero quantum capacity.
An argument similar to the one above can be used to show that, similar to $\NC_s^c$, the channel $\MC_d^c$ is not (anti)-degradable, EB, Hadamard, less noisy, unital or PPT.

\subsection{The general $\OC$ channel}
\label{sec:glnChan}
\xb
\outl{Channel isometry $H$, $d_a = d_b = d$, $d_c = d-1$, unequal complex
Schmidt coefficients. Schmidt real and ordered. $F,G$ as special cases $H$,
Channel pair $\OC, \OC^c$. Repeat two key properties: degradable sub-channels
$\OC_{i}$ and perfect sub-channel to environment~(proofs at end of Sec.),
intuitive picture, outside the class of (anti)~degradable~(proof follows from
property), Unital~(simple cal), not EB/PPT~(proof $\NC_s$ later,
$\QC^{(1)}>0$), Hadamard~(EB argument)}
\xa

Both isometries in~\eqref{isoDef1} and~\eqref{isoDef2} can be viewed as special
cases of an isometry, $H \colon \HC_a \mapsto \HC_b \ot \HC_c$, where $d_a = d_b =
d$, $d_c = d-1$,
\begin{align}
    H \ket{0} = \sum_{j=0}^{d-2} \mu_j \ket{j} \ot \ket{j}, \quad 
    H \ket{i} = \ket{d-1} \ot \ket{i-1},
    \label{isoDef3}
\end{align}
$1 \leq i \leq d-1$, $\mu_i \in \Cbb$, and $\sum_i |\mu_i|^2 = 1$. Note that 
complex phases in $\mu_i$ can be absorbed in the definition of the standard
basis of $\HC_b$, so we may assume that $\mu_i$ are real and non-negative.  An
exchange of $\mu_i$ and $\mu_j$~($0 \leq i \neq j \leq d-2$) is equivalent to
an exchange of $\ket{i+1}$ and $\ket{j+1}$ in $\HC_a$ and an exchange of
$\ket{i}$ and $\ket{j}$ in both $\HC_b$ and $\HC_c$. All these exchanges can be
achieved by performing local unitaries on $\HC_a$, $\HC_b$, and $\HC_c$
respectively. As a result, we restrict our attention to $\{\mu_i\}$ arranged in
ascending order, i.e., $\mu_0 \leq \mu_1 \dots \leq \mu_{d-2}$.
In~\eqref{isoDef3}, the special case obained by setting $d = 3$, $\mu_0 =
\sqrt{s}$ and $\mu_1 = \sqrt{1-s}$ gives~\eqref{isoDef1}. On the other hand,
when $d$ is more general and all $\mu_i$ take equal values $1/\sqrt{d-1}$, we
obtain~\eqref{isoDef2}. 

The isometry $H$ generates a pair of channels $(\OC, \OC^c)$.  Like $\NC_s$ in
Sec.~\ref{sec:NsChan} and $\MC_d$ in Sec.~\ref{sec:MdChan}, the channel pair
$\OC$ has several degradable sub-channels.  Restricting the input of $\OC$ to
operators on $\HC_{ai}$~\eqref{aiSub}, $i$ is any fixed number between 1 to
d-1, results in a degradable sub-channel.  The channel complement $\OC^c$, like
$\NC_s^c$ in Sec.~\ref{sec:NsChan} and $\MC_d^c$ in Sec.~\ref{sec:MdChan}, has
a perfect sub-channel. This sub-channel has input dimension $d-1$ and is
obtained by restricting the $\OC^c$ channel input to operators on $\HC_{a'} =
\rm span \{\ket{i}\}_{i=1}^{d-1}$.  Proofs for both these properties are in
Sec.~\ref{sec:proofsChan}.

As mentioned earlier in Sec.~\ref{sec:NsChan} and Sec.~\ref{sec:MdChan},
properties of the type mentioned above imply that quantum states lying in the
qubit subspace $\HC_{ai}$ at the channel input can be used to send quantum and
private information to the $\OC$ channel output but not the $\OC^c$ output.
Additionally, quantum states lying solely in the $(d-1)$-dimensional $\HC_{a'}$
input subspace perfectly send all information to the $\OC^c$ output $\HC_c$.
Arguments similar to those above~\eqref{eq:NsCompCap} can be used to show that
\begin{align}
    \QC(\OC^c)  = \PC(\OC^c) = \CC(\OC^c) = \log (d-1).
    \label{eq:OcCap}
\end{align}

From~\eqref{eq:OcCap}, it follows that $\OC^c$ has non-zero quantum capacity
and thus $\OC^c$ does not belong to the class of less noisy channels. Using a
$\log$-singularity based argument~(see Sec.~\ref{sec:proofsChan}), one can show that
$\QC^{(1)}(\OC)>0$.  As a result, $\OC$ is also outside the class of less noisy
channels.
Using arguments similar to those used for $(\NC_s, \NC_s^c)$ in
Sec.~\ref{sec:NsChan} and for $(\MC_d, \MC_d^c)$ in Sec.~\ref{sec:MdChan}, one
can show that both $\OC$ and $\OC^c$ are not degradable, anti-degradable, EB,
Hadamard, unital, or PPT channels.

\subsection{Proofs of channel properties}
\label{sec:proofsChan}
\xb
\outl{Properties of $(\OC, \OC^c)$ imply properties of $(\NC_s, \NC_s^c)$
and $(\MC_d, \MC_d^c)$. Degradable sub-channels of $\OC$. Perfect sub-channel
of $\OC^c$}
\xa

\xb 
\outl{Degradable sub-channel claims about $\NC_s, \MC_d$ and $\OC$ follow from 
those of $\OC$. Proof of claim for $\OC$}
\xa

In Sec.~\ref{sec:NsChan}, below eq.~\eqref{isoDef1}, we claimed that $\NC_s$ has
degradable sub-channels obtained by restricting the channel input to $\HC_{ai}$
in~\eqref{aiSub}.
We made a similar claim about $\MC_d$, a $d$-dimensional generalization of
$\NC_{1/2}$ defined in Sec.~\ref{sec:MdChan}, and about $\OC$, a
$d$-dimensional generalization of both $\NC_s$ and $\MC_d$ defined in
Sec.~\ref{sec:glnChan}.
We prove all these claims here.  Instead of giving separate proofs for claims
about each channel $\NC_s,\MC_d$, and $\OC$, we give a single proof for the
$\OC$ channel. From this single proof, correctness of claims for $\NC_s$ and
$\MC_d$ follow as both $\NC_s$ and $\MC_d$ are special cases of $\OC$~(see
discussion in Sec.~\ref{sec:glnChan} below \eqref{isoDef3}).
Using this single proof, along with the definition of degradability and the
symmetry in eq.~\eqref{eq:symMd}, one can easily show an additional claim in
Sec.~\ref{sec:MdChan}: restricting the input of $\MC_d$ to a qubit sub-space
obtained by applying $U_{a}$, defined above eq.~\eqref{eq:symG}, to each state
in $\HC_{ai}$, results in a degradable sub-channel of $\MC_d$.

Let $\OC_i$ be a sub-channel of $\OC$ obtained by restricting the channel input
to a two-dimensional sub-space $\HC_{ai}$~(see eq.~\eqref{aiSub})
of $\HC_a$, where $i$ is a fixed integer between $1$ and $d-1$. This
sub-channel, $\OC_i$, is degradable. Prior to constructing a degrading map for
$\OC_i$, consider an isometry $H_i \colon \HC_{ai} \mapsto \HC_b \ot \HC_c$,
\begin{align}
    H_i \ket{0} = H \ket{0}, \quad
    H_i \ket{i} = H \ket{i},
\end{align}
where $H$ is defined in~\eqref{isoDef3}. This isometry $H_i$ generates the
$\OC_i$ sub-channel and its complement $\OC_i^c$. Let $\HC_e$ be a
$d$-dimensional Hilbert space.  For any fixed integer $i$ between $1$ and
$d-1$, consider another isometry $K_i\colon \HC_b \mapsto \HC_c \ot \HC_e$ of the
form
\begin{align}
    K_i \ket{j} = \ket{j} \ot \ket{j}, \quad \text{and} \quad
    K_i \ket{d-1} = \ket{i-1} \ot \ket{d-1},
\end{align}
where $0 \leq j \leq d-2$. The $\FC_i \colon \hat \HC_b \mapsto \hat \HC_c$
channel generated by the isometry $K_i$ satisfies
\begin{align}
    \FC_i \circ \OC_i = \OC_i^c.
    \label{degSub}
\end{align}
The equality above implies that $\FC_i$ is a degrading map for $\OC_i$ and thus
$\OC_i$ is degradable and $\OC_i^c$ anti-degradable.

\xb 
\outl{Perfect sub-channel claims about $\NC_s^c, \MC_d^c$ and $\OC^c$ follow
from those of $\OC^c$. Proof of claim for $\OC^c$}
\xa

In Sec.~\ref{sec:NsChan} we claimed that $\NC_s^c$ has perfect sub-channels
obtained by restricting the channel input to $\HC_{a'}$~(defined below
~\eqref{aiSub}).
A similar claim was made about $\MC_d^c$ in Sec.~\ref{sec:MdChan}, and about
$\OC^c$ in Sec.~\ref{sec:glnChan}.
All these claims are proven here by giving a single proof for the $\OC^c$
channel. From this proof, correctness of claims for $\NC_s^c$ and $\MC_d^c$
follow as both $\NC_s^c$ and $\MC_d^c$ are special cases of $\OC^c$~(see
discussion in Sec.~\ref{sec:glnChan} below \eqref{isoDef3}).
To prove that $\OC^c$ has a $(d-1)$-dimensional sub-channel which perfectly
maps its input to $\HC_c$, consider
\begin{align}
    \HC_{a'} = \Span \{\ket{1}, \ket{2}, \dots, \ket{d-1}\},
    \label{eq:HaPrimSub}
\end{align}
a $(d-1)$-dimensional sub-space of $\HC_a$ with projector $P_{a'} =
\sum_{i=1}^{d-1} [i]$.  Restricting the input of $\OC$ to $\HC_{a'}$ results in a
sub-channel $\OC'\colon \hat \HC_{a'} \mapsto \hat \HC_{c}$ with super-operator
\begin{align}
    \OC'(A) = VAV^{\dag},
    \label{subChanDef}
\end{align} 
where $V\colon \HC_{a'} \mapsto \HC_c$ is a bijection of the form
\begin{align}
    V\ket{i} = \ket{i-1}, \quad 1 \leq i \leq d-1,
    \label{vMap}
\end{align}
satisfying
\begin{align}
    VV^{\dag} = P_{a'}, \quad \text{and} \quad V^{\dag}V = I_c.
\end{align}
Since $V$ is a bijection, the sub-channel $\OC'$ in~\eqref{subChanDef}
perfectly maps its input space $\HC_{a'}$ to its $(d-1)$-dimensional output
$\HC_{c}$.

\section{Channel coherent information}
\label{sec:CI}
\xb
\outl{CI of $\NC_s, \MC_d$, and $\OC$. Proof}
\xa

The coherent information of $\NC_s, \MC_d$, and $\OC$ can be obtained from an
optimization of the form~\eqref{eq:q1Deg}. In general, this optimization is
non-trivial to carry out because the entropy bias~\eqref{entBias} is not
generally concave in $\rho$. As a result the coherent information $\QC^{(1)}$
for most channels remains unknown. However, one can show that optimizations for
$\QC^{(1)}$ of all three channels $\NC_s, \MC_d$, and $\OC$ can be reduced to a
one-parameter concave maximization over a bounded interval. For any $0 \leq s \leq 1/2$,
\begin{align}
    \QC^{(1)}(\NC_s) = \max_{\rho_a(u)} \Dl\big( \NC_s, \rho_a(u) \big),
    \label{eq:NsQ1}
\end{align}
where $\rho_a(u)$ is a one-parameter density operator of the form $\rho_a(u) =
(1-u)[0] + u [2]$ and $0 \leq u \leq 1$. For any $d \geq 3$,
\begin{align}
    \QC^{(1)}(\MC_d) = \max_{\rho_a(u)} \Dl\big( \MC_d, \rho_a(u) \big),
    \label{eq:MdQ1}
\end{align}
where $\rho_a(u) = (1-u)[0] + u [i]$, $0 \leq u \leq 1$ and $i$ is any fixed
integer between $1$ and $d-1$. For any $\mu_0 \leq \mu_1 \leq \dots \leq
\mu_{d-2}$,
\begin{align}
    \QC^{(1)}(\OC) = \max_{\rho_a(u)} \Dl\big( \OC, \rho_a(u) \big),
    \label{eq:OQ1}
\end{align}
where
\begin{align}
    \rho_{a}(u) = (1-u) [0] + u [d-1],
    \label{eq:optOp}
\end{align}
and $0 \leq u \leq 1$.  In~\eqref{eq:NsQ1},~\eqref{eq:MdQ1}, and~\eqref{eq:OQ1}
the maximization is over a density operator $\rho_{a}(u)$ which is supported
over a subspace of the $\HC_{ai}$ form~\eqref{aiSub}. In
Secs.~\ref{sec:NsChan},~\ref{sec:MdChan}, and~\ref{sec:glnChan}, we showed that
restricting the channel input to these $\HC_{ai}$ subspaces results in a
degradable channel. Since the entropy bias of a degradable channel is concave
in the channel input, each optimization in~\eqref{eq:NsQ1},~\eqref{eq:MdQ1},
and~\eqref{eq:OQ1} is a concave maximization. In addition, one can also show
that all three coherent informations ~\eqref{eq:NsQ1},~\eqref{eq:MdQ1},
and~\eqref{eq:OQ1} are strictly positive.

Since $\NC_s$ and $\MC_d$ are special cases of $\OC$~(see discussion
below~\eqref{isoDef2}), proving $\QC^{(1)}(\OC)>0$ and the
equality~\eqref{eq:OQ1} also proves $\QC^{(1)}(\NC_s)>0$, $\QC^{(1)}(\MC_d)>0$
and proves the equalities in both~\eqref{eq:NsQ1} and~\eqref{eq:MdQ1}. 
In what follows, Sec.~\ref{sec:Q1OChan} shows the equality~\eqref{eq:OQ1},
Sec.~\ref{sec:lSingCIPos} shows $\QC^{(1)}(\OC)>0$, Sec.~\ref{sec:subCh}
provides additional insight into certain sub-channels of $\OC$ and finally
Sec.~\ref{sec:Ren} extends our results to R\'{e}nyi entropies.

\subsection{Evaluating $\QC^{(1)}(\OC)$}
\label{sec:Q1OChan}

Let $\OC_{d-1}$ be the sub-channel of $\OC$ obtained by restricting the channel
input to the sub-space $\HC_{ai}$ in~\eqref{aiSub} where $i = d-1$.  
In this subsection, we will show that
\begin{align}
    \QC^{(1)}(\OC) \coloneqq \max_{\rho_a} \Dl(\OC, \rho_a)
    = \max_{0 \leq u \leq 1} \Dl(\OC_{d-1}, \rho_a(u))
    = \QC^{(1)}(\OC_{d-1}),
    \label{eq:q1ResB}
\end{align}
where $\rho_a(u)$ is defined in~\eqref{eq:optOp}.
The above equations can be proved in three steps. 
The first step exploits the structure of $(\OC, \OC^c)$ to show that 
restricting the input $\rho_a$ to a certain block-diagonal form preserves the 
optimal value in the second expression in~\eqref{eq:q1ResB}.  
The second step reduces this new optimization further to a one parameter 
problem~\eqref{eq:max1}, using a majorization-based approach. 
Finally, this one-parameter
problem is shown to be equivalent to finding the coherent information of
$\OC_{d-1}$.  These three steps are detailed as follows.  

Recall from ~\eqref{eq:HaPrimSub} that $\HC_{a'} = \Span \{\ket{1}, \ket{2}, \dots, \ket{d-1}\}$.  Let 
\begin{align}
    \HC_0 \coloneqq \Span \{\ket{0}\}, \quad \text{and} \quad \HC_1 \coloneqq \HC_{a'},
    \label{eq:APart}
\end{align}
so that $\HC_a = \HC_0 \oplus \HC_1$.
Let $\LC_{ij}$ be the space of linear operators from $\HC_{j}$
to $\HC_i$, i.e.,
\begin{align}
    \LC_{ij} = \{E_{ij} \; | \; E_{ij}\colon \HC_j \mapsto \HC_i\}.
    \label{LDef}
\end{align}
Any density operator on $\HC_a$ can be written as
\begin{align}
    \rho_a  = \bigoplus_{i,j} N_{ij},
    \label{eq:rhoForm}
\end{align}
where $i,j$ are binary and $N_{ij}$ are linear operators from $\HC_j$ to
$\HC_i$. 

\begin{remark}
\label{remIeJ}
Using \eqref{isoDef3}, it is straightforward to verify that,  
in the standard basis of $\HC_b$, 
$\OC(N_{ii})$ has zero off-diagonal entries, and 
if $i \neq j$, $\OC(N_{ij})$ has zero diagonal entries, while 
$\OC^c(N_{ij})$ is the null operator on $\HC_c$. 
\end{remark}

For step 1, starting  from any $\rho_a$, we obtain $\tilde{\rho}_a$ by
resetting $N_{ij} = 0$ for $i \neq j$.  We claim that $\Dl(\OC, \rho_a) \leq
\Dl(\OC, \tilde{\rho}_a)$.  To see this, let $\tilde{\rho}_b =
\OC(\tilde{\rho}_a)$ and $\tilde{\rho}_c = \OC^c(\tilde{\rho}_a)$.  Using the
above remark, $\tilde{\rho}_b$ can be obtained from $\rho_b$ by resetting all
the off-diagonal elements to 0, and $\tilde{\rho}_b$ is majorized by $\rho_b$
(see discussion below eq.~\eqref{eq:mj1} and Prb.II.5.5 in~\cite{Bhatia97}).
Applying Schur-concavity of the von-Neumann entropy~(see
Sec.~\ref{sec:majEntMin}), we have $S(\tilde{\rho}_b) \geq S(\rho_b)$.  
The above remark also implies that $\tilde{\rho}_c = \rho_c$, so 
$S(\tilde{\rho}_c) = S(\rho_c)$. 
Together, this proves the claim $\Dl(\OC, \rho_a) \leq \Dl(\OC, \tilde{\rho}_a)$.  
Thus, to maximize the entropy bias~\eqref{entBias} 
we can focus on $\rho_a$ of the form
in~\eqref{eq:rhoForm} where $N_{01} = N_{10} = 0$, i.e., 
\begin{align}
    \rho_a = (1-u) [0] + u \sg,
    \label{eq:rhoForm2}
\end{align}
where $0 \leq u \leq 1$ and $\sg$ is a density operator on $\HC_{a'}$. 

For step 2, note that the input $\rho_a$ in \eqref{eq:rhoForm2} gives the outputs 
\begin{align}
    \rho_b = (1-u) \left( \sum_{i=0}^{d-2} |\mu_i|^2 [i] \right) + u [d-1], 
\end{align}
and
\begin{align}
    \rho_c = (1-u) \left( \sum_{i=0}^{d-2} |\mu_i|^2 [i] \right) + u V \sg V^{\dag},
    \label{eq:rhoCForm}
\end{align}
where the channel parameters $\mu_0 \leq \mu_1 \leq \dots \leq \mu_{d-2}$ are
fixed.  Note that $S(\rho_b)$ only depends on $u$ while $S(\rho_c)$ depends on both
$u$ and $\sg$. Thus for any fixed $u$, the entropy bias $\Dl(\OC,
\rho_a)$ is maximum when $S(\rho_c)$ is minimum. 
We will prove in the next subsection that this minimum can always be attained
when $\sg = [d-1]$, for all relevant values of $u$ and the channel parameters
$\mu_0 \leq \mu_1 \leq \dots \leq \mu_{d-2}$.  As a result,
\begin{align}
    \QC^{(1)}(\OC) = \max_{0 \leq u \leq 1} \Dl\big( \OC, \rho_a (u) \big),
    \label{eq:max1}
\end{align}
where $\rho_a(u)$ is as given in~\eqref{eq:optOp}, obtained from setting 
$\sg = [d-1]$ in~\eqref{eq:rhoForm2}.  

To finish the proof, note that $\rho_a(u)$ is supported on $\HC_{a(d-1)}$, so,
expanding the maximization to a general density operator $\Lm$ supported on
$\HC_{a(d-1)}$ is nondecreasing: 
\begin{align}
    \max_{0 \leq u \leq 1} \Dl\big(\OC, \rho_a(u) \big) \leq
    \max_{\Lm} \Dl(\OC,\Lm) \,. 
    \label{eq:max2}
\end{align}
Since $\Lm$ is supported on $\HC_{a(d-1)}$, 
\begin{align}
    \max_{\Lm} \Dl(\OC,\Lm) = \max_{\Lm} \Dl(\OC_{d-1}, \Lm) \,,
    \label{eq:max3}
\end{align}
and the RHS of the above is $\QC^{(1)} (\OC_{d-1} )$.  
Combining \eqref{eq:max1}, \eqref{eq:max2}, and \eqref{eq:max3} thus gives
\begin{align}
    \QC^{(1)}(\OC) \leq \QC^{(1)}(\OC_{d-1}).
    \label{eq:max4}
\end{align}
The opposite inequality holds since $\OC_{d-1}$ is a sub-channel of $\OC$, 
which establishes~\eqref{eq:q1ResB}.

\subsection{Majorization and entropy minimization}
\label{sec:majEntMin}
\xb
\outl{Entropy min and majorization}
\xa

In this subsection, we prove the claim in the previous subsection that $\rho_c$
in \eqref{eq:rhoCForm} has minimum entropy when $\sg = [d-1]$, for any fixed
$u$ and channel parameters $\mu_0 \leq \mu_1 \leq \dots \mu_{d-2}$.  We first
summarize our main tools, majorization and Schur-concavity of the Shannon
entropy. Let us first consider majorization of real vectors.  Given a vector
$\xB$ in $\mathbb{R}^t$, let $\xB^{\downarrow}$ denote the vector obtained by
rearranging the entries of $\xB$ in descending order.  For any two vectors
$\xB,\yB$ in $\mathbb{R}^t$, we say that $\xB$ is majorized by $\yB$, in
symbols $\xB \prec \yB$, if the inequality,
\begin{align}
    \sum_{j=1}^{k} \xB_j^{\downarrow} \leq \sum_{j=1}^{k} \yB_j^{\downarrow},
    \label{eq:mj1}
\end{align}
holds for all $k \leq t$ and becomes an equality at $k=t$. The concept
of majorization can be generalized to Hermitian matrices as follows.
For any $t \times t$ Hermitian operator $N$, let $\vB(N)$ denote the vector of
singular values of $N$. For two Hermitian operators $N,M$ of the same size, 
we say that $N$ is majorized by $M$, $N \prec M$ in symbols, 
if $\vB(N) \prec \vB(M)$. The operator
$N+M$ satisfies~(see Ex.II.1.15 in~\cite{Bhatia97}),
\begin{align}
    \sum_{j=1}^{k} \vB_j^{\downarrow}(N+M) 
    \leq 
    \sum_{j=1}^{k} \vB_j^{\downarrow}(N)+
    \sum_{j=1}^{k} \vB_j^{\downarrow}(M)
\end{align}
for all integers $1 \leq k \leq d$, as a result,
\begin{align}
    \vB(N + M) \prec \vB^{\downarrow}(N) + \vB^{\downarrow}(M).
    \label{eq:majSum}
\end{align}
The Shannon entropy is Schur-concave, i.e., when a probability vector, $\pB$,
is majorized by another, $\qB$, then $h(\pB) \geq h(\qB)$. The von-Neumann
entropy of a density operator $\tau$ is equal to the Shannon entropy
of $\vB(\tau)$:
\begin{align}
    S(\tau) = h \big(\vB(\tau)\big).
    \label{eq:shVon}
\end{align}
Thus, like the Shannon entropy, the von-Neumann entropy is Schur-concave, i.e.,
if $\tau \prec \kappa$ then $S(\tau) \geq S(\kappa)$.

We now return to the problem of minimizing the 
von-Nuemann entropy $S(\rho_c)$.  Recalled from ~\eqref{eq:rhoCForm} 
that $\rho_c = (1-u) \left( \sum_{i=0}^{d-2} |\mu_i|^2 [i] \right) + u V \sg V^{\dag}$.  
Let $\sg' = V \sg V^{\dag}$ and 
$\Upsilon \coloneqq \sum_{i=0}^{d-2} |\mu_i|^2 [i]$, so, 
\begin{align}
    \rho_c = (1-u) \Upsilon + u \sg'.
    \label{eq:rhoCForm2}
\end{align}
Since $S(\rho_c)$ is a concave function, the minimum can be attained 
when $\sg'$ is a pure state.
Applying~\eqref{eq:majSum} to~\eqref{eq:rhoCForm2}, we obtain
\begin{align}
    \vB(\rho_c) \prec (1-u) \vB^{\downarrow}(\Upsilon) + u \vB^{\downarrow}(\sg')
    \label{eq:maj1}
\end{align}
where we use $\vB(cN) = c\vB(N)$ for any real $c>0$. Since $\sg'$ is a pure
state, the RHS of \eqref{eq:maj1} is a vector 
$\wB = ((1-u) |\mu_{d-2}|^2 + u, (1-u) |\mu_{d-3}|^2, \cdots, (1-u)|\mu_{0}|^2)$, 
and from Schur-concavity of the Shannon entropy, 
\begin{align}
    S(\rho_c) = h\big( \vB(\rho_c) \big) \geq h(\wB).
    \label{eq:maj2}
\end{align}
From the expression of $\wB$, equality is attained when $\sg' = [d-2]$.  
So the entropy of $\rho_c$ in~\eqref{eq:rhoCForm2} is
minimized when $\sg' = [d-2]$.  Using the expression of $V$ in \eqref{vMap}, 
$\sigma = [d-1]$. 

\subsection{Log-singularity and positivity}
\label{sec:lSingCIPos}
\xa
\outl{log-singularity, general argument for $\QC^{(1)}>0$, specific argument
for $\QC^{(1)}(\OC)>0$}
\xa

We shall be using an $\ep$ $\log$-singularity based method to show $\QC^{(1)}(\OC)>0$
for any $\mu_0 \leq \mu_1 \leq \dots \leq \mu_{d-2}$ and $d \geq 3$. For details
of the method see~\cite{Siddhu21}. 
Let $\rho(\ep)$ be a density operator that depends on a real parameter $\ep$.
The von-Neumann entropy, $S(\ep) = - \Tr \big( \rho( \ep) \log \rho (\ep)
\big) $ is said to have an $\ep$ $\log$-singularity if one or several
eigenvalues of $\rho(\ep)$ increase linearly from $0$ to leading order in
$\ep$. As a result of this singularity, $S(\ep) \simeq x |\ep \log \ep|$ where
$x>0$ is called the {\em rate} of the singularity.
The entropy bias, or coherent information of $\OC$ at $\rho_a(\ep)$, may be
concisely denoted by,
\begin{align}
    \Dl(\ep) \coloneqq S_b(\ep)- S_c(\ep),
    \label{eq:entBeps}
\end{align}
where $S_b(\ep)\coloneqq S\big( \rho_b(\ep) \big) $ and $S_c(\ep)= S\big( \rho_c(\ep)
\big)$. If an $\ep$ $\log$-singularity in $S_b(\ep)$ has larger rate than
the one in $S_c(\ep)$, and $\Dl(0) = 0$, then $\QC^{(1)}(\OC)>0$. In the present
case, let $\rho_a(\ep)$ be the density operator in~\eqref{eq:optOp} where $u$
is replaced with $\ep$. At $\ep = 0$, $\rho_a(\ep)$ is a pure state, hence
$\Dl(0) = 0$. In addition, $\rho_c(0)$ has rank $d_c = d-1$, and thus
$S_c(\ep)$ cannot have an $\ep$ $\log$-singularity. On the other hand $S_b(\ep)$
has an $\ep$ $\log$-singularity of rate $1$ and thus $\QC^{(1)}(\OC)>0$.

\subsection{Sub-channels}
\label{sec:subCh}
\xa
\outl{log-singularity, general argument for $\QC^{(1)}>0$, specific argument
for $\QC^{(1)}(\OC)>0$}
\xa

Let $\OC_i$ be the sub-channel of $\OC$ obtained by restricting the channel input
to operators on $\HC_{ai}$~\eqref{aiSub}, where $i$ is some fixed integer
between $1$ and $d-1$. 
Using arguments similar to those presented in subsection \ref{sec:Q1OChan}, 
one can show
\begin{align}
    \QC^{(1)}(\OC_i) = \max_{\rho_{ai}(u)}\Dl (\OC_i, \rho_{ai}\big( u \big) ),
    \label{eq:q1Oci}
\end{align}
where $\rho_{ai}(u)$ is a one-parameter density operator of the form
\begin{align}
    \rho_{ai}(u) = (1-u) [0] + u [i],
    \label{eq:rhoai}
\end{align}
and $0 \leq u \leq 1$. Let $S(\rho_{ci})$ be the entropy of $\rho_{ci} =
\OC^c_i(\rho_{ai})$.  Using majorization and Schur concavity arguments 
(similar to those leading to~\eqref{eq:maj2}) and
the fact that $\mu_0 \leq \mu_1 \leq \dots \leq \mu_{d-2}$ one can show that
\begin{align}
    S(\rho_{ci}) \geq S(\rho_{cj}),
    \label{eq:entC}
\end{align}
for all $i \leq j$. For any fixed $u$ in~\eqref{eq:rhoai} the entropy $S(\rho_{bi})$
of $\rho_{bi} = \OC_i(\rho_{ai})$, is independent of $i$.
Using~\eqref{eq:entC} and the definition of the entropy bias~\eqref{entBias}
one obtains,
\begin{align}
    \Dl \big( \OC_i, \rho_{ai}(u) \big) \leq
    \Dl \big( \OC_j, \rho_{aj}(u) \big) ,
    \label{eq:entbRel}
\end{align}
for any $i \leq j$. The relation above implies,
\begin{align}
    \QC^{(1)}(\OC_i) \leq \QC^{(1)}(\OC_j),
    \label{eq:q1ij}
\end{align}
for all $i \leq j$. Using the above equation along with~\eqref{eq:q1ResB} we
get
\begin{align}
    \QC^{(1)}(\OC_1) \leq \QC^{(1)}(\OC_2) \leq \dots \leq \QC^{(1)}(\OC_{d-1}) 
    = \QC^{(1)}(\OC)
    \label{eq:q1ij-2}
\end{align}

\subsection{R\'{e}nyi entropy}
\label{sec:Ren}
Before closing this section, we remark that to obtain~\eqref{eq:q1ResB} we used
(a) monotonicity of the von-Neumann entropy $S(\rho_b)$ under
block-diagonalization of $\rho_a$, and (b) concavity of the von-Neumann entropy
to argue that pure $\sg'$ minimizes $S(\rho_c)$, and finally, we utilized
majorization~(see Sec.~\ref{sec:majEntMin}) to argue that the maximize entropy
bias occurs when the input density operator has the form~\eqref{eq:optOp}. 
In the definition of~\eqref{entBias} if one replaces the von-Neumann entropy
with any R\'{e}nyi entropy, 
\begin{align}
    S_{\al}(\rho) = \frac{1}{1-\al} \log \Tr(\rho^\al),
    \label{eq:Renyi}
\end{align}
where $0 \leq \al \leq \infty$, then the corresponding
equation~\eqref{eq:q1ResB} would still hold. 
We outline the reasoning here.  First, monotonicity of the Schur-concave
R\'{e}nyi entropy is unaffected when $\rho_a$ is block-diagonalized.   
However, unlike the von-Neumann entropy, $S_{\al}$ is not concave for $\al >1$.
Thus to prove that the minimum R\'{e}nyi entropy
$S_{\al}(\rho_c)$~\eqref{eq:maj1} also occurs over pure states~$\sg'$ we employ
a different stratgey. Write 
\begin{align}
    S_{\al}(\rho_c) = S_{\al}(\NC(\sg'))
\end{align}
where the quantum channel $\NC$ acts as $\NC(\sg') = (1-u) \Upsilon
\Tr(\sg') + u \sg'$, with $0 \leq u \leq 1$. Next we use the fact that the
minimum output R\'{e}nyi entropy for $\NC(\sg')$ occurs at a pure state
input~(see Sec.II in~\cite{GourKemp17}).  With $\sg'$ being restricted to a
pure state, one can now use the majorization-based argument to show that the
R\'{e}nyi entropy $S_{\al}(\rho_c)$ is minimum when the input density operator
has the form~\eqref{eq:optOp}.  This majorization argument is unaffected when
the von-Neumann entropy is replaced by the R\'{e}nyi entropy which is 
also Schur-concave.

\section{Quantum capacity of $\OC$}
\label{sec:qcap}

Subject to the {\em spin-aligment conjecture}, introduced in Sec.~\ref{sec:spinAl},
we show that
\begin{align}
    \QC(\NC_s) = \QC^{(1)}(\NC_s), \quad
    \QC(\MC_d) = \QC^{(1)}(\MC_d), \quad \text{and} \quad
    \QC(\OC) = \QC^{(1)}(\OC).
    \label{eq:NsQ1Q}
\end{align}
To show these equalities above, we prove the third equality in~\eqref{eq:NsQ1Q}
and infer the other two because $\NC_s$ and $\MC_d$ are special cases of
$\OC$~(see Sec.~\ref{sec:glnChan}).

Our next step is to compute the coherent information of the channel $\OC^{\ot
n}\colon \hat \HC_a^{\ot n} \mapsto \hat \HC_b^{\ot n}$
\begin{align}
    \QC^{(1)}(\OC^{\ot n})= \max_{\rho} \Dl( \OC^{\ot n}, \rho ) 
    = \max_{\rho} [S\big( \OC^{\ot n} (\rho) \big) - S\big( (\OC^c)^{\ot n} (\rho) \big)],
    \label{enDiffMn}
\end{align}
where $n \geq 1$ and $\rho$ is a density operator on $\HC_a^{\ot n} = \HC_a^1
\ot \HC_a^2 \ot \cdots \ot \HC_a^n$, with the superscript $i$ indicating the
$i^\text{th}$ space. Using \eqref{eq:APart}, express
\begin{align}
    \HC_a^{\ot n} = (\HC_0 \oplus \HC_1)^{\ot n} = \bigoplus_{\bB \in \{0,1\}^n} \HC(\bB) 
\end{align}
where $\{0,1\}^n \coloneqq \{\bB \; | \; \bB(i) \in \{0,1\}\; , 1 \leq i \leq
n\}$ is the set of $n$ bit strings and $\HC(\bB) \coloneqq \HC_{\bB(1)} \ot
\HC_{\bB(2)} \ot \cdots \ot \HC_{\bB(n)}$. Let $\LC(\sB, \tB)$ be the space of
linear operators from $\HC(\tB)$ to $\HC(\sB)$. Any density operator $\rho$ on
$\HC_a^{\ot n}$ can be written as
\begin{align}
    \rho = \bigoplus_{\sB,\tB \in \{0,1\}^n} N(\sB, \tB),
    \label{rhoDefN}
\end{align}
for some $N(\sB, \tB) \in \LC(\sB, \tB)$. We will obtain the form of $\OC^{\ot
n} (\rho)$ and $(\OC^c)^{\ot n}(\rho)$ using the expression of $\rho$ above and
the following two remarks (which generalize Remark~\ref{remIeJ}).  
\begin{remark}
\label{remSeT}
If $\sB = \tB$ then $\OC^{\ot n} \big( N(\sB, \tB) \big)$ has zero off-diagonal
    entries in the basis $\{\ket{0}, \dots, \ket{d{-}1} \}^{\ot n}$ of
    $\HC_b^{\ot n}$.  If $\sB \neq \tB$ then $\OC^{\ot n} \big( N(\sB, \tB)
    \big)$ has zero diagonal entries in the aforementioned basis of $\HC_b^{\ot
    n}$, while $(\OC^c)^{\ot n} \big( N(\sB, \tB) \big) = 0$, the null
    operator.
\end{remark}

To prove remark \ref{remSeT}, we first express any operator $N(\sB, \tB)$ as a
linear combination of operators of the form,
\begin{align}
    F = \bigotimes_{k = 1}^{n} F_{\sB(k) \tB(k)}, \quad \text{where} \quad
    F_{\sB(k) \tB(k)} \in \LC_{\sB(k) \tB(k)}.
    \label{simpleOn}
\end{align}
To illustrate these notations, for example, take $n=3, \sB = (0,1,1), \tB =
(1,0,1)$.  Then, each of these operators $F$ has the form $F = F_{01} \otimes
F_{10} \otimes F_{11}$ where $F_{01}:\HC_1 \mapsto \HC_0$, $F_{10}:\HC_0
\mapsto \HC_1$, and $F_{11}:\HC_1 \mapsto \HC_1$.  

Note that the properties in remark \ref{remSeT} are preserved under linear
combinations, so, it suffices to prove the remark for an operator $F$ in
\eqref{simpleOn} instead of $N(\sB, \tB)$.  The channels $\OC^{\ot n}$ and
$(\OC^c)^{\ot n}$ act on $F$ in \eqref{simpleOn} as follows,
\begin{align}
    \OC^{\ot n} (F) & = \bigotimes_{k=1}^n \OC \big( F_{\sB(k) \tB(k)} \big), \quad \text{and} \label{Bn} \\
    (\OC^c)^{\ot n} (F) & = \bigotimes_{k=1}^n \OC^c \big( F_{\sB(k) \tB(k)} \big) \label{Cn} \,. 
\end{align}

\begin{proof} (of remark \ref{remSeT})
    If $\sB = \tB$, for each $k$, let $\sB(k) = \tB(k) = i_k \in \{0,1\}$.  So,
    $F_{\sB(k) \tB(k)} = F_{i_k i_k}$.  From Remark~\ref{remIeJ}, $\OC(F_{i_k
    i_k})$ is diagonal in the standard basis of $\HC_b$, so, from~\eqref{Bn},
    $\OC^{\ot n}(F)$ is also diagonal in the standard basis of $\HC_b^{\ot n}$.
    If $\sB \neq \tB$, they differ at some position $k$, so $F_{\sB(k) \tB(k)}
    = F_{ij}$ for $i \neq j$.  From Remark~\ref{remIeJ}, $\OC(F_{\sB(k)
    \tB(k)}) = \OC(F_{ij})$ has zero diagonal entries, so from~\eqref{Bn},
    $\OC^{\ot n}(F)$ also has zero diagonal entries.
    Furthermore, from Remark~\ref{remIeJ} $\OC^c(F_{ij}) = 0$, so
    from~\eqref{Cn}, $(\OC^c)^{\ot n}(F) = 0$ also.
\end{proof}

Using \eqref{rhoDefN}, we write
\begin{align}
    \OC^{\ot n} (\rho) &= \sum_{\sB \neq \tB} \OC^{\ot n}\big( N(\sB,\tB) \big)
    + \sum_{\sB} \OC^{\ot n}\big( N(\sB,\sB) \big) \label{BnOut} \\
    (\OC^c)^{\ot n} (\rho) &= \sum_{\sB \neq \tB} (\OC^c)^{\ot n}\big(
    N(\sB,\tB) \big) + \sum_{\sB} (\OC^c)^{\ot n}\big( N(\sB,\sB) \big)
    \label{CnOut}
\end{align}
From remark~\ref{remSeT}, the first term on the right side 
of the equality in~\eqref{BnOut} has zero diagonal entries, the second
term has zero off-diagonal entries, while the first term on the right side of 
the equality in \eqref{CnOut} is zero. 
Therefore, setting $N(\sB,\tB) = 0 $ for all $\sB \neq
\tB$ has no effect on $(\OC^c)^{\ot n}(\rho)$, nor on the diagonal entries of 
$\OC^{\ot n}(\rho)$, while all off-diagonal
entries of $\OC^{\ot n}(\rho)$ become zero's.  This may increase 
$S\left(\OC^{\ot n}(\rho)\right)$ but not $S\left((\OC^c)^{\ot n}(\rho)\right)$
so, when maximizing the entropy difference in the second equality in~\eqref{enDiffMn},
we may restrict to density operators of the form
\begin{align}
    \rho = \bigoplus_{\sB \in \{0,1\}^n} N(\sB, \sB).
    \label{rhoRes}
\end{align}

To proceed with the analysis, we re-express \eqref{rhoRes} as a specific convex
combination of states.  Let $M$ denote a subset of $\{1,\dots,n\}$ and $M^c$
the 
complement of $M$ in $\{1,\dots,n\}$. Let $|M|$ and $|M^c|$ denote the sizes of
$M$ and $M^c$.  
We may use the subset $M$ to label some of the channel uses, or some 
of the input systems, or some of the output systems.  
For any such subset $M$, let $\om_M$ denote a density operator acting on   
the corresponding subset of input spaces $\otimes_{i \in M} \HC_a^i$, 
where $\HC_a^i$ denotes the input space of the $i$-th channel use,  
and let $\dya{0}^{\ot M^c}$ denote the pure state $(\dya{0})^{\otimes |M^c|}$
on the complement set of input spaces $\otimes_{j \in M^c} \HC_a^j$.    
Using this notation, we now show that 
the density operator in~\eqref{rhoRes} can be written as
\begin{align}
    \rho =  \sum_{M} x_M \; \om_{M} \ot \dya{0}^{\ot M^c}.
    \label{rhoResTen}
\end{align}
for some density operators $\om_{M}$ and for some 
$x_M$ between zero and one such that
\begin{align}
    \sum_M x_M = 1 \,.
    \label{eq:x-distr-alt}
\end{align}
In the above and throughout, the summation over $M$ is over all subsets of $\{1,\dots,n\}$. 
For an arbitrary $\sB$, let the $k$-th entry of $\sB$ be $i_k$.  Recall that
$N(\sB, \sB)$ is a linear combination of operators of the form $F_{i_1 i_1}
\otimes F_{i_2 i_2} \otimes \dots \otimes F_{i_n i_n}$.  Recall also that
$\HC_0 = \Span \{\ket{0}\}$ so $F_{00} \propto \dya{0}$ and for each $k$ where
$i_k=0$, $F_{i_k i_k} \propto \dya{0}$.  It means that $N(\sB, \sB) = \eta_M
\otimes \dya{0}^{M^c}$ where $M = \{i \in \{1,\dots,n\}: s_i = 1\}$ and $M^c =
\{i \in \{1,\dots, n\}: s_i = 0\}$ and $\eta_M$ is an operator acting on
$\otimes_{i \in M} \HC_a^i$.  Substituting this into the expression for $\rho$
in \eqref{rhoRes}, and applying $\rho \geq 0$ and $\tr\rho = 1$ gives $\eta_M
\geq 0$ for each $M$, $\sum_M \tr \eta_M = 1$ so we can write $\eta_M = x_M
\om_M$ for $x_M$ between 0 and 1, $\sum_M x_M = 1$, and each $\om_M$ is a
density operator.

Using the form of $\rho$ in \eqref{rhoResTen}, one may write
\begin{align}
    \OC^{\ot n}(\rho) = 
    \sum_M x_M \; (\, \dya{d{-}1}^{\ot |M|} \, )_M \ot \Upsilon^{\ot M^c},
    \label{BOut1}
\end{align}
and
\begin{align}
    (\OC^c)^{\ot n}(\rho) = 
    \sum_{M} x_M \; \om_M' \ot \Upsilon^{\ot M^c},
    \label{COut1}
\end{align}
where $\om_M' = V \om_M V^{\dag}$ and $\Upsilon = \sum_{i=0}^{d-2} |\mu_i|^2
[i]$ (see \eqref{isoDef3} and \eqref{vMap}).  For fixed channel parameters
$\{\mu_i\}$, the entropy $S\big(\OC^{\ot n}(\rho) \big)$ only depends on
$\{x_{M}\}$ while the entropy $S \big( (\OC^c)^{\ot n}(\rho) \big)$ depends on
both $\om_{M}$ and $\{x_{M} \}$. Thus for any fixed $\{\mu_i\}$ and $\{x_{M}
\}$, the maximum entropy difference~\eqref{enDiffMn} occurs when $S \big(
(\OC^c)^{\ot n}(\rho) \big)$ is minimum. This minimum can be obtained by
solving the optimization problem
\begin{align}
    \label{qnMinP1}
    \min & \; S(\kappa)  \\
    \kappa &= \sum_{M} x_M \; \om_M' \ot \Upsilon^{\ot M^c},
    \nonumber \\
    \om_{M}' &\geq 0, \nonumber \\
    \Tr(\om_{M}') &= 1, \forall M \subset \{1,\dots,n\} \nonumber,
\end{align}
where $\Upsilon$ only depends on the fixed $\{\mu_i\}$ and $\{x_M\}$ is also
held fixed.  In Sec.~\ref{sec:spinAl} we introduce the {\em spin alignment
conjecture}. Using this conjecture, the minimization problem above has a close
form solution $\om_{M} = (\dya{d-1})^{\ot |M|}$ for any fixed $\{\mu_i\}$ and
$\{x_M\}$. This closed form solution implies that the maximum entropy
difference~\eqref{enDiffMn} is attained on an input density operator of the
form
\begin{align}
    \rho =  \sum_{M} x_M \; \om_M \ot (\dya{0})^{\ot M^c},
    \label{rhoNForm}
\end{align}
where $\om_{M} = (\dya{d-1})^{\ot |M|}$.  The above density operator is
supported on a subspace $\HC_{ai}^{\ot n}$, $i = d-1$ of $\HC_a^{\ot n}$. Using
arguments similar to those in below eq.~\eqref{eq:max1} the maximum entropy
difference~\eqref{enDiffMn},
\begin{align}
    \QC^{(1)}(\OC^{\ot n})= \max_{\rho} \Dl( \tilde \OC^{\ot n}, \rho ) =
    \QC^{(1)}(\OC_{d-1}^{\ot n}).
    \label{enDiffMn2}
\end{align}
In Sec.~\ref{sec:proofsChan}, we showed that this
sub-channel $\OC_{d-1}$, is degradable. Since $\OC_{d-1}$ is degradable,
$\QC^{(1)}(\OC_{d-1}^{\ot n}) = n \QC^{(1)}(\OC_{d-1})$, and thus the 
equality~\eqref{enDiffMn2} simplifies
\begin{align}
    \QC^{(1)}(\OC^{\ot n}) = n\QC^{(1)}(\OC_{d-1}).
    \label{Add}
\end{align}
The equality along with \eqref{eq:q1ResB} gives the desired result
\begin{align}
    \QC^{(1)}(\OC^{\ot n}) = n\QC^{(1)}(\OC).
    \label{Additivity}
\end{align}

\section{Spin alignment conjecture}
\label{sec:spinAl}
Since this conjecture may be of independent interest, it is presented in a
more self-contained manner, with occasional repetition of information from the
previous section.  
Let $\ket{0}$ and $\ket{1}$ denote the spin up and spin down states of a 
spin-$\frac{1}{2}$ particle (we just call this a spin). Let
\begin{align}
    Q = s \, \dya{0} + (1-s) \, \dya{1},
    \label{qVar}
\end{align}
be a fixed mixed state of a spin, where $0 \leq s \leq 1/2$. 
For each $M \subset \{1,2,\cdots,n\}$, let $M^c$ be the complement of $M$, and let
$|M|$ and $|M^c|$ be the sizes of $M$ and $M^c$, respectively.
Consider $n$ spins, and view $M$ as a subset of the spins.  
We use $\om_{M} \ot Q^{\ot M^c}$ to denote a state on these $n$ spins where
each spin in $M^c$ is in the state $Q$, and the spins in $M$ are in a joint
state given by the density matrix $\om_{M}$.    
Let $\{x_M\}$ be non-negative numbers such that  
\begin{align}
    \sum_{M} x_M = 1 \,, 
    \label{eq:x-distr}
\end{align}
where the sum is over all possible subsets $M$ of $\{1,\dots,n\}$.  For an
arbitrary set of such numbers $\{x_M\}$, consider $n$ spins in the state
$\kappa =  \sum_{M} x_M \; \om_{M} \ot Q^{\ot M^c}$ where $\om_{M}$ are
variables.  The goal is to minimze the von Neuman entropy of $\kappa$,
$S(\kappa) = -\Tr(\kappa \log \kappa)$.   Formally, the {\bf entropy
minimization problem} is given by 
\begin{align}
    \label{qMinGln}
    \min & \; S(\kappa) \\
    \kappa &=  \sum_{M} x_M \; \om_{M} \ot Q^{\ot M^c}, \nonumber \\
    \om_{M} & \geq 0, \nonumber \\
    \Tr(\om_{M}) &= 1. \nonumber
\end{align}
We conjecture that the entropy minimization problem 
has an optimal solution when all spins align with one another as 
much as possible and aligned with the eigenstate corresponding to the 
maximum eigenvalue of $Q$; in other words, for all $2^n - 1$
possible non-empty subsets $M$,  
\begin{align}
    \label{qMinGlnSol}
    \om_M = \dya{1}^{\otimes |M|} \,.
\end{align}
Note that the minimum of \eqref{qMinGln} can be attained 
on pure states $\omega_M$ due to concavity of the von Neumann
entropy. 

We have not come upon a general proof of this conjecture but we have found proofs
for various special cases and numerical evidence in other cases. In what
follows we briefly mention this evidence for our conjecture.  

\subsection{Special case $n = 1$}
\label{sec:spinAln1}
For 1 spin, the entropy minimization problem~\eqref{qMinGln} takes the form
\begin{align}
    \label{q1MinP1}
    \min & \; S(\kappa) \\
    \kappa &=  x_{1} \om_{1} + x_{\phi} Q, \nonumber \\
    \om_{1} &\geq 0, \nonumber \\
    \Tr(\om_{1}) &= 1, \nonumber
\end{align}
where $\phi$ denotes the null set, the subscript $1$ is a shorthand for
$M=\{1\}$, and $x_1$ and $x_{\phi}$ are arbitrary but fixed non-negative
numbers that sum to $1$.  The above optimization problem (\ref{q1MinP1}),
generalized to any dimension and any valid density matrix $Q$, can be solved
using Schur-concavity of the von Neumann entropy along with a majorization
inequality, see eq.~(II.16) on pg.~35 in~\cite{Bhatia97}.  The optimal solution
$\om_{1}$ can always be chosen to be the projector onto any 1-dimensional
eigenspace of $Q$ corresponding to its maximum eigenvalue.  This proves our
conjecture for $n=1$.  

\subsection{Special case $n=2$ and $s = 1/2$}
When there are two spins and $s=\frac{1}{2}$, $Q = \frac{I}{2}$, the
optimization problem in~\eqref{qMinGln} takes the form
\begin{align}
    \label{q2MinP2}
    \min & \; S(\kappa)  \\
    \kappa &=  x_{12} \; \om_{12} + x_{1} \; \om_1 \ot \frac{I}{2} + x_2 \; \frac{I}{2} \ot \om_2 +
    x_{\phi} \; \frac{I}{2} \ot \frac{I}{2} \nonumber \\
    \om_{1} &\geq 0, \Tr(\om_{1}) = 1, \nonumber \\
    \om_{2} &\geq 0, \Tr(\om_{2}) = 1, \nonumber \\
    \om_{12} &\geq 0, \Tr(\om_{12}) = 1, \nonumber
\end{align}
where $x_1, x_2,x_{12}$ and $x_{\phi}$ are arbitrary but fixed non-negative
numbers that sum to $1$.  There is enough symmetry to assume without loss of
generality that $\om_1 = \dya{1}, \om_2 = \dya{1}$ and using the optimal
solution for the high-dimensional generalization of the $n=1$ case
(\ref{q1MinP1}), $\om_{12} = \dya{1} \ot \dya{1}$ can be shown to be an optimal
solution. 

We mention in passing that when $n=2$ but $s$ is arbitrary, if we assume
$\om_{1}, \om_{2}, \om_{12}$ to be diagonal, we can also show the optimality of
the conjectured solution, by a detailed case analysis involving majorization
and Schur-concavity of the von Neumann entropy.

\subsection{Numerical evidence}

Our conjecture is backed by numerical evidence that we gathered for the
optimization problem \eqref{qMinGln} with $n=2,\dots ,6$.  We randomly sampled
probability distributions $\lbrace x_M\rbrace_{M}$ (see
\eqref{eq:x-distr}), restricted the optimization to pure states $\omega_M$ (see
comment after \eqref{qMinGlnSol}) in order to reduce the number of free
parameters, and used both gradient descent-based and global optimization
techniques (particle swarm optimization).  In all instances, the optimization
converged to the conjectured solution in \eqref{qMinGlnSol}, i.e., with all
states $\omega_M$ being tensor products of (a rotated version of) the
eigenvector of $Q$ corresponding to its largest eigenvalue.  
Of course, it is possible that such convergence is to a local minimum rather
than a global minimum, however we have not found minima with values smaller
than the conjectured value.

\subsection{Generalization to higher dimensions}\label{sec:conjecture-d-dim}

The entropy minimization problem~\eqref{qMinGln} can be generalized to higher
spins (a spin-$\frac{1}{2}$ particle is a qubit and higher-spin particles
correspond to qudits): 
\begin{align}
    \label{qMinGlnd}
    \min & \; S(\kappa) \\
    \kappa &=  \sum_{M} x_M \; \om_{M} \ot \Upsilon^{\ot M^c}, \nonumber \\
    \om_{M} &\geq 0, \nonumber \\
    \Tr(\om_{M}) &= 1, \nonumber
\end{align}
where $\Upsilon$ is a fixed density matrix on a qudit, and $\{x_M\}$ a fixed
distribution on the subsets of $\{1,\dots,n\}$.  As before, the minimum can be
attained on pure states $\omega_M$ due to concavity of the von Neumann entropy. 
Let $|\gamma\rangle$ be an eigenvector of $\Upsilon$ corresponding to a maximum
eigenvalue.  We conjecture that the entropy is minimized when all the spins are
aligned with one another as much as possible; that is, 
\begin{align}
    \label{qMinGlnSol-gen}
    \om_M = \dya{\gamma}^{\otimes |M|},
\end{align}
for all $2^n-1$ possible non-empty subsets $M$.  

\subsection{R\'{e}nyi-2 entropy variation of the conjecture holds}
\label{sec:renyi-2-entropy}

We discuss a side result that illustrates the intuition behind the spin
alignment conjecture.  Unless otherwise stated, symbols defined here are used
only for this subsection.  
Consider a variation of the minimization problem \eqref{qMinGlnd} wherein the
von Neumann entropy is replaced by the R\'{e}nyi-2 entropy, which according
to~\eqref{eq:Renyi}, is given by 
\begin{align}
    S_2(\rho) = - \log \Tr(\rho^2) \,.
    \label{eq:Renyi2}
\end{align}
In this case, the minimum is attained by states given by
\eqref{qMinGlnSol-gen}, and we will outline the proof in this subsection.

Using the expression for R\'{e}nyi-2 entropy above, the goal to minimize
$S_2(\kappa)$ in \eqref{qMinGlnd} is equivalent to maximizing
\begin{align}
    \Tr(\kappa^2)
    = \sum_{M M'} x_M x_{M'} \; \Tr \left[
    \left( \om_{M} \ot \Upsilon^{\ot M^c} \right) \left( \om_{M'} \ot \Upsilon^{\ot M'^{c}} \right) \right]
    \label{eq:kappa-purity}
\end{align}
where we have used the expression for $\kappa$ in \eqref{qMinGlnd}.  Since $x_M
x_{M'} \geq 0$ for all $M,M'$, it suffices to show that the states given by
\eqref{qMinGlnSol-gen} maximize \emph{each} 
\begin{equation}
  \Tr \left[ \left( \om_{M} \ot \Upsilon^{\ot M^c} \right) \left( \om_{M'} \ot \Upsilon^{\ot M'^{c}} \right) \right]
\label{eq:newmax}
\end{equation}  
First, note that \eqref{eq:newmax} can be maximized on pure $\om_{M}, \om_{M'}$
due to convexity.  
Second, we can group the $n$ spins into $4$ systems, where systems $1,2,3,4$ 
include spins in $M \cap M'^{c}$, $M \cap M'$, $M^c \cap M'$, and 
$M^c \cap M'^{c}$ respectively.  
Third, the trace over system $4$ contributes a multiplicative constant to
\eqref{eq:newmax} so we can focus on the maximization on systems $1,2,3$.  
The conjecture can be proved if the following lemma holds.
\begin{lemma}
    Consider a tripartite system with parts $1,2,$ and $3$, and define a
    maximization problem, 
    \begin{align}
        \max_{|\mu\rangle, |\nu\rangle} \Tr  \left[ \left( |\mu\rangle
    \langle\mu|_{12} \otimes \Th_3 \right)  \left( Z_1 \otimes |\nu \rangle
    \langle \nu|_{23} \right) \right],
    \end{align}
    where the subscripts $1,2,3$ indicate which systems are acted on
    by the operators, $\Th,Z$ are positive semidefinite operators, and
    $|\mu\rangle, |\nu\rangle$ are unit vectors. The problem above attains its
    maximum at 
    \begin{align}|\mu\rangle_{12} = |\th \rangle_1 \otimes |\xi \rangle_2, \quad
    |\nu\rangle_{23} = |\xi \rangle_2 \otimes |\zt \rangle_3,
    \end{align}
    where $|\th \rangle$ is any eigenvector of $\Th$ corresponding to its
    maximum eigenvalue, $|\zt \rangle$ is any eigenvector of $Z$
    corresponding to its maximum eigenvalue, and $|\xi \rangle$ is any vector
    in system $2$. 
\end{lemma}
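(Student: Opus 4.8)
The plan is to use a partial trace to eliminate one of the two vectors, turning the objective into a Rayleigh quotient with tensor‑product structure, after which two elementary spectral inequalities finish the job.  First I would fix $|\nu\rangle$ and treat the objective as a quadratic form in $|\mu\rangle$: tracing out system $3$ collapses the second factor, giving
\begin{align}
  \Tr\left[\left(|\mu\rangle\langle\mu|_{12}\otimes\Th_3\right)\left(Z_1\otimes|\nu\rangle\langle\nu|_{23}\right)\right]
  = \langle\mu|\left(Z_1\otimes\sigma_2\right)|\mu\rangle,
  \label{eq:plan-reduce}
\end{align}
where $\sigma_2 \coloneqq \Tr_3\left[(I_2\otimes\Th_3)\,|\nu\rangle\langle\nu|_{23}\right]$ is a fixed operator on system $2$ depending only on $|\nu\rangle$.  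Using the identity $\Tr_3[(I_2\otimes\Th_3)X]=\Tr_3[(I_2\otimes\Th_3^{1/2})X(I_2\otimes\Th_3^{1/2})]$, one sees that $\sigma_2=\Tr_3[|\tilde\nu\rangle\langle\tilde\nu|]$ with $|\tilde\nu\rangle\coloneqq(I_2\otimes\Th_3^{1/2})|\nu\rangle$, so $\sigma_2$ is positive semidefinite with $\Tr\sigma_2=\langle\nu|(I_2\otimes\Th_3)|\nu\rangle$.

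Next I would chain three elementary facts about positive semidefinite operators.  Since $|\mu\rangle$ is a unit vector, the right‑hand side of~\eqref{eq:plan-reduce} is at most $\lambda_{\max}(Z_1\otimes\sigma_2)$; as both tensor factors are positive semidefinite, the spectrum of $Z_1\otimes\sigma_2$ is the product of the two spectra, whence $\lambda_{\max}(Z_1\otimes\sigma_2)=\lambda_{\max}(Z)\,\lambda_{\max}(\sigma_2)$.  Then $\lambda_{\max}(\sigma_2)\le\Tr\sigma_2=\langle\nu|(I_2\otimes\Th_3)|\nu\rangle$, and $I_2\otimes\Th_3\preceq\lambda_{\max}(\Th)\,I_{23}$ gives $\langle\nu|(I_2\otimes\Th_3)|\nu\rangle\le\lambda_{\max}(\Th)$.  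Altogether the objective is at most $\lambda_{\max}(Z)\,\lambda_{\max}(\Th)$ for all unit $|\mu\rangle,|\nu\rangle$.

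To see this value is attained, and precisely at the stated vectors, I would run the chain backwards.  For $|\nu\rangle=|\xi\rangle_2\otimes|\gamma\rangle_3$ with $|\gamma\rangle$ an eigenvector of $\Th$ for its largest eigenvalue, one gets $\sigma_2=\lambda_{\max}(\Th)\,|\xi\rangle\langle\xi|$; taking then $|\mu\rangle=|\eta\rangle_1\otimes|\xi\rangle_2$ with $|\eta\rangle$ an eigenvector of $Z$ for its largest eigenvalue and the \emph{same} $|\xi\rangle$ makes~\eqref{eq:plan-reduce} equal to $\lambda_{\max}(Z)\,\lambda_{\max}(\Th)$.  Conversely, equality in each link forces: $\lambda_{\max}(\sigma_2)=\Tr\sigma_2$, so $\sigma_2$ has rank one; $\Tr\sigma_2=\lambda_{\max}(\Th)$, so $|\nu\rangle$ lies in $(\text{system }2)\otimes(\text{top eigenspace of }\Th)$, whence $\sigma_2=\lambda_{\max}(\Th)\,\rho_2$ with $\rho_2$ the system‑$2$ reduced state of $|\nu\rangle$; combined with rank one, $\rho_2$ is pure, so $|\nu\rangle=|\xi\rangle_2\otimes|\gamma\rangle_3$ with $|\gamma\rangle$ top for $\Th$; and then $Z_1\otimes\sigma_2$ has top eigenspace $(\text{top eigenspace of }Z)\otimes\Span\{|\xi\rangle\}$, forcing $|\mu\rangle=|\eta\rangle_1\otimes|\xi\rangle_2$ with $|\eta\rangle$ top for $Z$.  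This is the family recorded in the statement (and in the application of interest, where $\Th=Z$, a top eigenvector of one operator is a top eigenvector of the other).

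The only step needing genuine insight is the opening move---recognizing that tracing out system $3$ turns the second factor into the tensor product $Z_1\otimes\sigma_2$, for it is this factorization that makes $\lambda_{\max}$ multiplicative and splits the estimate into a ``$Z$'' part and a ``$\Th$'' part; everything afterward is bookkeeping.  The one subtlety in the equality analysis is that degenerate top eigenspaces of $Z$ or $\Th$ leave $|\eta\rangle$ and $|\gamma\rangle$ (and always $|\xi\rangle$) free within those eigenspaces---exactly the ``any eigenvector'' and ``any vector $|\xi\rangle$'' freedom noted in the statement.
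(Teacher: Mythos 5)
Your proof is correct, and it takes a genuinely different route from the paper's. The paper works in eigenbases: it diagonalizes $Z$ on system $1$ and $\Th$ on system $3$, expands $|\mu\rangle=\sum_i a_i\,|i\rangle_1\otimes|\alpha_i\rangle_2$ and $|\nu\rangle=\sum_j b_j\,|\beta_j\rangle_2\otimes|j\rangle_3$, computes the objective explicitly as $\sum_{i,j}a_i^2\zeta_i\, b_j^2\theta_j\,|\langle\alpha_i|\beta_j\rangle|^2$, and bounds it by $\max_{i,j}\zeta_i\theta_j$ via convexity, a value the product ansatz attains. You instead eliminate $|\nu\rangle$ first by a partial trace over system $3$, rewriting the objective as $\langle\mu|\left(Z_1\otimes\sigma_2\right)|\mu\rangle$ with $\sigma_2=\Tr_3\!\left[(I_2\otimes\Th_3)\,|\nu\rangle\langle\nu|\right]\geq 0$, and then chain $\langle\mu|\left(Z_1\otimes\sigma_2\right)|\mu\rangle\leq\lambda_{\max}(Z)\,\lambda_{\max}(\sigma_2)\leq\lambda_{\max}(Z)\,\Tr\sigma_2\leq\lambda_{\max}(Z)\,\lambda_{\max}(\Th)$, which is the same number the paper writes as $\max_{i,j}\zeta_i\theta_j$; your reduction identity, each spectral step, and the attainment computation all check out. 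What your basis-free version buys is that no Schmidt-type expansions need to be chosen, and the equality analysis additionally characterizes all maximizers as having the stated product form, which is more than the lemma claims or the application needs; note only that this ``forcing'' direction presumes $\lambda_{\max}(Z),\lambda_{\max}(\Th)>0$ (if either operator vanishes the objective is identically zero and every pair of unit vectors is optimal, so the lemma itself still holds trivially). What the paper's version buys is brevity: the entire argument is one explicit sum plus a convexity remark.
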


\begin{proof}
We can take the computational bases of systems $1$ and $3$ to be the eigenbases
    of $\Theta$ and $Z$ respectively, so they are diagonal without loss of
    generality.
Let $\Theta = \sum_j \theta_j |j\rangle \langle j|$, $Z = \sum_i \zeta_i
    |i\rangle \langle i|$ be their eigen-decompositions.  We can always express
    $|\mu\rangle = \sum_i a_i |i\rangle_1 \otimes |\alpha_i\rangle_2$ and
    $|\nu\rangle = \sum_j b_j  |\beta_j \rangle_2 \otimes |j\rangle_3$ for some
    nonnegative amplitudes $a_i$, $b_j$ and unit vectors $|\alpha_i\rangle$ and
    $|\beta_j \rangle$ on system $2$, such that $\sum_i a_i^2 = \sum_j b_j^2 =
    1$.  Then, 
\begin{align}
& \Tr  \left[
\left( |\mu\rangle \langle\mu|_{12} \otimes \Theta_3 \right)  
\left( Z_1 \otimes |\nu \rangle \langle \nu|_{23} \right) \right]
\nonumber 
\\
& =  \Tr  \left[
\left( \sum_{i_1,i_2} a_{i_1} a_{i_2} |i_1\rangle \langle i_2|_1 
       \otimes |\alpha_{i_1}\rangle  \langle \alpha_{i_2}|_2 
       \otimes \sum_j \theta_j |j\rangle \langle j|_3 \right)  
\left( \sum_i \zeta_i |i\rangle \langle i|_1 \otimes 
       \sum_{j_1,j_2}  b_{j_1} b_{j_2} |\beta_{j_1} \rangle \langle \beta_{j_2}|_2 \otimes 
       |j_1 \rangle \langle j_2|_3
\right) \right]
\nonumber 
\\
& =
 \sum_i \sum_j a_i^2 \zeta_i \; b_j^2 \theta_j  
       | \langle \alpha_{i} | \beta_{j} \rangle|^2  
\nonumber 
\\
& \leq
 \max_{i,j} \zeta_i \theta_j  
       | \langle \alpha_{i} | \beta_{j} \rangle|^2  \leq ~ \max_{i,j} \zeta_i \theta_j \,.
\end{align}
In the last line above, we use a convexity argument noting that $\{a_i^2
    b_j^2\}_{i,j}$ is a probability distribution.  The proposed solution
    $|\mu\rangle_{12} = |\delta\rangle_1 \otimes |\xi \rangle_2$,
    $|\nu\rangle_{23} = |\xi \rangle_2 \otimes |\gamma \rangle_3$ attains the
    upper bound $\max_{i,j} \zeta_i \theta_j$ thus must be an optimal solution
    for the maximization problem, proving the lemma. 
\end{proof}

\section{Quantum capacity bounds}\label{sec:qcap-bounds}

\subsection{Upper bound on $\QC(\NC_s)$}

We now derive an analytic upper bound and a tighter numerical upper bound
on the quantum capacity of $\NC_s$. The analytic bound matches the SDP bound in
Prop.~16 of~\cite{WangXieEA17} where it appears without proof. We provide a
proof showing the SDP bound matches the well-known ``transposition
bound'', which states that $\QC(\BC) \leq \log \|\TC\circ\BC\|_\diamond$ for any channel $\BC$~\cite{HolevoWerner01}.
Here, $\TC\colon X\mapsto X^T$ denotes the
transposition map, taken with respect to the same basis used to define the
maximally entangled state $|\phi\rangle$ in \eqref{eq:maxEnt}.  For a
superoperator $\Psi\colon \hat{\HC}\to \hat{\HC'}$ the diamond norm
$\|\Psi\|_\diamond$ is defined as 
\begin{align}
    \|\Psi\|_\diamond = \sup \lbrace\| (\IC_{\HC}\otimes \Psi)(X)\|_1\colon X\in\hat{\HC}\otimes\hat{\HC}, \|X\|_1\leq 1 \rbrace.
\end{align}
The diamond norm of any linear superoperator can be computed by a semidefinite
program \cite{Watrous13}, which, for $\TC\circ\BC$, is given by
\begin{align}
    \begin{aligned}
    \|\TC\circ\BC\|_\diamond = \text{min.~} &\frac{1}{2}(\|Y_{a}\|_\infty + \|Z_{a}\|_\infty)\\
    \text{s.t.~} & Y_{ab}, Z_{ab}\geq 0\\
    &\begin{pmatrix}
    Y_{ab} & -\TC_b(J_{ab}^\BC)\\
    -\TC_b(J_{ab}^\BC) & Z_{ab}
    \end{pmatrix}\geq 0 
    \end{aligned}
\label{eq:diamond-norm-sdp}
\end{align}
where $J^\BC_{ab}$ denotes the unnormalized Choi-Jamio\l{}kowski operator of
$\BC$~\eqref{eq:cjOp}, and $\TC_b = \IC_{a}\otimes \TC$ denotes the partial
transpose with respect to system $b$, i.e., the transpose map $\TC$ acts on
$\hat{\HC}_b$.

We compare the transposition bound to another bound on $\QC(\BC)$ by Wang et
al.~\cite{WangFangEA18} defined in terms of a quantity $\Gamma(\BC)$, which is
the solution of the following semidefinite program:
\begin{align}
    \begin{aligned} 
    \Gamma(\BC) = \text{max.~} & \tr R_{ab}J^\BC_{ab}\\
    \text{s.t.~} & R_{ab},\rho_a \geq 0\\
    & \tr\rho_a = 1\\
    & -\rho_a\otimes \one_b \leq \TC_b(R_{ab}) \leq \rho_a\otimes \one_b
    \end{aligned}
\label{eq:gamma-sdp}
\end{align}
The two bounds on the quantum capacity of $\BC$ are related as follows:
\begin{proposition}[Holevo, Werner~\cite{HolevoWerner01}, Wang et
    al.~\cite{WangFangEA18}]\label{thm:qc-upper-bounds} For any quantum channel
    $\BC$,
	\begin{align}
	\QC(\BC) \leq \log \Gamma(\BC) \leq \log \|\TC\circ\BC\|_\diamond,
	\end{align}
	where $\Gamma(\BC)$ is defined in \eqref{eq:gamma-sdp}.
\end{proposition}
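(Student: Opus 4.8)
The plan is to treat the two inequalities separately, since they are of quite different character. The right-hand inequality $\log\Gamma(\BC)\le\log\|\TC\circ\BC\|_\diamond$ is a purely semidefinite-programming comparison, whereas the left-hand inequality $\QC(\BC)\le\log\Gamma(\BC)$ is an operational converse obtained by combining a one-shot estimate with a regularization step; this latter inequality is the substantive part, and I would follow \cite{WangFangEA18}.

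For $\QC(\BC)\le\log\Gamma(\BC)$ I would proceed in three steps. \textbf{(i) One-shot converse.} Using a feasible point of the SDP dual of \eqref{eq:gamma-sdp} as a certificate, one shows that any scheme transmitting an $M$-dimensional quantum system through a single use of a channel $\mathcal{K}$ with entanglement fidelity at least $1-\epsilon$ must satisfy $\log M\le\log\Gamma(\mathcal{K})-\log(1-\epsilon)$; the argument is insensitive to allowing PPT-preserving assistance, which is precisely the feature the partial-transpose sandwich constraint in \eqref{eq:gamma-sdp} is designed to exploit. \textbf{(ii) Submultiplicativity} $\Gamma(\mathcal{K}_1\otimes\mathcal{K}_2)\le\Gamma(\mathcal{K}_1)\Gamma(\mathcal{K}_2)$. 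Here I would pass to the dual of \eqref{eq:gamma-sdp}, which minimizes $y$ over $V_{ab},W_{ab}\ge0$ subject to $\TC_b(V_{ab}-W_{ab})\ge J^{\mathcal{K}}_{ab}$ and $\tr_b(V_{ab}+W_{ab})\le y\,\one_a$. Given dual-feasible triples $(V_i,W_i,y_i)$ for $\mathcal{K}_1$ and $\mathcal{K}_2$, the triple $V=V_1\otimes V_2+W_1\otimes W_2$, $W=V_1\otimes W_2+W_1\otimes V_2$, $y=y_1y_2$ is dual-feasible for $\mathcal{K}_1\otimes\mathcal{K}_2$: indeed $V-W=(V_1-W_1)\otimes(V_2-W_2)$, so $\TC_{b_1b_2}(V-W)$ factors into a tensor product of two operators each $\ge J_i\ge0$, and $A_1\otimes A_2\ge J_1\otimes J_2$ whenever $A_i\ge J_i\ge0$ (write $A_1\otimes A_2-J_1\otimes J_2=(A_1-J_1)\otimes A_2+J_1\otimes(A_2-J_2)$), while $V+W=(V_1+W_1)\otimes(V_2+W_2)$ gives $\tr_{b_1b_2}(V+W)\le y_1y_2\,\one$; the objective multiplies because $J^{\mathcal{K}_1\otimes\mathcal{K}_2}=J^{\mathcal{K}_1}\otimes J^{\mathcal{K}_2}$ up to a reordering of systems. \textbf{(iii) Regularization.} Apply (i) to $\mathcal{K}=\BC^{\otimes n}$, divide by $n$, let $n\to\infty$ and then $\epsilon\to0$, and invoke (ii) to bound $\frac{1}{n}\log\Gamma(\BC^{\otimes n})$ by $\log\Gamma(\BC)$; together with the coding theorem \eqref{eq:chanCap} and $\QC(\BC)\le\QC_{\mathrm{PPT}}(\BC)$ this yields $\QC(\BC)\le\log\Gamma(\BC)$. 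I expect step (i) — turning a high-fidelity code into a feasible dual certificate with the correct size-versus-error trade-off — to be the main obstacle; steps (ii) and (iii) are routine once the dual SDP is written down.

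For $\log\Gamma(\BC)\le\log\|\TC\circ\BC\|_\diamond$ it suffices to prove $\Gamma(\BC)\le\|\TC\circ\BC\|_\diamond$. I would compute the SDP dual of the diamond-norm program \eqref{eq:diamond-norm-sdp} (Slater's condition holds, e.g.\ $Y=Z=t\,\one$ with $t$ large, so there is no duality gap). Since the Choi operator $\TC_b(J^\BC_{ab})$ is Hermitian, that dual can be written as the maximization of $\tr(J^\BC_{ab}R_{ab})$ over \emph{Hermitian} $R_{ab}$ and $\rho_a\ge0$ with $\tr\rho_a=1$ subject to $-\rho_a\otimes\one_b\le\TC_b(R_{ab})\le\rho_a\otimes\one_b$; the only manipulations needed are the block-matrix equivalence $\bigl(\begin{smallmatrix}A & B\\ B & A\end{smallmatrix}\bigr)\ge0\iff A\pm B\ge0$ for Hermitian $A,B$, together with the substitution $R_{ab}\leftrightarrow\TC_b(R_{ab})$, valid because $\TC_b$ is a self-adjoint involution (cf.\ \cite{Watrous13,HolevoWerner01}). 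This is \emph{exactly} the program \eqref{eq:gamma-sdp} with the constraint $R_{ab}\ge0$ removed; imposing that extra constraint can only decrease the optimal value of a maximization, so $\Gamma(\BC)\le\|\TC\circ\BC\|_\diamond$. (The inequality is known to be strict in general, so nothing further is needed here; the only delicate point is the duality bookkeeping for \eqref{eq:diamond-norm-sdp}.)
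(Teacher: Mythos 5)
First, note that the paper itself does not prove Proposition~\ref{thm:qc-upper-bounds}: it is imported verbatim from \cite{HolevoWerner01,WangFangEA18}, so there is no in-paper argument to compare yours against; your proposal has to be judged as a reconstruction of the cited proofs, and in outline it follows exactly their route. The second inequality is in good shape: your dual of \eqref{eq:gamma-sdp} (minimize $y$ over $V,W\geq 0$ with $\TC_b(V-W)\geq J^{\BC}_{ab}$, $\tr_b(V+W)\leq y\one_a$) is correct, and your comparison with \eqref{eq:diamond-norm-sdp} works. In fact you need less than the exact identification of the diamond-norm dual that you claim: any $\Gamma$-feasible pair $(R_{ab},\rho_a)$ with $R_{ab}\geq 0$ gives, via $X=\TC_b(R_{ab})$ and $\rho_0=\rho_1=\rho_a$, a feasible point of Watrous' dual maximization with value $\tr(J^{\BC}_{ab}R_{ab})$, and weak duality against \eqref{eq:diamond-norm-sdp} already yields $\Gamma(\BC)\leq\|\TC\circ\BC\|_\diamond$; the ``WLOG symmetric'' step ($\rho_0=\rho_1$, $X$ Hermitian) needed for your stated equality of programs is only required in the direction you do not use. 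Your tensor-product construction $V=V_1\otimes V_2+W_1\otimes W_2$, $W=V_1\otimes W_2+W_1\otimes V_2$ for submultiplicativity of $\Gamma$ is also correct, including the elementary operator inequality you invoke.

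The genuine gap is step (i), and it is not a peripheral one: the one-shot converse is the entire content of the Wang--Fang--Duan bound, and in your write-up it appears only as an assertion (``one shows that any scheme \dots must satisfy $\log M\leq\log\Gamma(\mathcal{K})-\log(1-\epsilon)$''), which you yourself flag as the main obstacle. What has to be supplied is the argument that a code of size $M$ with entanglement fidelity $1-\epsilon$ through $\mathcal{K}$ --- where encoder and decoder may be PPT-preserving bipartite operations, so that unassisted codes are included --- produces a feasible pair for \eqref{eq:gamma-sdp} with objective value at least $(1-\epsilon)M$. Concretely, one writes the optimal fidelity of PPT-assisted codes of size $M$ as an SDP in the Choi operator $\Pi_{ab}$ of the effective (decoder$\,\circ\,$channel$\,\circ\,$encoder) operation, uses the PPT-preserving property of the coding operations to show that $R_{ab}\propto$ (a suitably normalized reduction of) $\Pi_{ab}$ obeys the sandwich constraint $-\rho_a\otimes\one_b\leq\TC_b(R_{ab})\leq\rho_a\otimes\one_b$, and notes that the fidelity with the rank-$M$ maximally entangled state equals $\tfrac{1}{M}\tr(J^{\mathcal{K}}_{ab}R_{ab})$ up to normalization; feasibility then gives $(1-\epsilon)M\leq\Gamma(\mathcal{K})$. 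Without this (or an explicit appeal to the corresponding theorem of \cite{WangFangEA18}), the chain ``one-shot bound $+$ submultiplicativity $+$ regularization'' does not yet constitute a proof of $\QC(\BC)\leq\log\Gamma(\BC)$; steps (ii) and (iii), as you say, are routine once (i) is in place. A last small point: in step (iii) the converse should be run against the operational definition of $\QC$ (achievable rates with vanishing error), not against the regularized coherent-information formula \eqref{eq:chanCap}, which is the achievability side.
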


The two upper bounds mentioned above yield an analytical upper bound on the
quantum capacity of the channel $\NC_s$. 
Due to unitary
equivalence of the $\NC_s$ channel to the one in~\cite{WangXieEA17}~(see
discussion below eq.~\eqref{isoDef1}), the SDP bound $\log \Gamma(\NC_s)$
matches the one stated without proof in Prop.~16 of~\cite{WangXieEA17}.

\begin{theorem}
	\label{thm:Q}
	For $s\in[0,1/2]$ we have $\log\Gamma(\NC_s) = \log \|\TC\circ\NC_s\|_\diamond = \log(1+\sqrt{1-s})$, and hence
	\begin{align}
	\QC(\NC_s) \leq \log(1+\sqrt{1-s}).
	\end{align}
\end{theorem}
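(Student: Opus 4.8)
The plan is to evaluate the two semidefinite programs \eqref{eq:diamond-norm-sdp} and \eqref{eq:gamma-sdp} at $\BC=\NC_s$ by exhibiting explicit feasible points whose objective values are both $1+\sqrt{1-s}$, and then to sandwich using Proposition~\ref{thm:qc-upper-bounds}. I would start by recording the Choi operator: writing $\ket{ij}=\ket{i}_a\ket{j}_b$ for the standard product basis, a short computation from the isometry $F_s$ of \eqref{isoDef1} gives
\[
J^{\NC_s}_{ab}=\dya{g_1}+\dya{g_2},\qquad \ket{g_1}=\sqrt{s}\,\ket{00}+\ket{12},\quad \ket{g_2}=\sqrt{1-s}\,\ket{01}+\ket{22},
\]
so $J^{\NC_s}_{ab}$ is the sum of two orthogonal rank-one blocks. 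Partial transposition on $b$ turns these into off-diagonal couplings: $\TC_b(J^{\NC_s}_{ab})$ has diagonal part $s\,\dya{00}+(1-s)\,\dya{01}+\dya{12}+\dya{22}$ together with a single $3\times3$ block on $\Span\{\ket{10},\ket{02},\ket{21}\}$ with entries $\sqrt{s}$ (coupling $\ket{10},\ket{02}$) and $\sqrt{1-s}$ (coupling $\ket{02},\ket{21}$). Since $s+(1-s)=1$, this $3\times3$ block acts on the two-dimensional space $\Span\{\ket{02},\ket{r}\}$, with $\ket{r}:=\sqrt{s}\,\ket{10}+\sqrt{1-s}\,\ket{21}$, simply as the flip $\ket{02}\leftrightarrow\ket{r}$; I would use this reduction throughout.

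For the bound $\Gamma(\NC_s)\ge 1+\sqrt{1-s}$ I would feed \eqref{eq:gamma-sdp} the point $\rho_a=\tfrac12(\dya{0}+\dya{2})$ and $R_{ab}=\tfrac12\big(\dya{00}+\dya{01}+\dya{22}+\dyad{01}{22}+\dyad{22}{01}\big)$. One checks $R_{ab}\ge0$ (its only nontrivial block is $\tfrac12\mat{1&1\\1&1}$); that $\TC_b(R_{ab})=\tfrac12(\dya{00}+\dya{01}+\dya{22})+\tfrac12(\dyad{02}{21}+\dyad{21}{02})$ is supported on $\Span\{\ket{0},\ket{2}\}_a\ot\HC_b$ with eigenvalues in $[-\tfrac12,\tfrac12]$, hence $-\rho_a\ot\one_b\le \TC_b(R_{ab})\le\rho_a\ot\one_b$; and that $\tr R_{ab}J^{\NC_s}_{ab}=\langle g_1|R_{ab}|g_1\rangle+\langle g_2|R_{ab}|g_2\rangle=\tfrac{s}{2}+\big(\tfrac{1-s}{2}+\tfrac12+\sqrt{1-s}\big)=1+\sqrt{1-s}$. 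The guiding idea is that, for this $\rho_a$, the objective equals $s\langle00|R_{ab}|00\rangle+(1-s)\langle01|R_{ab}|01\rangle+\langle22|R_{ab}|22\rangle+2\sqrt{1-s}\,\Re\langle01|R_{ab}|22\rangle$, and is maximized by saturating the bound $\tfrac12$ on every entry involved.

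For the bound $\|\TC\circ\NC_s\|_\diamond\le 1+\sqrt{1-s}$ I would feed \eqref{eq:diamond-norm-sdp} the point
\[
Y_{ab}=Z_{ab}=s\,\dya{00}+(1-s)\,\dya{01}+\Big(1+\tfrac{1-2s}{\sqrt{1-s}}\Big)\dya{12}+\dya{22}+\sqrt{1-s}\,\dya{02}+\tfrac{1}{\sqrt{1-s}}\,\dya{r},
\]
a nonnegative combination of mutually orthogonal rank-one projectors (all coefficients are $\ge0$ since $s\le\tfrac12$), so $Y_{ab}\ge0$. With $\TC_b(J^{\NC_s}_{ab})$ in the reduced form above, the block matrix $\mat{Y_{ab}&-\TC_b(J^{\NC_s}_{ab})\\-\TC_b(J^{\NC_s}_{ab})&Z_{ab}}$ splits into scalar $2\times2$ blocks $\mat{y&-\lambda\\-\lambda&y}$ over $\ket{00},\ket{01},\ket{12},\ket{22}$ (each PSD because $y\ge|\lambda|$) and, on $\Span\{\ket{02},\ket{r}\}$, into $\mat{Y_B&-X\\-X&Y_B}$ with $Y_B=\mathrm{diag}\big(\sqrt{1-s},\tfrac1{\sqrt{1-s}}\big)$ and $X$ the flip, which reduces further to two PSD $2\times2$ blocks of determinant $0$; so the matrix constraint holds. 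Finally a direct partial trace gives $\Tr_b Y_{ab}=(1+\sqrt{1-s})\,\one_a$ (the contributions to each of $\dya{0}_a,\dya{1}_a,\dya{2}_a$ conspire to $1+\sqrt{1-s}$), so the objective $\tfrac12(\|\Tr_b Y_{ab}\|_\infty+\|\Tr_b Z_{ab}\|_\infty)$ equals $1+\sqrt{1-s}$.

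Combining with Proposition~\ref{thm:qc-upper-bounds} then gives $1+\sqrt{1-s}\le\Gamma(\NC_s)\le\|\TC\circ\NC_s\|_\diamond\le1+\sqrt{1-s}$, forcing equalities throughout, whence $\log\Gamma(\NC_s)=\log\|\TC\circ\NC_s\|_\diamond=\log(1+\sqrt{1-s})$ and $\QC(\NC_s)\le\log\Gamma(\NC_s)=\log(1+\sqrt{1-s})$. The substantive step is producing the two feasible points; the heuristic that makes them guessable is that $J^{\NC_s}_{ab}$ has rank $2$ and that the transposition bound effectively only ``sees'' the degradable $\{\ket{0},\ket{2}\}$-subchannel, on whose maximally entangled state $\ket{\psi}=\tfrac1{\sqrt2}(\ket{0}\ket{0}+\ket{1}\ket{2})$ one already computes $\|(\IC\ot\TC\circ\NC_s)(\dya{\psi})\|_1=1+\sqrt{1-s}$ as a cheap check of the matching lower bound. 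I expect the main obstacle to be precisely this guessing, together with keeping the transposition-basis conventions consistent across \eqref{eq:diamond-norm-sdp}, \eqref{eq:gamma-sdp}, and the definition of $\ket{\phi}$; once the points are in hand, every verification is a $2\times2$ or $3\times3$ calculation.
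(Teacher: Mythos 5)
Your proposal is correct and takes essentially the same route as the paper: the identical sandwich via Proposition~\ref{thm:qc-upper-bounds}, the very same feasible pair $(R_{ab},\rho_a)$ for the SDP \eqref{eq:gamma-sdp}, and a feasible $Y_{ab}=Z_{ab}$ for \eqref{eq:diamond-norm-sdp} whose rank-one piece $\tfrac{1}{\sqrt{1-s}}\dya{r}$ is exactly the paper's $[\phi]_{ab}$, differing only in that you retune the coefficient of $[12]_{ab}$ so that $\tr_b Y_{ab}=(1+\sqrt{1-s})\one_a$, where the paper keeps coefficient $1$ and instead uses $s\le 1/2$ to bound $\|\tr_b Y_{ab}\|_\infty$. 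All feasibility checks and the final conclusion match the paper's proof.
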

\begin{proof}
    The theorem is proved by asserting that \begin{align} 1+\sqrt{1-s}\leq
    \Gamma(\NC_s)\leq \|\TC\circ\NC_s\|_\diamond \leq
    1+\sqrt{1-s},\label{eq:Qcap-bounds} \end{align} from which the claim
    follows via Proposition~\ref{thm:qc-upper-bounds}.
	
    To prove the first inequality in \eqref{eq:Qcap-bounds}, we pick the
    following operators $(R_{ab},\rho_a)$ in the SDP \eqref{eq:gamma-sdp} for
    $\Gamma(\NC_s)$:
    \begin{align}
        \rho_a &= \frac{1}{2} ( [0]_a + [2]_a)\\
        R_{ab} &= \frac{1}{2}([00]_{ab} + [01]_{ab} + |01\rangle\langle 22|_{ab} + |22\rangle\langle 01|_{ab} + [22]_{ab}).
	\end{align}
    It is easy to check that $R_{ab},\rho_a\geq 0$ and $\tr\rho_a=1$, and that
    $\rho_a\otimes \one_b \pm \TC_b(R_{ab})\geq 0$, which ensures that  the
    pair $(R_{ab},\rho_a)$ is indeed feasible in \eqref{eq:gamma-sdp}.  To
    compute the objective value $\tr R_{ab}J^{s}_{ab}$ with $J^s_{ab}\equiv
    J^{\NC_s}_{ab}$, observe that the %Choi state Choi-Jamio\l{}kowski operator
    $J^s_{ab}$ has the form
    \begin{align}
	        J^s_{ab} &= [\psi_1]_{ab} + [\psi_2]_{ab}c\label{eq:Ns-choi}\\
        |\psi_1\rangle_{ab} &= \sqrt{s} |00\rangle_{ab} + |12\rangle_{ab}\\
        |\psi_2\rangle_{ab} &= \sqrt{1-s} |01\rangle + |22\rangle.
	\end{align}
    We have $\langle \psi_1|R_{ab}|\psi_1\rangle_{ab} = s/2$ and $\langle
    \psi_2|R_{ab}|\psi_2\rangle_{ab} = 1 + \sqrt{1-s}-s/2$, and hence
    \begin{align}
    \Gamma(\NC_s) \geq \tr R_{ab}J^s_{ab} = \langle
        \psi_1|R_{ab}|\psi_1\rangle_{ab} + \langle
        \psi_2|R_{ab}|\psi_2\rangle_{ab} = 1 + \sqrt{1-s}.
	\end{align}

    To prove the third inequality in \eqref{eq:Qcap-bounds},
    $\|\TC\circ\NC_s\|_\diamond \leq 1+\sqrt{1-s}$, we again pick feasible
    operators in the SDP \eqref{eq:diamond-norm-sdp} for
    $\|\TC\circ\NC_s\|_\diamond$:
    \begin{align}
        Y_{ab} = Z_{ab} &= s [00]_{ab} + (1-s) [01]_{ab} + \sqrt{1-s}[02]_{ab}
        + [12]_{ab} + [22]_{ab} + [\phi]_{ab} \,,\\ |\phi\rangle_{ab} &=
        \sqrt[4]{\frac{s^2}{1-s}} |10\rangle_{ab} + \sqrt[4]{1-s}
        |21\rangle_{ab} \,.
	\end{align}
    Evidently, $Y_{ab}\geq 0$.  Furthermore, the operator 
    $\begin{pmatrix} 
        Y_{ab} & -\TC_b(J^s_{ab})\\ 
        -\TC_b(J^s_{ab}) & Y_{ab}
    \end{pmatrix}$
	is unitarily equivalent to the following operator in block-diagonal form:
    \begin{align}
	\begin{pmatrix} 
        Y_{ab} & -\TC_b(J^s_{ab})\\ 
        -\TC_b(J^s_{ab}) & Y_{ab}
        \end{pmatrix}  
        \sim s M 	\oplus 	(1-s) M \oplus M \otimes \one_2 \oplus
        \frac{s}{\sqrt{1-s}} [\phi_1] \oplus \sqrt{1-s} [\phi_2] 
        \label{eq:block-diagonal-op}
	\end{align}
    with the matrix
    $M=\left(\begin{smallmatrix}\phantom{-}1&-1\\-1&\phantom{-}1\end{smallmatrix}\right)$
        and vectors
    \begin{align}
        |\phi_1\rangle &= -\sqrt{\frac{1-s}{s}} |0\rangle + |1\rangle + \sqrt{\frac{1-s}{s}} |2\rangle, \qquad 
        |\phi_2\rangle = \sqrt{\frac{s}{1-s}} |0\rangle + |1\rangle -|2\rangle.
	\end{align}
    As the operator on the right-hand side of \eqref{eq:block-diagonal-op} is
    manifestly positive semidefinite, the same holds for 
    \begin{align}
    \begin{pmatrix}
    Y_{ab} & -\TC_b(J^s_{ab})\\ 
    -\TC_b(J^s_{ab}) & Y_{ab}
    \end{pmatrix},
    \end{align}
    showing that $Y_{ab}$ and $Z_{ab}=Y_{ab}$ are feasible in
    \eqref{eq:diamond-norm-sdp}.  The marginal $Y_a=\tr_b Y_{ab}$ is diagonal
    with eigenvalues $1+\sqrt{1-s}$ (of multiplicity 2) and $1+s/\sqrt{1-s}$.
    Since $\sqrt{1-s}\geq s/\sqrt{1-s}$ for $s\in[0,1/2]$, we conclude that
    $\|Y_{a}\|_\infty = 1+\sqrt{1-s}$.  Therefore, the SDP
    \eqref{eq:diamond-norm-sdp} for $\|\TC\circ\NC_s\|_\diamond$ has value at
    most $\|Y_a\|_\infty = 1+\sqrt{1-s}$, which concludes the proof of the
    theorem.
\end{proof}

While the bound $\log\Gamma(\BC)$ can be strictly tighter than the
transposition bound $\log\|\TC\circ\BC\|_\diamond$ for certain channels $\BC$
\cite{WangFangEA18}, Theorem~\ref{thm:qc-upper-bounds} shows that the two
bounds in fact coincide for $\NC_s$.  The SDP upper bound $\log\Gamma(\BC)$ was
recently improved by Fawzi and Fawzi~\cite{FangFawzi19}, and evaluating the latter bound
(which can again be computed by semidefinite programming) yields an even
tighter bound on $\QC(\NC_s)$.  The two bounds are compared in
Fig.~\ref{fig:qcap-bounds-comparison}. 

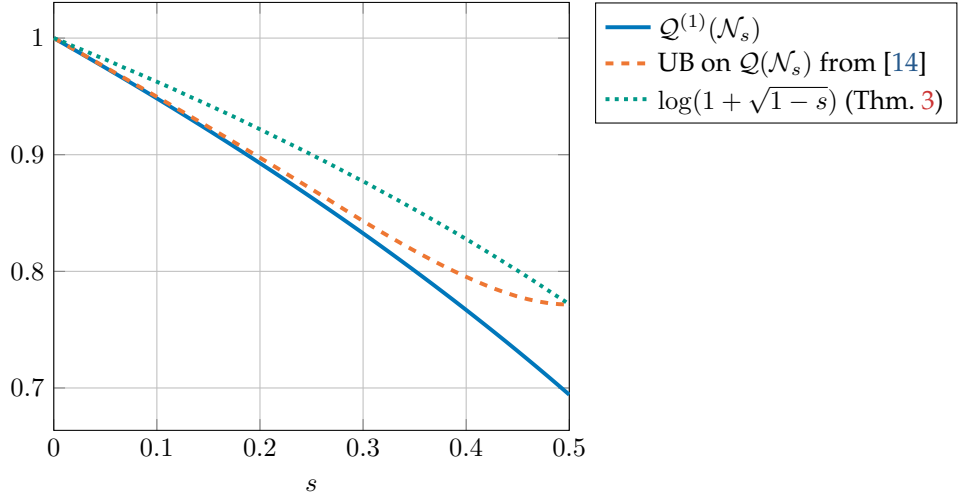
\begin{figure}
	\centering
	\begin{tikzpicture}
	\begin{axis}[
	xlabel=$s$,
	xmin = 0,
	xmax = 0.5,
	scale=1,
	every axis plot/.append style={line width=1.5pt},
	legend cell align={left},
	legend columns = 1,
	legend style={at = {(1.05,1)},anchor = north west,/tikz/every even column/.append style={column sep=.65em}},
	grid = both,
	]
	\addplot[mark=none,color=plotblue] table[x=s,y=ci] {capacities.dat};
	\addplot[mark=none,dashed,color=plotorange] table[x=s,y=ub] {capacities.dat};
	\addplot[dotted,domain=0:1,samples=100,color=plotgreen] {ln(1+(1-x)^(.5))/ln(2)};
	\legend{$\QC^{(1)}(\NC_s)$,UB on $\QC(\NC_s)$ from \cite{FangFawzi19},$\log(1+\sqrt{1-s})$ (Thm.~\ref{thm:Q})};
	\end{axis}
	\end{tikzpicture}
	\caption{Lower and upper bounds on the quantum capacity of $\NC_s$. Plotted are the coherent information $\QC^{(1)}(\NC_s)$ (solid blue), the upper bound (UB) $\hat{R}_\alpha(\NC_s)$ on $\QC(\NC_s)$ from \cite{FangFawzi19} with $\alpha = 1+2^{-5}$ (dashed orange), and the UB from Theorem~\ref{thm:Q} (dotted green).}
	\label{fig:qcap-bounds-comparison}
\end{figure}

\subsection{Upper bound on $\QC(\MC_d)$}

We now derive an analytical upper bound on the quantum capacity of $\MC_d$,
which we introduced in Section~\ref{sec:MdChan} as a generalization of
$\NC_{1/2}$ to arbitrary dimension $d$ of the input and output Hilbert spaces.
As a reminder, a channel isometry for $\MC_d$ is given by $G\colon \HC_a\to
\HC_b\otimes\HC_c$ with $\HC_a\cong\HC_b\cong \mathbb{C}^d$ and
$\HC_c\cong\mathbb{C}^{d-1}$, defined via the following action on the
computational basis $\lbrace |j\rangle_a\rbrace_{j=0}^{d-1}$ of $\HC_a$:
\begin{align}
    \begin{aligned}
    G\colon |0\rangle_a &\longmapsto \frac{1}{\sqrt{d-1}} \sum_{j=0}^{d-2} |j\rangle_b|j\rangle_c\\
    |j\rangle_a &\longmapsto |d-1\rangle_b |j-1\rangle_c\quad\text{for }j=1,\dots,d-1.
    \end{aligned}
    \label{eq:Md}
\end{align}
The Choi-Jamio\l{}kowski operator $J_{ab}^{\MC_d}$ of $\MC_d$ is given as
follows:

\begin{align}
    J_{ab}^{\MC_d} &= \sum_{j=0}^{d-2} \left(\frac{1}{d-1} [0j] + [j+1,d-1] +
    \frac{1}{\sqrt{d-1}}\left(|0,j\rangle\langle j+1,d-1| +
    |j+1,d-1\rangle\langle 0,j|\right)\right),\label{eq:Md-Choi}
\end{align}
where we used the notation $[\psi]\equiv |\psi\rangle\langle\psi|$.

Using Proposition~\ref{thm:qc-upper-bounds} and the SDP
\eqref{eq:diamond-norm-sdp}, we now derive an upper bound on the quantum
capacity of the channel $\MC_d$.

\begin{theorem}
	\label{thm:Q-Md}
	For any $d\geq 2$, 
	\begin{align}
	\QC(\MC_d) \leq \log\left(1+\frac{1}{\sqrt{d-1}}\right).
	\end{align}
	In particular, $\QC(\MC_d)\to 0$ as $d\to\infty$.
\end{theorem}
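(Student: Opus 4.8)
The plan is to mirror the proof of Theorem~\ref{thm:Q}. By Proposition~\ref{thm:qc-upper-bounds} it suffices to show $\|\TC\circ\MC_d\|_\diamond\le 1+\tfrac{1}{\sqrt{d-1}}$, and for this I would exhibit a feasible point of the semidefinite program~\eqref{eq:diamond-norm-sdp} computing $\|\TC\circ\MC_d\|_\diamond$ whose objective value equals $1+\tfrac{1}{\sqrt{d-1}}$. Since $\MC_3=\NC_{1/2}$ and the claimed bound specializes at $d=3$ to the $s=\tfrac12$ instance of Theorem~\ref{thm:Q}, the feasible operators should be the natural $d$-dimensional lift of the ones used there, and I would construct them by that analogy.

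The first step is to compute the partial transpose of the Choi operator~\eqref{eq:Md-Choi}. Transposing the $b$-index sends each cross term $|0,j\rangle\langle j{+}1,d{-}1|$ to $|0,d{-}1\rangle\langle j{+}1,j|$, so
\begin{align*}
\TC_b\big(J^{\MC_d}_{ab}\big) &= D+C, \qquad\text{where}\\
D &= \sum_{j=0}^{d-2}\tfrac{1}{d-1}\,[0,j]+\sum_{k=1}^{d-1}[k,d{-}1],\\
C &= \tfrac{1}{\sqrt{d-1}}\sum_{k=1}^{d-1}\big(|0,d{-}1\rangle\langle k,k{-}1|+|k,k{-}1\rangle\langle 0,d{-}1|\big).
\end{align*}
The structural fact that drives the argument is that $D$ and $C$ are supported on orthogonal subspaces of $\HC_a\otimes\HC_b$: $D$ lives on $\Span\{|0,j\rangle: 0\le j\le d{-}2\}\oplus\Span\{|k,d{-}1\rangle: 1\le k\le d{-}1\}$, while $C$ is a uniformly weighted ``star'' operator on $\Span\{|0,d{-}1\rangle\}\oplus\Span\{|k,k{-}1\rangle: 1\le k\le d{-}1\}$, and no computational basis vector appears in both lists. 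In particular $C$ has eigenvalues $\pm1$ and $0$, and $\tr_b D=I_a$.

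Next comes the ansatz. I would take $Y_{ab}=Z_{ab}=D+\tfrac{1}{\sqrt{d-1}}\,[0,d{-}1]+[\phi]$, where $|\phi\rangle=(d{-}1)^{-1/4}\sum_{k=1}^{d-1}|k,k{-}1\rangle$; this operator is manifestly positive semidefinite. Conjugating the $2\times2$ operator in~\eqref{eq:diamond-norm-sdp} by the ``Hadamard'' unitary $\tfrac{1}{\sqrt2}\left(\begin{smallmatrix}\one & \one\\ \one & -\one\end{smallmatrix}\right)$ reduces feasibility (with $Y_{ab}=Z_{ab}$) to the pair of conditions $-Y_{ab}\preceq\TC_b\big(J^{\MC_d}_{ab}\big)\preceq Y_{ab}$. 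On the subspace supporting $D$ the correction $\tfrac{1}{\sqrt{d-1}}[0,d{-}1]+[\phi]$ vanishes, $Y_{ab}$ restricts to $D\ge0$, and both conditions hold there (the block matrix splits into $2\times2$ blocks $\tfrac{1}{d-1}M$ and $M$ with $M=\left(\begin{smallmatrix}\phantom{-}1 & -1\\ -1 & \phantom{-}1\end{smallmatrix}\right)$, all positive semidefinite). On the subspace supporting $C$, in the ordered basis $\big(|0,d{-}1\rangle,|1,0\rangle,\dots,|d{-}1,d{-}2\rangle\big)$ one has $Y_{ab}\mp C=\left(\begin{smallmatrix}\gamma & \mp\gamma\,\mathbf 1^{\mathsf T}\\ \mp\gamma\,\mathbf 1 & c^2\,\mathbf 1\mathbf 1^{\mathsf T}\end{smallmatrix}\right)$, where $\gamma=c^2=\tfrac{1}{\sqrt{d-1}}$ and $\mathbf 1\in\Rbb^{d-1}$ is the all-ones vector; the Schur-complement test for positive semidefiniteness is precisely $\gamma\ge\gamma^2/c^2=\tfrac{1}{\sqrt{d-1}}$, which holds with equality. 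Hence $(Y_{ab},Z_{ab})$ is feasible.

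Finally I would compute the objective value. Using $\tr_b D=I_a$, $\tr_b[0,d{-}1]=[0]_a$, and $\tr_b[\phi]=\tfrac{1}{\sqrt{d-1}}\sum_{k=1}^{d-1}[k]_a$, the contributions combine to $Y_a=\tr_b Y_{ab}=\big(1+\tfrac{1}{\sqrt{d-1}}\big)I_a$, so $\|Y_a\|_\infty=\|Z_a\|_\infty=1+\tfrac{1}{\sqrt{d-1}}$ and~\eqref{eq:diamond-norm-sdp} has value at most $1+\tfrac{1}{\sqrt{d-1}}$. Combined with Proposition~\ref{thm:qc-upper-bounds} this yields $\QC(\MC_d)\le\log\big(1+\tfrac{1}{\sqrt{d-1}}\big)$, and $\QC(\MC_d)\to0$ as $d\to\infty$ is immediate. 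The only step I expect to require real thought is the ansatz: one must ``spread'' the extra unit of identity mass onto the star subspace as $\tfrac{1}{\sqrt{d-1}}[0,d{-}1]+[\phi]$ rather than using $|C|$ there. The operator $|C|$ is the obvious choice dominating $\pm C$, but its partial trace contributes $[0]_a$ with coefficient $1$, which would only give the weaker bound $\QC(\MC_d)\le1$; the asymmetric choice above is what makes the marginal an exact scalar multiple of the identity. Everything after the ansatz is elementary linear algebra.
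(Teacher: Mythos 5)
Your proposal is correct and takes essentially the same approach as the paper: the same transposition-bound SDP via Proposition~\ref{thm:qc-upper-bounds}, the identical feasible point $Y_{ab}=Z_{ab}$ (your $|\phi\rangle$ is exactly the paper's $|\Xi\rangle$, and your $D+\tfrac{1}{\sqrt{d-1}}[0,d{-}1]+[\phi]$ is the paper's ansatz \eqref{eq:Y-ansatz}), and the same marginal computation $\tr_b Y_{ab}=\bigl(1+\tfrac{1}{\sqrt{d-1}}\bigr)\one_a$. The only difference is cosmetic: the paper certifies feasibility through the Schur-complement identity $Y_{ab}=\TC_b(J_{ab}^{\MC_d})\,Y_{ab}^{-1}\,\TC_b(J_{ab}^{\MC_d})$, whereas you block-diagonalize the $2\times 2$ constraint and verify $\pm\TC_b(J_{ab}^{\MC_d})\leq Y_{ab}$ directly on the two orthogonal supports, which is an equally valid (and somewhat more explicit) check.
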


\begin{proof}
    We will prove this theorem by constructing feasible operators $Y_{ab} =
    Z_{ab}$ in the SDP \eqref{eq:diamond-norm-sdp} for
    $\|\TC\circ\MC_d\|_\diamond$ satisfying $\|\tr_b Y_{ab}\|_\infty = 1 +
    \frac{1}{\sqrt{d-1}}$, from which the claim follows by invoking
    Proposition~\ref{thm:qc-upper-bounds}.
	
    Consider the following ansatz for $Y_{ab}$:
    \begin{align}
    Y_{ab} &= \frac{1}{d-1} \sum_{j=0}^{d-2} [0]\otimes [j] +
        \frac{1}{\sqrt{d-1}} [0] \otimes [d-1] + \sum_{j=1}^{d-1} [j] \otimes
        [d-1] + [\Xi],\label{eq:Y-ansatz}\\ \text{with}\quad |\Xi\rangle &=
        \frac{1}{\sqrt[4]{d-1}}\sum_{j=1}^{d-1} |j\rangle |j-1\rangle.
	\end{align}
    We clearly have $Y_{ab}\geq 0$.  Setting $Z_{ab} = Y_{ab}$, consider the
    second feasibility constraint in \eqref{eq:diamond-norm-sdp},
    \begin{align}
        \begin{pmatrix}
            Y_{ab} & - \TC_b(J_{ab}^{\MC_d})\\ 
            - \TC_b(J_{ab}^{\MC_d}) & Y_{ab}
        \end{pmatrix} \geq 0.
	\end{align}
	Taking the Schur complement, this is equivalent to the constraint
    \begin{align}
	    Y_{ab} \geq \TC_b(J_{ab}^{\MC_d})\, Y_{ab}^{-1}\, \TC_b(J_{ab}^{\MC_d})
        \label{eq:Schur-complement}
	\end{align}
    with the inverse taken on the support of $Y_{ab}$.  Using the form of
    $J_{ab}^{\MC_d}$ in \eqref{eq:Md-Choi}, it is straightforward to check that
    $Y_{ab} = \TC_b(J_{ab}^{\MC_d})\, Y_{ab}^{-1}\, \TC_b(J_{ab}^{\MC_d})$ so
    that \eqref{eq:Schur-complement} is satisfied.  Finally, it follows from
    \eqref{eq:Y-ansatz} that 
    \begin{align}
	    Y_a = \tr_b{Y_{ab}} = \left(1+\frac{1}{\sqrt{d-1}}\right)\one_a,
	\end{align} 
    which yields $\|Y_a\|_\infty = 1+\frac{1}{\sqrt{d-1}}$ and concludes the
    proof.
\end{proof}

\section{Private and classical capacities}\label{sec:pccap}

The private and classical capacities of the channels $\NC_s$ and $\MC_d$ can be
determined exactly.  In the following, we will prove in
Theorems~\ref{thm:PCcap-Ns} and \ref{thm:PCcap-Md} that $\PC(\NC_s) =
\CC(\NC_s) = 1 = \PC(\MC_d) = \CC(\MC_d)$.  This is remarkable because neither
$\NC_s$ nor $\MC_d$ belong to any of the special classes of channels for which
the private or classical capacity is known to have a single-letter expression
(that is, equal to the channel private information or the Holevo information;
see Section~\ref{sec:prelim} for the definitions of these quantities).

The upper bound on the private and classical capacities of the channels employ
the SDP technique used in prior work of Wang, Xie, and
Duan~\cite{WangXieEA17}. That same work applied the technique to a
qutrit-to-qutrit channel unitarily equivalent to $\NC_s$~(see
discussion below eq.~\eqref{isoDef1}).
Hence, these prior upper and lower bounds
on $\CC$~(see Prop.~15 in~\cite{WangXieEA17}) and $\CC_E$~(see Prop.~1
in~\cite{WangDuan18}) imply our bounds on $\CC(\NC_s)$ in
Th.~\ref{thm:PCcap-Ns} and $\CC_E(\NC_s)$ in Th.~\ref{thm:entAssist}. 
On the other hand, our lower bound on $\PC$, first inequality in~\eqref{eq:capacity-chain} below, correctly proves the bound previously stated in \cite[Prop.~16]{WangXieEA17}, whose proof contained a typo. 
The corresponding results for the $\MC_d$ channel do not follow from
these prior works.

\subsection{Capacities of $\NC_s$}
\label{sec:classical-caps-Ns}
We will obtain a lower bound for the channel private information and the Holevo
information using the following equiprobable ensemble of two quantum states:
\begin{align}
    \rho_a^1 = \begin{pmatrix}1 & 0&0\\0&0&0\\0&0&0\end{pmatrix}\,, & 
    \quad
    \rho_a^2 = \begin{pmatrix}0 & 0&0\\0&s&0\\0&0&1-s\end{pmatrix}.
\label{eq:private-ensemble}
\end{align}
For this ensemble, the quantity $\Dl(\BC,\bar{\rho}_a) - \sum\nolimits_x p_x
\Dl(\BC,\rho_a^x)$ evaluates to $1$ which gives a lower bound to the channel
private information $\PC^{(1)}(\NC_s)\geq 1$ for any $s\in[0,1/2]$ (see
\eqref{eq:private-information}).  Likewise, the Holevo information is $1$ for
this ensemble, giving a lower bound $1\leq \chi(\NC_s)$ by \eqref{eq:chi}.  
Using \cite{CaiWinterEA04,Devetak05} and \eqref{eq:PC-relation}, we obtain
chains of inequalities
\begin{align}
    \begin{aligned}
    1 & \leq \PC^{(1)}(\NC_s) \leq \PC(\NC_s) \leq \CC(\NC_s)\,,\\
    1 & \leq \PC^{(1)}(\NC_s) \leq \chi(\NC_s) \leq \CC(\NC_s).
    \label{eq:capacity-chain}
    \end{aligned}
\end{align}
We will now show that $\CC(\NC_s) \leq 1$ so we have equalities throughout the
above.  
To this end, we employ a semidefinite programming upper bound on the classical
capacity derived by Wang et al.~\cite{WangXieEA17}:
\begin{proposition}[\cite{WangXieEA17}]
	\label{prop:beta-bound}
	For any quantum channel $\BC$, 
    \begin{align}
	    \CC(\BC) \leq \log \beta(\BC),
	\end{align} 
	where $\beta(\BC)$ is the solution of the following SDP:
    \begin{align}
    \begin{aligned} 
    \beta(\BC) = \text{min.~} & \tr S_b\\
    \text{s.t.~} & R_{ab}, S_{b} \text{ Hermitian}\\
    & {-}R_{ab} \leq \TC_b(J^\BC_{ab}) \leq R_{ab}\\
    & {-}\one_a \otimes S_b \leq \TC_b(R_{ab}) \leq \one_a\otimes S_b
    \end{aligned}
    \label{eq:beta-sdp}
    \end{align}
    Moreover, $\log \beta(\BC)$ is a strong converse bound: if the classical
    information transmission rate exceeds $\log \beta(\BC)$, the transmission
    error converges to $1$ exponentially fast.
\end{proposition}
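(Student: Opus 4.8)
The plan is to establish the bound by a one-shot estimate followed by a regularization argument that uses a multiplicativity property of $\beta$. Fix a block length $n$ and a code for $\BC^{\ot n}$: an encoding of $M$ messages into states $\rho^{(1)},\dots,\rho^{(M)}$ on $\HC_a^{\ot n}$ together with a decoding POVM $\{\Lambda^{(m)}\}_{m=1}^{M}$ on $\HC_b^{\ot n}$, with average decoding error at most $\ep_n$. The target is the one-shot inequality $M(1-\ep_n)\le\beta(\BC^{\ot n})$; combined with $\beta(\BC^{\ot n})=\beta(\BC)^n$ this yields $\frac{1}{n}\log M\le\log\beta(\BC)-\frac{1}{n}\log(1-\ep_n)$, so every achievable rate obeys $R\le\log\beta(\BC)$, and if $R>\log\beta(\BC)$ then $1-\ep_n\le 2^{-n(R-\log\beta(\BC))}\to 0$, which is exactly the claimed strong converse.

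For the one-shot bound it suffices to treat a single use of an arbitrary channel $\mathcal{A}$ and then apply the result to $\mathcal{A}=\BC^{\ot n}$. The average success probability $\frac{1}{M}\sum_m\tr\!\big[\mathcal{A}(\rho^{(m)})\Lambda^{(m)}\big]$ can be written as a linear functional of the Choi operator $J^{\mathcal{A}}_{ab}$. I would then relax the encoder--decoder pair to a general PPT-preserving bipartite operation connecting an $M$-dimensional message register to itself through $\mathcal{A}$; this only enlarges the class of protocols, hence only increases the optimal success probability, and turns that optimum into a semidefinite program in $J^{\mathcal{A}}_{ab}$. Passing to the Lagrange dual of this SDP, one exhibits, from any pair $(R_{ab},S_b)$ feasible for the $\beta$-SDP, a dual-feasible point of objective value $(\tr S_b)/M$: the constraint $-R_{ab}\le\TC_b(J^{\mathcal{A}}_{ab})\le R_{ab}$ controls the "channel" block and $-\one_a\ot S_b\le\TC_b(R_{ab})\le\one_a\ot S_b$ controls the "decoding" block, which is precisely the structure the meta-converse requires. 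Optimizing over $(R_{ab},S_b)$ gives optimal success probability $\le\beta(\mathcal{A})/M$, i.e.\ $M(1-\ep)\le\beta(\mathcal{A})$.

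It remains to prove $\beta(\BC_1\ot\BC_2)=\beta(\BC_1)\,\beta(\BC_2)$. The inequality $\le$ is immediate: if $(R^{(i)},S^{(i)})$ is optimal for $\beta(\BC_i)$, then $(R^{(1)}\ot R^{(2)},\,S^{(1)}\ot S^{(2)})$ is feasible for $\beta(\BC_1\ot\BC_2)$, since the transpose factorizes across the tensor product and the four operator inequalities are preserved under tensoring positive operators, giving objective $\tr S^{(1)}\,\tr S^{(2)}$. For the reverse inequality I would use SDP duality, writing down the dual of the $\beta$-program and arguing its optimizer can be taken of product form, or equivalently invoking the operational reading of $\beta$ from the previous paragraph (PPT-assisted one-shot quantities are automatically super-multiplicative because product codes are admissible). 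Substituting $\beta(\BC^{\ot n})=\beta(\BC)^n$ into the one-shot bound and letting $n\to\infty$ then completes the proof.

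The step I expect to be the main obstacle is the one-shot converse, specifically the SDP-duality bookkeeping: because a partial transpose appears twice (once on $J^{\mathcal{A}}_{ab}$ and once on $R_{ab}$), one must track carefully which subsystem each $\TC_b$ acts on and how the directions of the operator inequalities behave after conjugating by positive maps, so that the decoding-side bound lines up with $\one_a\ot S_b$ rather than with something weaker. Verifying that the PPT relaxation of the optimal-success-probability SDP is genuinely an upper bound, and writing its dual correctly, is where essentially all the content sits; the multiplicativity of $\beta$ and the final regularization are comparatively routine.
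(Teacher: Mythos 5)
The paper does not prove this proposition at all: it is imported verbatim from Wang, Xie, and Duan \cite{WangXieEA17}, so the only meaningful comparison is with that reference. Your outline follows essentially the same route as the original proof: bound the one-shot success probability of any code (relaxed to PPT-preserving codes, of which unassisted codes are a special case) by $\beta(\mathcal{A})/M$ via SDP duality, combine with multiplicativity of $\beta$ under tensor products, and read off both the weak and strong converse from $1-\ep_n \leq \beta(\BC)^n 2^{-nR}$. Two remarks. First, your concern about the reverse inequality $\beta(\BC_1\ot\BC_2)\geq\beta(\BC_1)\beta(\BC_2)$ is unnecessary for this proposition: the converse only needs $\beta(\BC^{\ot n})\leq\beta(\BC)^n$, which is exactly the direction you already establish by tensoring primal feasible points (and that tensoring argument is correct, since $R_i\pm A_i\geq 0$ for $i=1,2$ implies $R_1\ot R_2\pm A_1\ot A_2\geq 0$). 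Second, the one place where your proposal is genuinely incomplete is the one you flag yourself: the one-shot inequality $M(1-\ep)\leq\beta(\mathcal{A})$ is only described, not derived --- you would need to actually write the SDP for the PPT-relaxed optimal success probability and verify that any $\beta$-feasible pair $(R_{ab},S_b)$ yields a dual-feasible point of value $\tr S_b/M$, which is precisely the computation carried out in \cite{WangXieEA17}. As it stands the proposal is a correct and faithful plan of the known proof rather than a self-contained argument, with no step that would fail, but with its central lemma left unproved.
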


We now state and prove the main result of this section:
\begin{theorem}\label{thm:PCcap-Ns}
	For all $s\in[0,1/2]$, 
	\begin{align}
	\PC(\NC_s) = \CC(\NC_s) = 1.
	\end{align}
	Moreover, both the classical and the private capacity of $\NC_s$ satisfy the strong converse property.
\end{theorem}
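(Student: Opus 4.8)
The plan is to reduce the theorem to the single inequality $\CC(\NC_s)\le 1$. The chain \eqref{eq:capacity-chain}, fed by the explicit ensemble \eqref{eq:private-ensemble}, already gives $1\le \PC^{(1)}(\NC_s)\le\PC(\NC_s)\le\CC(\NC_s)$ and $1\le\chi(\NC_s)\le\CC(\NC_s)$, so once $\CC(\NC_s)\le 1$ is established, all of $\PC^{(1)}(\NC_s),\PC(\NC_s),\chi(\NC_s),\CC(\NC_s)$ collapse to $1$. To bound $\CC(\NC_s)$ from above I would invoke the SDP bound $\CC(\NC_s)\le\log\beta(\NC_s)$ of Proposition~\ref{prop:beta-bound} and exhibit a feasible pair $(R_{ab},S_b)$ in the program \eqref{eq:beta-sdp} with $\tr S_b = 2$, which forces $\beta(\NC_s)\le 2$ and hence $\CC(\NC_s)\le 1$.

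To construct the feasible point, start from the Choi operator $J^s_{ab}=[\psi_1]_{ab}+[\psi_2]_{ab}$ of \eqref{eq:Ns-choi} and compute its partial transpose on system $b$. One finds that $\TC_b(J^s_{ab})$ is the orthogonal direct sum of the diagonal operator $s[00]+(1-s)[01]+[12]+[22]$ and a rank-two block supported on $\Span\{|02\rangle,|10\rangle,|21\rangle\}$ whose nonzero eigenvalues are $+1$ and $-1$. This makes the choice $R_{ab}=\bigl|\TC_b(J^s_{ab})\bigr|$ transparent: explicitly $R_{ab}=s[00]+(1-s)[01]+[12]+[22]+[02]+[\kappa]$ with $|\kappa\rangle=\sqrt{s}\,|10\rangle+\sqrt{1-s}\,|21\rangle$, and by definition of the absolute value $-R_{ab}\le\TC_b(J^s_{ab})\le R_{ab}$, so the first constraint of \eqref{eq:beta-sdp} holds automatically. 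Transposing $R_{ab}$ back on system $b$, one gets that $\TC_b(R_{ab})$ is diagonal apart from a single off-diagonal pair $|11\rangle\langle 20|+|20\rangle\langle 11|$ of weight $\sqrt{s(1-s)}$. I would then take $S_b=\mathrm{diag}(s,1-s,1)$ in the computational basis of $\HC_b$; the remaining constraint $-\one_a\otimes S_b\le\TC_b(R_{ab})\le\one_a\otimes S_b$ reduces, after discarding the manifestly nonnegative diagonal entries, to positive semidefiniteness of $\left(\begin{smallmatrix}1-s & \mp\sqrt{s(1-s)}\\ \mp\sqrt{s(1-s)} & s\end{smallmatrix}\right)$, which holds since its determinant is exactly $s(1-s)-s(1-s)=0$ and its trace is $1>0$. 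This pair is feasible with $\tr S_b=s+(1-s)+1=2$, so $\CC(\NC_s)\le\log\beta(\NC_s)\le 1$, and combined with the lower bound all four quantities equal $1$.

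For the strong converse claim, Proposition~\ref{prop:beta-bound} states that $\log\beta(\NC_s)=1$ is a strong converse rate for classical communication, so any classical code over $\NC_s$ of rate exceeding $1$ has error tending to $1$ exponentially fast; since a private code is in particular a classical code, the same conclusion applies to private communication, yielding the strong converse property for both $\CC(\NC_s)$ and $\PC(\NC_s)$. I expect the only real difficulty to be guessing $(R_{ab},S_b)$ correctly — recognizing the diagonal-plus-$\{+1,-1,0\}$-block structure of $\TC_b(J^s_{ab})$ so that $R_{ab}=\bigl|\TC_b(J^s_{ab})\bigr|$ is clean, and noticing that the induced $S_b$-constraint is saturated (the relevant $2\times 2$ determinant vanishes) precisely at $\tr S_b=2$; verifying the partial transposes and the positivity of the explicit small matrices is routine.
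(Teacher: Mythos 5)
Your proposal is correct and follows essentially the same route as the paper: the same reduction via \eqref{eq:capacity-chain}, the same feasible pair $(R_{ab},S_b)$ in the SDP \eqref{eq:beta-sdp} (your $R_{ab}=\bigl|\TC_b(J^s_{ab})\bigr|$ is exactly the paper's ansatz, just derived rather than guessed), and the same reduction of the private strong converse to the classical one. The only added value is your observation that $R_{ab}$ arises as the absolute value of the partially transposed Choi operator, which makes the first SDP constraint automatic, but this does not change the argument.
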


\begin{proof}
    Due to \eqref{eq:capacity-chain} and Proposition~\ref{prop:beta-bound}, it
    suffices to show that $\beta(\NC_s) \leq 2$ for all $s\in[0,1/2]$.  To this
    end, let
    \begin{align}
        \begin{aligned}
        R_{ab} &= s[00]_{ab} + (1-s)[01]_{ab} + [02]_{ab} + [\psi]_{ab} + [12]_{ab} + [22]_{ab}\\
        S_b &= s[0]_b + (1-s)[1]_b + [2]_b,
        \end{aligned}
	\label{eq:R-S-ansatz-Ns}
	\end{align}
    where $|\psi\rangle_{ab} = \sqrt{s} |10\rangle + \sqrt{1-s} |21\rangle$.
    We now check that $(R_{ab},S_b)$ is feasible for the SDP
    \eqref{eq:beta-sdp}.  Both $R_{ab}$ and $S_b$ are Hermitian by
    construction, and one readily checks that $\one_a\otimes S_b \pm
    \TC_b(R_{ab}) \geq 0$.  Recalling the form of $J^s_{ab}\equiv
    J^{\NC_s}_{ab}$ from \eqref{eq:Ns-choi}, it also follows that $R_{ab} \pm
    \TC_b(J^s_{ab}) \geq 0$.  These observations establish feasibility of
    $(R_{ab},S_b)$ in \eqref{eq:beta-sdp}.  Furthermore $\tr S_b = 2$, hence
    $\beta(\NC_s)\leq 2$.  (As a side remark, we note that $(R_{ab},S_b)$ is in
    fact optimal for \eqref{eq:beta-sdp} since $1\leq \CC(\NC_s)\leq \log
    \beta(\NC_s)\leq 1$ by \eqref{eq:capacity-chain} and
    Proposition~\ref{prop:beta-bound}.)
	
    Since $\log\beta(\NC_s)$ is a strong converse bound and $\CC(\NC_s) =
    \log\beta(\NC_s) = 1$, the classical capacity of $\NC_s$ satisfies the
    strong converse property.	This also holds for the private capacity
    $\PC(\NC_s) = \CC(\NC_s) = 1$, which can be seen as follows (we refer to
    \cite{Wilde16} for precise definitions).  Let $\varepsilon_\CC$ denote the
    error for a classical information transmission code.  The error
    $\varepsilon_\PC$ for a private information transmission code is defined as
    $\varepsilon_\PC = \max(\varepsilon_\CC,\varepsilon_{\mathrm{env}})$, where
    $\varepsilon_{\mathrm{env}}$ is an additional error parameter controlling
    how much information the environment gains about Alice's input.  Assume now
    that we have a private information transmission code with rate $r_\PC > 1$
    and error $\varepsilon_\PC$.  Then this code can also be regarded as a
    classical information transmission code with rate $r_\CC = r_\PC > 1$ and
    error $\varepsilon_\CC \leq \varepsilon_\PC$.  The strong converse property
    of $\CC(\NC_s)$ implies that $\varepsilon_\CC \to 1$ as the code
    blocklength $n$ increases, from which $\varepsilon_\PC\to 1$ follows.
    Hence, $\PC(\NC_s)$ also satisfies the strong converse property, which
    concludes the proof. 
\end{proof}

Theorems~\ref{thm:Q} and \ref{thm:PCcap-Ns} imply that the quantum and private
capacities of $\NC_s$ are strictly separated:

\begin{corollary}\label{cor:QP-separation}
	$\QC(\NC_s)<\PC(\NC_s)$ for all $s\in(0,1/2]$.
\end{corollary}

It was recently shown in \cite{ding2020bounding} that the quantity $\log\beta(\cdot)$ from \Cref{prop:beta-bound} also serves as an upper bound on the classical capacity of a quantum channel assisted by a classical feedback channel, denoted $\CC_{\leftarrow}(\cdot)$.
Hence, \Cref{thm:PCcap-Ns} immediately implies that the feedback-assisted classical capacity of $\NC_s$ is equal to its classical capacity, $\CC(\NC_s)=1=\CC_{\leftarrow}(\NC_s)$ for all $s\in[0,1/2]$.
Moreover, the same argument applies to the private capacity of a quantum channel assisted by a public feedback channel.

Finally, we discuss the entanglement-assisted capacity of the channel $\NC_s$.
Similar to the classical capacity, it is independent of the parameter $s$:
\begin{theorem}~\label{thm:entAssist}
	For all $s\in[0,1/2]$,
	\begin{align}
	\CC_{E}(\NC_s) = 2.
	\end{align}
\end{theorem}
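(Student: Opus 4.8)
The plan is to prove $\CC_E(\NC_s)\ge 2$ and $\CC_E(\NC_s)\le 2$ separately. I will use the formula $\CC_E(\NC_s)=I(\NC_s)=\max_{\rho_a}I(a';b)_\sigma$ (see \eqref{eq:entasscap}), with $\sigma_{a'b}=(\IC_{a'}\otimes\NC_s)(\psi_{a'a})$ for $\psi_{a'a}$ a purification of $\rho_a$. Since the $\NC_s$ isometry dilates $\sigma_{a'b}$ to a pure state on $a'bc$, one has $S(\sigma_{a'b})=S(\sigma_c)=S(\NC_s^c(\rho_a))$ and $S(\sigma_{a'})=S(\rho_a)$, so $I(a';b)_\sigma=S(\rho_a)+S(\NC_s(\rho_a))-S(\NC_s^c(\rho_a))=S(\rho_a)+\Dl(\NC_s,\rho_a)$. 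Thus the theorem reduces to showing $S(\rho_a)+\Dl(\NC_s,\rho_a)\le 2$ for every input, together with one input attaining equality.

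For the lower bound I would evaluate this expression at $\rho_a=\tfrac12[0]+\tfrac s2[1]+\tfrac{1-s}2[2]$. A direct computation from \eqref{isoDef1} gives $\NC_s(\rho_a)=\tfrac s2[0]+\tfrac{1-s}2[1]+\tfrac12[2]$ and $\NC_s^c(\rho_a)=s[0]+(1-s)[1]$, so with $h(p)=-p\log p-(1-p)\log(1-p)$ the binary entropy one gets $S(\rho_a)=S(\NC_s(\rho_a))=1+\tfrac12 h(s)$ and $S(\NC_s^c(\rho_a))=h(s)$, whence $I(a';b)_\sigma=2$ and $\CC_E(\NC_s)\ge 2$.

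For the upper bound, fix an arbitrary $\rho_a$, put $a\coloneqq\langle 0|\rho_a|0\rangle$, and let $\tilde\rho$ be the state obtained by restricting $\rho_a$ to $\Span\{\ket1,\ket2\}$ and normalizing, so that $(1-a)\tilde\rho$ is the lower-right $2\times2$ block of $\rho_a$. Three elementary estimates then close the argument. (i) Pinching $\rho_a$ along the decomposition $\HC_a=\Span\{\ket0\}\oplus\Span\{\ket1,\ket2\}$ only raises entropy, so $S(\rho_a)\le h(a)+(1-a)S(\tilde\rho)$. (ii) From \eqref{isoDef1}, $\NC_s(\rho_a)$ has no $\ket0\bra1$ matrix element on $\HC_b$, and pinching with $\{[0]+[1],\,[2]\}$ leaves the diagonal state with entries $(as,a(1-s),1-a)$, so $S(\NC_s(\rho_a))\le h(a)+a\,h(s)$. (iii) A short computation from \eqref{isoDef1} shows $\NC_s^c(\rho_a)=a\,Q+(1-a)\tilde\rho$ with the fixed state $Q=s[0]+(1-s)[1]$, so concavity of $S$ gives $S(\NC_s^c(\rho_a))\ge a\,h(s)+(1-a)S(\tilde\rho)$. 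Adding (i) and (ii) and subtracting (iii), the two $(1-a)S(\tilde\rho)$ terms cancel and the two $a\,h(s)$ terms cancel, leaving $I(a';b)_\sigma\le 2h(a)\le 2$; maximizing over $\rho_a$ yields $\CC_E(\NC_s)\le 2$, and combined with the lower bound this gives $\CC_E(\NC_s)=2$.

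The evaluations of $\NC_s$ and $\NC_s^c$ on $\rho_a$ are routine. The one step requiring insight is the choice of the two pinchings: the input must be split as $\Span\{\ket0\}\oplus\Span\{\ket1,\ket2\}$ so that the $\Span\{\ket1,\ket2\}$-block entropy bounding $S(\rho_a)$ is exactly the $(1-a)S(\tilde\rho)$ produced by $S(\NC_s^c(\rho_a))$, while the $b$-output must be split as $\Span\{\ket0,\ket1\}\oplus\Span\{\ket2\}$ so that the residual $a\,h(s)$ from $S(\NC_s(\rho_a))$ matches the one in $S(\NC_s^c(\rho_a))$. Once these complementary splittings are identified, the two cancellations that pin the value to exactly $2$ are automatic.
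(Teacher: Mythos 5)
Your proof is correct, and its upper-bound half is genuinely different from the paper's. The lower bound is essentially identical: your input $\rho_a=\tfrac12[0]+\tfrac s2[1]+\tfrac{1-s}2[2]$ is exactly the reduced state of the purification $|\psi\rangle_{aa'}$ the paper plugs in, and both computations give $I(a';b)_\sigma=2$. For optimality, however, the paper only remarks that it "can be verified using the methods of \cite{FawziFawzi18}", i.e.\ it leans on the concavity of the channel mutual information in $\rho_a$ and a numerically certified concave optimization, whereas you give a short, fully analytic argument: rewrite $I(a';b)_\sigma=S(\rho_a)+\Dl(\NC_s,\rho_a)$, bound $S(\rho_a)$ by pinching along $\Span\{\ket0\}\oplus\Span\{\ket1,\ket2\}$, bound $S(\NC_s(\rho_a))$ by pinching the output along $\Span\{\ket0,\ket1\}\oplus\Span\{\ket2\}$ (using that the $\ket{0}\bra{1}$ entry of the output vanishes, which is easily checked from \eqref{isoDef1}), and lower-bound $S(\NC_s^c(\rho_a))$ by concavity applied to the exact decomposition $\NC_s^c(\rho_a)=aQ+(1-a)V\tilde\rho V^\dagger$; the cancellations then give $I(a';b)_\sigma\le 2h(a)\le 2$. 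I checked the channel actions and the three estimates, and they are all correct (the identification of $\tilde\rho$ with its relabeled image under $V$ is harmless since entropy is isometry-invariant). What your route buys is a self-contained, rigorous proof of the upper bound with no appeal to numerics, and as a bonus it shows the maximum forces $\langle 0|\rho_a|0\rangle=\tfrac12$; what the paper's route buys is brevity and a method that generalizes mechanically to channels where no such clean pinching structure is available.
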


\begin{proof}
   It is easy to check that the state $\sigma_{ab} = (\IC_a \otimes
    \NC_s)([\psi_{aa'}])$ with
    \begin{align}
   |\psi\rangle_{aa'} = \frac{1}{\sqrt{2}} |00\rangle_{aa'} +
        \frac{1}{\sqrt{2}} \left( \sqrt{s} |11\rangle_{aa'} + \sqrt{1-s}
        |22\rangle_{aa'}\right)
   \end{align}
    achieves $I(A;B)_\sigma = 2$.  The optimality of $[\psi_{aa'}]$ can be
    verified using the methods of \cite{FawziFawzi18}.
\end{proof}

\subsection{Capacities of $\MC_d$}\label{sec:pccap-Md}

We proved in Theorem~\ref{thm:PCcap-Ns} in Section~\ref{sec:classical-caps-Ns}
that one use of the channel $\NC_s$ can faithfully transmit 1 private bit (and
thus also 1 classical bit) regardless of the value of $s$.  We now prove that
the $d$-dimensional generalization $\MC_d$ of $\NC_{1/2}$ defined in
\eqref{eq:Md} retains unit private and classical capacity for any local
dimension $d$:

\begin{theorem}\label{thm:PCcap-Md}
	For all $d\geq 2$,
	\begin{align}
	\PC(\MC_d) = \CC(\MC_d) = 1.
	\end{align}
	Moreover, both the classical and private capacity of $\MC_d$ satisfy the strong converse property.
\end{theorem}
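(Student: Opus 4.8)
The plan is to follow the proof of Theorem~\ref{thm:PCcap-Ns} almost verbatim: establish $\PC(\MC_d)\ge 1$ from a good input ensemble, establish $\CC(\MC_d)\le 1$ from the semidefinite-programming bound $\log\beta$ of Proposition~\ref{prop:beta-bound}, and then combine these with $\PC(\MC_d)\le\CC(\MC_d)$ \eqref{eq:PC-relation} and the strong-converse property of $\log\beta$. For the lower bound I would take the equiprobable pair $\rho_a^1=[0]$ and $\rho_a^2=\tfrac1{d-1}\sum_{i=1}^{d-1}[i]$, living respectively in a degradable qubit sub-space $\HC_{ai}$ and in the perfect-to-environment sub-space $\HC_{a'}$. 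Reading off the action of the isometry \eqref{eq:Md}, one gets $\rho_b^2=[d-1]$ while $\rho_b^1$, $\rho_c^1$ and $\rho_c^2$ all equal $\tfrac1{d-1}\sum_{j=0}^{d-2}[j]$, so $I(\mathsf{x};b)_\sigma=1$ and $I(\mathsf{x};c)_\sigma=0$ for this ensemble; by \eqref{eq:pi-2} and \eqref{eq:chi} this yields $\PC^{(1)}(\MC_d)\ge 1$ (and $\chi(\MC_d)\ge 1$), hence $1\le\PC^{(1)}(\MC_d)\le\PC(\MC_d)\le\CC(\MC_d)$. (Alternatively one may simply note that $\NC_{1/2}$ is a sub-channel of $\MC_d$, so $\PC(\MC_d)\ge\PC(\NC_{1/2})=1$ by Theorem~\ref{thm:PCcap-Ns}.)

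For the upper bound I would exhibit an explicit feasible point of the SDP \eqref{eq:beta-sdp} for $\beta(\MC_d)$ with objective value $2$, generalizing the $s=\tfrac12$ instance of \eqref{eq:R-S-ansatz-Ns}. Concretely, take $S_b=\tfrac1{d-1}\sum_{j=0}^{d-2}[j]+[d-1]$, so that $\tr S_b=2$, and $R_{ab}=\tfrac1{d-1}\sum_{j=0}^{d-2}[0j]+[0,d-1]+[\hat\zeta]+\sum_{i=1}^{d-1}[i,d-1]$ with $|\hat\zeta\rangle=\tfrac1{\sqrt{d-1}}\sum_{j=0}^{d-2}|j+1\rangle_a|j\rangle_b$. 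Using the decomposition $J^{\MC_d}_{ab}=\sum_{j=0}^{d-2}[\psi_j]$ with $|\psi_j\rangle=\tfrac1{\sqrt{d-1}}|0,j\rangle+|j+1,d-1\rangle$, which one checks against \eqref{eq:Md-Choi}, one finds $\TC_b(J^{\MC_d}_{ab})=\tfrac1{d-1}\sum_{j}[0j]+\sum_{j}[j+1,d-1]+|0,d-1\rangle\langle\hat\zeta|+|\hat\zeta\rangle\langle 0,d-1|$. Since $|0,d-1\rangle$ and $|\hat\zeta\rangle$ are orthonormal, $R_{ab}-\TC_b(J^{\MC_d}_{ab})$ collapses to the single rank-one term $\bigl(|0,d-1\rangle-|\hat\zeta\rangle\bigr)\bigl(\langle 0,d-1|-\langle\hat\zeta|\bigr)\ge 0$, while $R_{ab}+\TC_b(J^{\MC_d}_{ab})$ is a nonnegative combination of rank-one projectors; this gives the first feasibility constraint. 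For the second constraint, $\TC_b(R_{ab})$ equals $R_{ab}$ with $[\hat\zeta]$ replaced by $\TC_b([\hat\zeta])$, and after the relabeling $|j+1\rangle_a\mapsto|j\rangle$ the operator $\TC_b([\hat\zeta])$ is $\tfrac1{d-1}$ times a SWAP, hence has eigenvalues $\pm\tfrac1{d-1}$ and is supported on the sector $a\in\{1,\dots,d-1\}$, $b\in\{0,\dots,d-2\}$; on that sector $\one_a\otimes S_b$ restricts to $\tfrac1{d-1}\one$, and on the three remaining sectors $\TC_b(R_{ab})$ and $\one_a\otimes S_b$ agree block-by-block, so $\one_a\otimes S_b\pm\TC_b(R_{ab})\ge 0$. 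Hence $(R_{ab},S_b)$ is feasible, $\beta(\MC_d)\le 2$, and $\CC(\MC_d)\le\log 2=1$ by Proposition~\ref{prop:beta-bound}; together with the lower bound this gives $\PC(\MC_d)=\CC(\MC_d)=1$.

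Finally, the strong-converse claims. Since $\log\beta$ is a strong-converse bound for the classical capacity and $\CC(\MC_d)=\log\beta(\MC_d)=1$, the classical capacity of $\MC_d$ satisfies the strong converse. The strong converse for $\PC(\MC_d)$ then follows verbatim from the reduction used at the end of the proof of Theorem~\ref{thm:PCcap-Ns}: a private transmission code of rate $r>1$ is in particular a classical transmission code of rate $r>1$ with no larger error, and for such codes the error converges to $1$. I expect the main obstacle to be pinning down and verifying the SDP ansatz $(R_{ab},S_b)$---in particular recognizing that the partial transpose of the ``maximally entangled'' block $[\hat\zeta]$ is a SWAP whose negative eigenvalues are exactly compensated by the $\tfrac1{d-1}\one_b$ piece of $S_b$---and carefully bookkeeping the orthogonal sectors of $\HC_a\otimes\HC_b$ on which the positivity checks decouple.
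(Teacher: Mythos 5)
Your proposal is correct and follows essentially the same route as the paper: the same equiprobable two-state ensemble $\lbrace [0],\ \tfrac{1}{d-1}\sum_{i\ge 1}[i]\rbrace$ for the lower bound $1\le \PC^{(1)}(\MC_d)$, the same feasible pair $(R_{ab},S_b)$ in the SDP of Proposition~\ref{prop:beta-bound} (your $|\hat\zeta\rangle$ is exactly the paper's $|\psi\rangle_{ab}$, and your $R_{ab}$ and $S_b$ coincide with the paper's ansatz), and the same reduction of the private strong converse to the classical one as in Theorem~\ref{thm:PCcap-Ns}. The only difference is that you spell out the feasibility verification (the rank-one collapse of $R_{ab}\pm\TC_b(J^{\MC_d}_{ab})$ and the sector-by-sector comparison of $\TC_b(R_{ab})$ with $\one_a\otimes S_b$) that the paper leaves as a routine check, and those computations are correct.
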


\begin{proof}
    The proof strategy is similar to the one used in
    Theorem~\ref{thm:PCcap-Ns}.  First, consider an equiprobable ensemble with
    the following two quantum states,
    \begin{align}
        \rho_a^0 = [0]_a, & \quad \rho_a^1 = \frac{1}{d-1} \sum_{j=1}^{d-1} [j]_a,
        \label{eq:Md-priv-ensemble}
	\end{align}
    and form the cqq state 
    \begin{align}
        \sigma_{\mathsf{x}bc} = (\IC_{\mathsf{x}} \otimes G) \rho_{\mathsf{x}a} (\IC_{\mathsf{x}} \otimes G)^\dagger,
	\end{align}
    where $G\colon \HC_a\to \HC_b\otimes \HC_c$ is the channel isometry for
    $\MC_d$ defined in \eqref{eq:Md}, and $\rho_{\mathsf{x}a} = \frac{1}{2}
    \left( [0]_{\mathsf{x}}\otimes \rho_a^0 + [1]_{\mathsf{x}}\otimes \rho_a^1
    \right)$.  It is straightforward to check that $I(X;B)_\sigma=1$ and
    $I(X;E)_\sigma = 0$, from which we obtain
    \begin{align}
	    1 \leq \PC^{(1)}(\MC_d) \leq \PC(\MC_d) \leq \CC(\MC_d).\label{eq:Md-chain}
	\end{align}
	
    The claim of the theorem now follows by showing that $\CC(\MC_d)\leq 1$ for
    all $d$.  To this end, we once again employ the upper bound
    $\log\beta(\MC_d)$ from Proposition~\ref{prop:beta-bound}.  Consider the
    following Hermitian operators $R_{ab}$ and $S_b$ (analoguous to the
    operators in \eqref{eq:R-S-ansatz-Ns}):
    \begin{align}
	\begin{aligned}
		R_{ab} &= \frac{1}{d-1} \sum_{j=0}^{d-2} [0]_a \otimes [j]_b + \sum_{j=0}^{d-1} [j]_a \otimes [d-1] + [\psi]_{ab} \,, \\
		S_b &= \frac{1}{d-1} \sum_{j=0}^{d-2} [j]_a + [d-1]_a \,,
	\end{aligned}
	\end{align}
    where $|\psi\rangle_{ab} = \frac{1}{\sqrt{d-1}} \sum_{j=1}^{d-1}
    |j\rangle_a \otimes |j-1\rangle_b$.  One readily checks that $R_{ab}$ and
    $S_{b}$ are feasible in the SDP \eqref{eq:beta-sdp}, that is, $R_{ab}\pm
    \TC_b(J_{ab}^{\MC_d}) \geq 0$ and  $\one_a\otimes S_b \pm \TC_b(R_{ab})
    \geq 0$.  Furthermore, $\tr S_b = 2$ for any $d$, and hence $\beta(\MC_d)
    \leq 2$.  Using Proposition~\ref{prop:beta-bound}, we conclude $\CC(\MC_d)
    \leq \log \beta(\MC_d) \leq 1$, which together with \eqref{eq:Md-chain}
    gives
	\begin{align}
	    1 \leq \PC(\MC_d) \leq \CC(\MC_d) \leq 1.
	\end{align}
    The strong converse property for $\CC(\MC_d)$ and $\PC(\MC_d)$ follows in
    the same way as in the proof of Theorem~\ref{thm:PCcap-Ns}.
\end{proof}

By the same argument as in the remark after \Cref{cor:QP-separation}, \Cref{thm:PCcap-Md} implies that the feedback-assisted private and classical capacities of $\MC_d$ are equal to their unassisted counterparts.

\section{Discussion of capacities of the platypus channels}
\label{sec:cap-discussion}

We now summarize the findings of Sections~\ref{sec:CI}, \ref{sec:qcap-bounds}
and \ref{sec:pccap} on the quantum capacity $\QC$, private capacity $\PC$ and
classical capacity $\CC$ of the platypus channels $\NC_s$ (with $s\in(0,1/2]$) and $\MC_d$ (for $d\geq 3$):
%
%\begin{align}
%  0 < \QC^{(1)}(\NC_s) \stackrel{?}{=} \QC(\NC_s) \leq \log\left(1+\sqrt{1-s}\right) &< 1 = \PC^{(1)}(\NC_s) = \PC(\NC_s)
%  = \chi(\NC_s) = \CC(\NC_s), \label{eq:capacities-Ns}\\
%  0 < \QC^{(1)}(\MC_d) \stackrel{?}{=} \QC(\MC_d) \leq \log\left(1+\frac{1}{\sqrt{d-1}}\right) &  < 1 = \PC^{(1)}(\MC_d) = \PC(\MC_d)
%  = \chi(\MC_d) = \CC(\MC_d).  \label{eq:capacities-Md}
%\end{align}
\begin{align}
	0 < \QC^{(1)}(\NC_s) \stackrel{?}{=} \QC(\NC_s) \leq \log\left(1+\sqrt{1-s}\right) &< 1 = \PC^{(1)}(\NC_s) = \PC(\NC_s)
	= \chi(\NC_s) = \CC(\NC_s), \label{eq:capacities-Ns}\\
	0 < \QC^{(1)}(\MC_d) \stackrel{?}{=} \QC(\MC_d) \leq \log\left(1+\frac{1}{\sqrt{d-1}}\right) &< 1 = \PC^{(1)}(\MC_d) = \PC(\MC_d)
	= \chi(\MC_d) = \CC(\MC_d).  \label{eq:capacities-Md}
\end{align}
In the above \cref{eq:capacities-Ns}, the left-most equality labeled by ``?'' is the conjectured weak
additivity of the single-letter coherent information, $\QC^{(1)}(\NC_s)$, which
would be implied by the validity of the ``spin alignment conjecture'' described
in Section~\ref{sec:spinAl}.  The next inequality is Theorem~\ref{thm:Q}.
Finally, the four equalities on the RHS of \eqref{eq:capacities-Ns} come from
Theorem~\ref{thm:PCcap-Ns} and \eqref{eq:capacity-chain}.
Likewise, in \cref{eq:capacities-Md} the conjectured equality labeled by ``?'' would be implied by the validity of the (higher-dimensional version of the) spin-alignment conjecture in Section~\ref{sec:conjecture-d-dim}, and the following inequality and equalities are obtained via Theorems~\ref{thm:Q-Md} and \ref{thm:PCcap-Md}, respectively.
For both $\NC_s$ and $\MC_d$, the private and classical capacity have the strong converse property,
as proved in Theorems~\ref{thm:PCcap-Ns} and \ref{thm:PCcap-Md}, respectively.  These findings are remarkable for
various reasons:

\begin{itemize}
    \item The private information $\PC^{(1)}(\cdot)$ is additive for both $\NC_s$ and $\MC_d$.  The only
        known classes of quantum channels with additive private information are
        (a) ``less noisy channels'' $\BC$ whose complementary channels have
        vanishing private capacity, $\PC(\BC^c) = 0$ \cite{Watanabe12} and of
        which degradable channels \cite{DevetakShor05,Smith08} are special
        cases; (b) anti-degradable channels; and (c) direct sums of partial
        traces (DSPT), a special case of the ternary ring of operators (TRO)
        channels \cite{GaoJungeEA18}. We know from Section~\ref{sec:channel-defs}
        that $\PC(\NC_s^c) = 1$ for all $s\in[0,1/2]$ and $\PC(\MC_d^c) = \log(d-1)$ for $d\geq 3$.
        Hence, neither $\NC_s$ nor $\MC_d$ are 
        less noisy.  Moreover, clearly neither of these channels is a direct sum of partial traces, so that both $\NC_s$ and $\MC_d$ fall outside all known classes of channels with additive private information.

    \item The Holevo information $\chi(\cdot)$ is additive  for both $\NC_s$ and $\MC_d$.  Again, both channels
        fall outside of all the known classes of channels with additive Holevo
        information:   (a) entanglement-breaking channels \cite{Shor02}; (b)
        unital qubit channels \cite{King02}; (c) depolarizing channels
        \cite{King03}; (d) Hadamard channels \cite{King06,KingMatsumotoEA07};
        (e) DSPT channels \cite{GaoJungeEA18}; and (f) erasure channels
        \cite{BennettDiVincenzoEA97}.  Since entanglement-breaking channels
        have vanishing quantum capacity and both $\NC_s$ and $\MC_d$ have positive quantum
        capacity, the platypus channels are not entanglement-breaking. They clearly do not
        belong to classes (b) and (c) either. A quantum channel is Hadamard if
        its complementary channel is entanglement-breaking \cite{Wilde16}.
        Since $Q(\NC_s^c) = 1$ and $Q(\MC_d^c) = \log(d-1)$, the complements of $\NC_s$ and $\MC_d$ cannot be
        entanglement-breaking, so that neither channel is Hadamard.  Finally, neither $\NC_s$ nor $\MC_d$
	    are a DSPT nor an erasure channel.

    \item The quantum capacity of both $\NC_s$ and $\MC_d$ is strictly smaller than their respective private
        capacities, for all $s\in(0,1/2]$ and $d\geq 3$. There are not
        too many examples of this phenomenon.  The first known class is the
        Horodecki channels, for which the quantum capacity vanishes and the
        private capacity is strictly positive
        \cite{HorodeckiHorodeckiEA05,HorodeckiHorodeckiEA09,OzolsSmithSmolin13}.
        The smallest such example has input and output dimensions $d_a = d_b =
        3, d_c = 4$, and the separation is typically small.  The second class
        is the so-called ``half-rocket channels,'' with quantum capacity
        between $0.6$ and $1$ but private capacity $\log d$ where the input and
        output dimenions are $d_a = d^2, d_b = d_c = d^6-d^4$.  This class
        exhibits an extensive separation of the two
        capacities~\cite{LeungLiEA14}.  In comparison, $\NC_s$ is the smallest
        known channel with $d_a = d_b = 3$, $d_c = 2$ exhibiting the
        separation, and the separation is quite large (at least $\approx s/2$
        for most $s$ of interest).  A separation of the information quantities
        $\QC^{(1)}(\cdot)$ and $\PC^{(1)}(\cdot)$ was observed for certain
        channels for which these quantities are also superadditive, such as the
        depolarizing channel \cite{DiVincenzoShorEA98,SmithRenesEA08} or the
        dephrasure channel \cite{LeditzkyLeungEA18}.  However, due to
        super-additivity in these channels we do not know their exact
        capacities and the true separations between them.
    
    \item Reference~\cite{HorodeckiHorodeckiEA05} shows that a quantum state
        can yield a bit of classical information that is private from the
        environment $E$ if and only if it is of the form
        \begin{align}
            \label{HoroPbit}
            \raisebox{0.5ex}{$\gamma$}_{ {K_A} {K_B} {S_A} {S_B}}  = \, 
            U \left( \, [\phi]_{ {K_A} {K_B} } \otimes \, \raisebox{0.2ex}{$\sigma$}_{ {S_A} {S_B} } \right) U^\dagger
        \end{align}
        where $K_A K_B$ are called the key systems, $S_A S_B$ are called the
        shield systems, $\ket{\phi}$ is the maximally entangled state on $K_A
        K_B$, $\sigma$ is an arbitrary state on $S_A S_B$ and $U$ is a
        controlled unitary of the form $\sum_{i,j} \, [ij]_{K_A K_B} \otimes
        (U_{ij})_{S_A S_B}$, with each $U_{ij}$ a unitary that depends on $i$
        and $j$.  The key is shared between two users Alice and Bob.  Alice is
        in possession of $K_A S_A$, Bob is in possession of $K_B S_B$, and they
        can generate a key by measuring along the computational basis of $K_A
        K_B$ independently.  Furthermore, if the one-way distillable key of the
        state is strictly greater than the one-way distillable entanglement,
        each of the shield systems must be nontrivial.  

        Since $\NC_s$ has 3 dimensional input and output and can send one bit
        privately with a single use, it can be used to make a $3 \times 3$
        dimensional state shared by Alice and Bob that encodes one private bit.
        Furthermore, this state is not of the form Eq.~(\ref{HoroPbit}).  To
        see this, suppose the contrary.  By the quantum capacity bound of
        $\NC_s$, this state has one-way distillable entanglement strictly less
        than $1$, so each of the shield systems must be nontrivial with at
        least $2$ dimensions.  Meanwhile, the key systems have $4$ dimensions
        jointly, so, the total dimension exceeds $9$ which is a contradiction.  

        The resolution is that this state is locally equivalent to a standard
        p-bit with each of $K_A$, $K_B$, $S_A$, $S_B$ being a qubit, but for
        which the local ranks of both $K_A S_A$ and $K_B S_B$ are 3.  So, our
        state \emph{is} a p-bit, but one which has been embedded into smaller
        dimensional spaces than would be possible generically.

        In more detail, here is the protocol to create the $3 \times 3$ state
        which distributes a private key between Alice and Bob.  Alice prepares
        the state
        \begin{align}
            \frac{1}{\sqrt{2}} \; \ket{0}_A\ket{0}_{\tilde{A}}
            + \frac{\sqrt{s}}{\sqrt{2}} \; \ket{1}_A\ket{1}_{\tilde{A}}
            + \frac{\sqrt{1{-}s}}{\sqrt{2}} \; \ket{2}_A\ket{2}_{\tilde{A}}
        \end{align}
        and applies $F_S$ (the isometry giving rise to $\NC_s$) to $\tilde{A}$
        resulting in the state
        \begin{align}
            \ket{\nu}_{A B E} = \frac{1}{\sqrt{2}} \; \ket{0}_A (\sqrt{s}
            \ket{0}_{B} \ket{0}_{E} + \sqrt{1{-}s} \ket{1}_{B} \ket{1}_{E})   +
            \frac{\sqrt{s}}{\sqrt{2}} \; \ket{1}_A \ket{2}_{B} \ket{0}_E +
            \frac{\sqrt{1{-}s}}{\sqrt{2}} \; \ket{2}_A \ket{2}_{B} \ket{1}_E
            \,.
        \label{eq:3by3}
        \end{align}
        The systems $A,B$ are neither key systems nor shield systems.  Consider
        the local isometries:
        \begin{align}
            \begin{array}{lll}
            \ket{0}_A \rightarrow \ket{0}_{K_A} \ket{0}_{S_A}, & & \ket{0}_B \rightarrow \ket{0}_{K_B} \ket{0}_{S_B}, \\
            \ket{1}_A \rightarrow \ket{1}_{K_A} \ket{0}_{S_A}, & & \ket{1}_B \rightarrow \ket{0}_{K_B} \ket{1}_{S_B}, \\
            \ket{2}_A \rightarrow \ket{1}_{K_A} \ket{1}_{S_A}, & & \ket{2}_B \rightarrow \ket{1}_{K_B} \ket{0}_{S_B}.
            \end{array}
        \end{align}
        Applying the above local isometries to $\ket{\nu}_{A B E}$ results in 
        \begin{align}
        \nonumber
              & \frac{1}{\sqrt{2}} \; \ket{0}_{K_A} \ket{0}_{S_A}
              (\sqrt{s} \ket{0}_{K_B} \ket{0}_{S_B} \ket{0}_{E} + \sqrt{1{-}s} \ket{0}_{K_B} \ket{1}_{S_B} \ket{1}_{E})   
        \\    + & \frac{\sqrt{s}}{\sqrt{2}} \; \ket{1}_{K_A} \ket{0}_{S_A} \ket{1}_{K_B} \ket{0}_{S_B}  \ket{0}_E
            + \frac{\sqrt{1{-}s}}{\sqrt{2}} \; \ket{1}_{K_A} \ket{1}_{S_A} \ket{1}_{K_B} \ket{0}_{S_B} \ket{1}_E \,
        \label{eq:4by4}
        \\ = & \frac{1}{\sqrt{2}} \; \ket{0}_{K_A} \ket{0}_{K_B} 
        (\sqrt{s} \ket{0}_{S_A} \ket{0}_{S_B} \ket{0}_{E} + \sqrt{1{-}s} \ket{0}_{S_A} \ket{1}_{S_B} \ket{1}_{E})
        \nonumber 
        \\ + & \frac{1}{\sqrt{2}} \; \ket{1}_{K_A} \ket{1}_{K_B} 
        (\sqrt{s} \ket{0}_{S_A} \ket{0}_{S_B} \ket{0}_{E} + \sqrt{1{-}s} \ket{1}_{S_A} \ket{0}_{S_B} \ket{1}_{E})
        \end{align}
        If we trace out $E$ from the above, we get a state of the form
        Eq.~(\ref{HoroPbit}) where $U_{00} = U_{01} = U_{10} = I$, $U_{11}$ is
        the swap operator, and $\sigma = |00\rangle \langle 00| + |01\rangle
        \langle 01|$.

    \item Both the private and the classical capacity of $\NC_s$ and $\MC_d$ satisfy the strong
        converse property.  For the private capacity, the strong converse
        property is only known for (a) a subclass of degradable channels called
        generalized dephasing channels~\cite{TomamichelWildeEA17} (and for
        these channels, the quantum and private capacities coincide, $\QC =
        \PC$); (b) DSPT channels \cite{GaoJungeEA18}.  Both $\NC_s$ and $\MC_d$ are
        provably non-degradable and not of DSPT form, and hence fall outside
        both classes.  For the classical capacity, the strong converse is known
        for a number of channel classes: (a)~erasure channels
        \cite{WildeWinter14}; (b) depolarizing channels, unital qubit channels,
        and the Holevo-Werner channel \cite{KoenigWehner09}; (c)
        entanglement-breaking and Hadamard channels \cite{WildeWinterEA14}; (d)
        DSPT channels \cite{GaoJungeEA18}. By the arguments made above,
        neither $\NC_s$ nor $\MC_d$ belong to any of these classes.

    \item Finally, the coherent information $\QC^{(1)}(\cdot)$ is additive for both $\NC_s$ and $\MC_d$
        relative to the corresponding version of the spin alignment conjecture. The known classes of
        channels with additive coherent information are (a) less noisy channels
        \cite{Watanabe12} (which includes degradable channels
        \cite{DevetakShor05}); (b) anti-degradable channels; (c) PPT channels
        \cite{HorodeckiHorodeckiEA00}; (d) DSPT channels \cite{GaoJungeEA18}.
        The channel $\NC_s$ is neither degradable nor anti-degradable.  Since
        PPT channels have vanishing quantum capacity, $\NC_s$ and $\MC_d$ cannot be PPT
        either.
        
    \item The $\NC_s$ channel is unitarily equivalent to a qutrit-qutrit channel $\LC_\alpha$ that was introduced in \cite{WangDuan18} to study zero-error capacities. 
    In the follow-up work \cite{WangXieEA17}, the authors showed that the private and classical capacity of $\LC_\alpha$ coincide (see Sec.~\ref{sec:pccap} for a more detailed discussion) and satisfy the strong converse property, also noting that $\LC_{\alpha}$ (and hence also $\NC_s$) does not belong to any of the known classes of channels with that property for the private or classical capacity listed above.
    They further announced (without proof) an analytical upper bound on the quantum capacity of $\LC_\alpha$ separating it from the private capacity.
    This bound coincides with our upper bound, for which we give a full proof in Thm.~\ref{thm:Q}.
    In our independent study we construct the related channel $\NC_s$ as a hybrid of two simple channels (see Sec.~\ref{sec:NsChan}) and analyze in detail the additivity properties of the various information quantities of $\NC_s$. 
    Furthermore, we extend the channel construction to a larger family of channels of arbitrary dimension with similar information-theoretic properties (see Sec.~\ref{sec:MdChan} and \ref{sec:glnChan}).
    In the process, we also give full proofs of some of the statements announced in \cite{WangXieEA17} about the capacities of the unitarily equivalent channel $\LC_\alpha$.
\end{itemize}

\section{A 3-parameter generalization of $F_s$}
\label{sec:3parameter}

We now discuss generalizations of the channel $\NC_s$ and their capacities, to
further our understanding of the phenomena exhibited by $\NC_s$.  The
one-parameter isometry $F_s:\HC_a \mapsto \HC_b \ot \HC_c$ in~\eqref{isoDef1}
for $0 \leq s \leq 1$ has input dimension $d_a=3$, output dimension $d_b = 3$
and environment dimension $d_c=2$.  The isometry $F_s$ can be generalized by adding two additional
parameters, $\mu, \nu \in [0,1]$ without changing the input, output, and
environment spaces, leading to an isometry $V_{s,\mu,\nu}: \HC_a \mapsto \HC_b
\ot \HC_c$ that acts as
\begin{align}
\begin{aligned}
    V_{s,\mu,\nu} \, \ket{0} &= \sqrt{s}  \, \ket{0}_b\ot \ket{0}_c + \sqrt{1-s}   \; \ket{1}_b \ot \ket{1}_c \,,\\
    V_{s,\mu,\nu} \, \ket{1} &= \sqrt{\nu} \, \ket{1}_b\ot \ket{0}_c + \sqrt{1-\nu} \; \ket{2}_b \ot \ket{1}_c \,,\\
    V_{s,\mu,\nu} \, \ket{2} &= \sqrt{\mu} \, \ket{2}_b\ot \ket{0}_c + \sqrt{1-\mu} \; \ket{0}_b \ot \ket{1}_c \,.\\
\end{aligned}
\label{3PrmIso}
\end{align}
We use ${\cal W}_{s,\mu,\nu}$ to denote the resulting channel from $\hat \HC_a$
to $\hat \HC_b$.  
The isometry $V_{s,\mu,\nu}$ becomes $F_s$~\eqref{isoDef1} when $\nu=\mu=1$, so
$V_{s,\mu,\nu}$ and ${\cal W}_{s,\mu,\nu}$ indeed generalize $F_s$ and $\NC_s$
respectively.  

We study the degradability of the channel ${\cal W}_{s,\mu,\nu}$ using the
framework of \cite{SiddhuGriffiths16}.  We call a channel {\em pcubed} if it is
generated by a pcubed isometry.  We call an isometry pcubed if there exists a
basis of the input space that is mapped by the isometry to product states of
the output space and the environment space. This special basis of the input
space is not required to be orthogonal.   
It is straightforward to verify that, when $0 <s, \mu,\nu < 1 $, the isometry
$V_{s,\mu,\nu}$ is a {\em pcubed} isometry:   
\begin{align}
    V_{s,\mu,\nu} \, \ket{\al_i} = \ket{\bt_i} \ot \ket{\gm_i} \quad \text{for} \; 0 \leq i \leq 2 \,,
    \label{eq:p3prm}
\end{align}
where 
\begin{align}
   \begin{aligned}
    \ket{\al_i} &= a_0 \big( \ket{0} + \om^i k_1 \ket{1} + \om^{2i} k_2 \ket{2} \big), \\
    \ket{\bt_i} &= b_0 \big( \ket{0} + \om^i l_1 \ket{1} + \om^{2i} l_2 \ket{2} \big),\\
    \ket{\gm_i} &= c_0 \big( \ket{0} + \om^{2i} r\ket{2} \big) \,, 
   \end{aligned}
\end{align}
$a_0,b_0,$ and $c_0$ are constants that respectively normalize $\ket{\al_i},
\ket{\bt_i}, \ket{\gm_i}$, $\om$ is a cube root of unity, and $k_1,k_2,l_1,l_2$
and $r$ are non-negative numbers related to $s, \mu, \nu$ as follows:
\begin{align}
    \begin{aligned}
    k_1 &= \frac{1}{r} \sqrt{ \frac{1-s}{\nu}}\,, & \qquad\quad l_1 &= \frac{1}{r} \sqrt{ \frac{1-s}{s}} \,, & \qquad\quad r &= \left(\frac{(1-s)(1-\mu)(1-\nu)}{s\mu\nu}\right)^{1/6}\hspace*{-3ex},\\
    k_2 &= r \sqrt{ \frac{s}{1-\mu}} \,,& l_2 &= r \sqrt{ \frac{\mu}{1-\mu}}\,.
    \end{aligned}
\end{align}

The inner products between the states witnessing the pcubed channel also
characterize its degradability.  
To this end, let $A$, $B$, $C$ be the Gram matrices for the sets
$\{\ket{\al_i}\}, \{\ket{\bt_i}\}$, and $\{ \ket{\gm_i} \}$, respectively: 
\begin{align}
    A_{jk} = \inpd{\al_j}{\al_k}\,, \quad B_{jk} = \inpd{\bt_j}{\bt_k}\,,  
    \quad C_{jk} = \inpd{\gm_j}{\gm_k} \,.  
\end{align}
Each of $A,B,C$ has the form 
\begin{align}
    M = 
    \begin{pmatrix}
        1  & m & m^* \\
        m^*& 1 & m \\
        m & m^*& 1
    \end{pmatrix},
    \label{grmM}
\end{align}
where $m^*$ is the complex conjugate of $m$, and $M = A, B, C$ respectively
when $m$ is set to be 
\begin{align}
    a = |a_0|^2(1 + \om k_1^2 + \om^2 k_2^2), \quad
    b = |b_0|^2(1 + \om l_1^2 + \om^2 l_2^2), \quad \text{and} \quad
    c = |c_0|^2(1 + \om^2 r^2) \,,
    \label{eq:abcVals}
\end{align}
respectively.  As a side remark, since $V_{s, \mu,\nu}$ is an isometry, it
follows that $A = B*C$ where $*$ denotes the elementwise or Hadamard Product of
two matrices.

The channel ${\cal W}_{s, \mu, \nu}$ is degradable if and only if there is a
Gram matrix $D$ satisfying
\begin{align}
    B = C * D.
    \label{eq:gramDeg}
\end{align}
To see this, when such a Gram matrix $D$ exists, there are normalized kets
$\{\ket{\dl_i}\}$ in some auxiliary Hilbert space $\HC_d$ such that $D_{jk} =
\inpd{\dl_j}{\dl_k}$.
A possible degrading map can be generated by the pcubed isometry from $\HC_b
\mapsto \HC_c \ot \HC_d$ taking $\ket{\bt_i}$ to $\ket{\gm_i} \otimes
\ket{\dl_i}$.  The converse follows from \eqref{eq:p3prm} since the degradable
map must take $\ket{\bt_i}$ to $\ket{\gm_i}$, and must be generated by an
isometry.   (See Sec.~III.C in~\cite{SiddhuGriffiths16} for a detailed
discussion.)

\subsection{The isometry $V_{s,\mu,1-\mu}$}\label{sec:2parameter}

We now consider a two parameter subclass of isometries, $W_{s,\mu}$, obtained
from setting $\nu = 1-\mu$ in $V_{s,\mu,\nu}$, with $s\in[0,1/2]$ and
$\mu\in[0,1]$.  Following \eqref{3PrmIso}, 
\begin{align} 
\begin{aligned}
    W_{s,\mu} \, |0\rangle_a &= \sqrt{s}\, \ket{0}_b \ot \ket{0}_c + \sqrt{1-s}\,\ket{1}_b \ot \ket{1}_c \,, \\
    W_{s,\mu} \, |1\rangle_a &= \sqrt{1-\mu} \,\ket{1}_b \ot \ket{0}_c + \sqrt{\mu} \, \ket{2}_b \ot \ket{1}_c| \,, \\
    W_{s,\mu} \, |2\rangle_a &= \sqrt{\mu} \, \ket{2}_b \ot \ket{0}_c + \sqrt{1-\mu} \, \ket{0}_b \ot \ket{1}_c \,.
\end{aligned}
\label{intIso}
\end{align}
The resulting channel $\WC_{s,\mu}=\tr_c(W_{s,\mu}\cdot W_{s,\mu}^\dagger)$ has
two Kraus operators
\begin{align}
    K_0 &= \begin{pmatrix}
    \sqrt{s} & 0 & 0\\
    0 & \sqrt{1-\mu} & 0\\
    0 & 0 & \sqrt{\mu} 
    \end{pmatrix} \,,
    & 
    K_1 &= \begin{pmatrix}
    0 & 0 &\sqrt{1-\mu}\\
    \sqrt{1-s} & 0 & 0\\
    0 &\sqrt{\mu}& 0 
    \end{pmatrix} \,.
\end{align}
As a side remark, when $\mu=1$, the channel $\WC_{s,1}$ is unitarily equivalent
to $\NC_s$: we have $\NC_s = \WC_{s,1}(U\cdot U^\dagger)$, where the unitary
$U$ swaps $|1\rangle_a$ and $|2\rangle_a$ at the input.

For the rest of the discussion we focus on $s=1/2$.  We will first evaluate the
capacities of $\WC_{1/2,1/2}$.  Then, we study the degradability of
$\WC_{1/2,\mu}$, followed by a detailed numerical analysis of its capacities.

When $\mu=1/2$, $K_0 = \frac{1}{\sqrt{2}}\one$, and $K_1$ is proportional to
the qutrit-$X$ Heisenberg-Weyl operator, which acts as $X |i\rangle =
|i+1\!\!\mod 3\rangle$.
Hence, $\WC_{1/2,1/2}$ is the qutrit $X$-dephasing channel with dephasing
probability $1/2$, which fixes the eigenbasis of $X$:
\begin{align}
    \begin{aligned} 
    |\psi_0\rangle_a &= \frac{1}{\sqrt{3}} (|0\rangle_a + |1\rangle_a + |2\rangle_a)\\
    |\psi_1\rangle_a &= \frac{1}{\sqrt{3}} (|0\rangle_a + \omega |1\rangle_a + \omega^2 |2\rangle_a)\\
    |\psi_2\rangle_a &= \frac{1}{\sqrt{3}} (|0\rangle_a + \omega^2 |1\rangle_a + \omega |2\rangle_a) \,.
    \end{aligned}
    \label{eq:dephasing-basis}
\end{align}
We can now evaluate the capacities.  The above invariant basis can be used to
transmit $\log d$ bits, and this code saturates the dimension bound for the
classical capacity.  Thus, the Holevo information is additive and
$\chi(\WC_{1/2,1/2}) = \CC(\WC_{1/2,1/2}) = \log d$.  For the quantum capacity,
one may evaluate the SDP upper bound given in \eqref{eq:gamma-sdp} for
$\WC_{1/2,1/2}$ giving a value of $\log(3/2)$.  This upper bound is in turn
achieved by the input state $\rho_a = \frac{1}{3}\one$ giving coherent
information $\Delta(\WC_{1/2,1/2},\rho_a) = \log(3/2)$.  Furthermore,
$\WC_{1/2,1/2}$ is degradable\footnote{The channel $\WC_{1/2,1/2}$ can be
generated by the isometry which attaches $\ket{+}_c$ to system $a$, and
conditioned on system $c$ being in the state $\ket{1}$, the unitary $K_1$ is
applied to system $a$ which is then relabeled as system $b$.  This isometry
applied to system $b$ generates a valid degrading map.} so the private and
quantum capacities coincide.    Finally, the channel mutual information
$I(\WC_{1/2,1/2})$ is equal to $\log(9/2)$, since the input $\rho_a =
\frac{1}{3}\one$ simultaneously maximizes the first term and separately the
second and third terms combined in \eqref{eq:entasscap}. 

Altogether, for the quantum and private capacities we have

\begin{subequations}
	\begin{align}
	    \QC(\WC_{1/2,1/2}) &= \QC^{(1)}(\WC_{1/2,1/2}) = \PC^{(1)}(\WC_{1/2,1/2}) = \PC(\WC_{1/2,1/2}) = \log\frac{3}{2},
    \intertext{while for the classical capacities we have}
    \CC(\WC_{1/2,1/2}) &= \chi(\WC_{1/2,1/2}) = \log 3 \,, \\
    \CC_{E}(\WC_{1/2,1/2}) &= \log \frac{9}{2} \,.
\end{align}
    \label{eq:caps-dephasing}
\end{subequations}

Keeping $s=1/2$, we now return to general $\mu \in [0,1]$ in studying
$\WC_{1/2,\mu}$. 
We first use the pcubed framework to study degradability of this
subclass of channels.  The values of $a,b,$ and $c$ in~\eqref{eq:abcVals} are
given by 
\begin{align}
    a &= \frac{1-2\mu}{2(2-\mu)}\,, & b &= \om^2\frac{2\mu-1}{2-\mu}\,, & c &= -\om/2 \,.
\end{align}
Setting $m$ in \eqref{grmM} to $a,b,$ and $c$ gives the Gram matrices $A,B,$
and $C$ respectively.  The channel $\WC_{1/2,\mu}$ is degradable iff $B=C*D$
for some Gram matrix $D$ (see~\eqref{eq:gramDeg}).  If such a matrix exists, it
has the form given by $M$ in \eqref{grmM} with $m = d = 2\om (1- 2\mu)/(2 -
\mu)$, and thus is a valid Gram matrix (being positive semi-definite) when $1 +
2 d^3 - 3 |d|^2 \geq 0$ (see eq.~(57) in~\cite{SiddhuGriffiths16}).  Setting $d
= 2\om (1 - 2\mu)/(2 - \mu)$ gives $\mu \leq 2/3$.    Thus, $\WC_{1/2,\mu}$ is
degradable iff $\mu \leq 2/3$.

We now evaluate lower and upper bounds on the quantum, private, and classical
capacities of the channel $\WC_{1/2,\mu}$ for $\mu\in[0,1]$, collected in
Fig.~\ref{fig:wsm}.  
The bounds on the quantum capacity $\QC(\WC_{1/2,\mu})$ are obtained by
numerically optimizing the single-letter coherent information
$\QC^{(1)}(\WC_{1/2,\mu})$ (solid blue line in Fig.~\ref{fig:wsm}) and the SDP
upper bound \eqref{eq:gamma-sdp} (dashed blue line in Fig.~\ref{fig:wsm}).
Interestingly, the (possibly tighter) SDP upper bound from \cite{FangFawzi19}
coincides with the SDP bound \eqref{eq:gamma-sdp} from \cite{WangFangEA18} for
all  $\mu\in[0,1]$.  To bound the classical capacity $\CC(\WC_{1/2,\mu})$, we
numerically optimize the single-letter Holevo information $\chi(\WC_{1/2,\mu})$
(solid green line in Fig.~\ref{fig:wsm}) and evaluate the SDP upper bound
\eqref{eq:beta-sdp} (dashed green line in Fig.~\ref{fig:wsm}).  The
entanglement-assisted capacity $\CC_{E}(\WC_{1/2,\mu})$ is computed using the
technique developed in \cite{FawziFawzi18} (solid magenta line in
Fig.~\ref{fig:wsm}).  Finally, for the private capacity $\PC(\WC_{1/2,\mu})$ we
numerically optimize the single-letter private information
$\PC^{(1)}(\WC_{1/2,\mu})$ (solid orange line in Fig.~\ref{fig:wsm}).  To
obtain an upper bound on the private capacity we employ the following recent
result by Fawzi and Fawzi~\cite{FangFawzi19} providing a bound on the private capacity of a
quantum channel in terms of a conic program:
\begin{proposition}[{\cite{FangFawzi19}}]
	\label{prop:P-ub-FF}
    Let $\BC\colon \hat{\HC}_a\to \hat{\HC}_b$ be a quantum channel with
    (unnormalized) Choi-Jamio\l{}kowski operator $J_{ab}^\BC$.  Let furthermore
    $l\in\mathbb{N}$ and set $\alpha = 1+2^{-l}$.  Then we have 
    \begin{align}
	    \PC(\BC) \leq \hat{E}_{\alpha}(\BC),
	\end{align}
    where $\hat{E}_{\alpha}(\BC) = l 2^l -(2^l+1)\log (2^l+1) + (2^l+1)\log
    T_\alpha(\BC)$, and $T_\alpha(\BC)$ is the solution of the following conic
    program:
    \begin{align}
        \begin{aligned} 
            T_\alpha(\BC) = \text{\normalfont max.~} & \tr \left[J_{ab}^\BC \left( K\herm - \sum\nolimits_{i=1}^l W_i\right)\right]\\
            \text{\normalfont s.t.~} & \lbrace W_i\rbrace_{i=1}^l \in\Herm(\HC_a\otimes\HC_b)\\
            & K,\lbrace Z_i\rbrace_{i=0}^l \in \LC(\HC_{a}\otimes \HC_b)\\
            & \rho_a\in \Herm(\HC_a), \tr\rho_a = 1\\
        &\begin{pmatrix}
        \rho_a \otimes \one_b & K\\ K^\dagger &Z_l\herm 
        \end{pmatrix} \geq 0\\
        & \begin{pmatrix}
        W_i & Z_i\\ Z_i^\dagger & Z_{i-1}\herm 
        \end{pmatrix}\geq 0 \quad\text{\normalfont for all $i=1,\dots,l$}\\
        & \rho_a \otimes \one_b - Z_0\herm \text{\normalfont ~block-positive.}
        \end{aligned}
    \label{eq:pc-conic}
	\end{align}
\end{proposition}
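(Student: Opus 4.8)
The plan is to prove the bound by the now-standard ``amortisation'' route for converse bounds on private capacity, carried out with the geometric (``maximal'') R\'enyi divergence $\hat D_\alpha(\rho\|\sigma)=\tfrac{1}{\alpha-1}\log\tr\big[\sigma\,(\sigma^{-1/2}\rho\sigma^{-1/2})^{\alpha}\big]$ in place of the Umegaki relative entropy, so that the resulting channel functional is additive and conic-program computable. Write $E_{R,\alpha}(\omega_{AB})=\min_{\sigma_{AB}\in\mathrm{SEP}}\hat D_\alpha(\omega_{AB}\|\sigma_{AB})$ for the associated R\'enyi relative entropy of entanglement, with channel version $E_{R,\alpha}(\BC)=\max_{\phi_{RA}}E_{R,\alpha}\big((\IC_R\otimes\BC)(\phi_{RA})\big)$ over pure inputs. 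First I would record the three structural facts about $\hat D_\alpha$ that make everything work, all available from \cite{FangFawzi19}: (i) it dominates the Umegaki relative entropy and is monotone under CPTP maps; (ii) it is additive on tensor products; and (iii) it obeys a chain rule in which applying a channel $\NC$ to the second system of a pair increases $\hat D_\alpha$ by at most the channel divergence $\hat D_\alpha(\NC)$. Note also that $\mathrm{SEP}$ is closed under the local operations (tensoring in fixed separable states, applying local channels, the decoding LOCC map) used below, so it is a legitimate ``free set''.

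\emph{Step 1 (operational converse).} An $(n,R,\varepsilon)$ private code produces, after decoding, a state on Alice's reference, Bob's output and Eve that is $\varepsilon$-close to an ideal private state $\gamma$ of key dimension $K=2^{nR}$ of the form \eqref{HoroPbit}; such states satisfy $E_{R,\alpha}(\gamma)\ge\log K$. Combining this with (i), monotonicity of $\hat D_\alpha$ under the decoding LOCC map, and a one-shot continuity estimate controlling the $\varepsilon$-perturbation yields a one-shot converse $R\le \tfrac1n\big[\,(\text{geometric-R\'enyi relative entropy of entanglement of }\BC^{\otimes n})+g(\varepsilon,n,\alpha)\,\big]$, where optimising the smoothing in $g$ and writing the bound through the \emph{unnormalised} quantity $T_\alpha$ rather than through $\hat D_\alpha$ directly is exactly what produces the affine constants $l2^l-(2^l+1)\log(2^l+1)+(2^l+1)\log(\cdot)$ appearing in the statement; I expect this bookkeeping to be tedious but routine.

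\emph{Step 2 (single-letterisation, the crux).} Using the chain rule (iii) inductively over the $n$ channel uses, together with additivity (ii) on the product states that arise, one proves the ``no-amortisation'' inequality $E_{R,\alpha}(\BC^{\otimes n})\le n\,E_{R,\alpha}(\BC)$; the reverse direction is trivial from product inputs, so equality holds and no regularisation is needed. This is the step I expect to be the main obstacle: it demands checking that $x\mapsto x^{\alpha}$ for $1<\alpha\le2$ is ``operator-convex enough'' to support the chain rule with the stated channel-divergence remainder, and that the induction stays inside $\mathrm{SEP}$. Inserting Step~2 into the one-shot bound of Step~1 at the single-letter level and sending $n\to\infty$, $\varepsilon\to0$ gives $\PC(\BC)\le\hat E_\alpha(\BC)$.

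\emph{Step 3 (conic-program form).} It remains to express the single-use geometric-R\'enyi relative entropy of entanglement of $\BC$ as the conic program defining $T_\alpha(\BC)$. For $\alpha=1+2^{-l}$ the map $X\mapsto X^{1+2^{-l}}$ is a composition of $l$ operator geometric means, each semidefinite-representable, so $\tr[\sigma(\sigma^{-1/2}\rho\sigma^{-1/2})^{\alpha}]$ equals the optimum of an SDP in auxiliary variables $K,Z_0,\dots,Z_l$ with the block $2\times2$ positivity constraints shown; dualising the inner minimisation over separable $\sigma_{AB}$ introduces the Hermitian variables $W_i$ and, since the dual cone of the separable cone is the block-positive cone, precisely the constraint ``$\rho_a\otimes\one_b-Z_0\herm$ block-positive''; the outer maximisation over channel inputs contributes $\rho_a$ with $\tr\rho_a=1$ and brings $J^{\BC}_{ab}$ into the objective $\tr[J^{\BC}_{ab}(K\herm-\sum_i W_i)]$. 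Re-inserting the normalisation constants then recovers the displayed formula for $\hat E_\alpha(\BC)$ and completes the proof.
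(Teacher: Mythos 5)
First, a point of comparison: the paper does not prove Proposition~\ref{prop:P-ub-FF} at all --- it is quoted directly from \textcite{FangFawzi19} and used as a black box (via Lemma~\ref{lem:P-ub}) to get numerical upper bounds on $\PC(\WC_{1/2,\mu})$. So the only proof to measure your attempt against is the one in that reference, and your outline does identify its actual route: work with the geometric (maximal) R\'enyi divergence $\hat D_\alpha$, use its dominance over the Umegaki relative entropy and data processing together with the private-state characterization \eqref{HoroPbit} to get an operational converse, use the chain rule of $\hat D_\alpha$ to collapse amortization and single-letterize, and use the fact that for $\alpha=1+2^{-l}$ the $\alpha$-power is expressible through $l$ iterated matrix geometric means, with the dual of the separable cone (block-positive operators) supplying the last constraint in \eqref{eq:pc-conic}. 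That is the correct skeleton.

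As a proof, however, the attempt has genuine gaps. The two load-bearing ingredients are asserted rather than established: the chain rule for $\hat D_\alpha$ on $\alpha\in(1,2]$ (your Step 2) is a substantive theorem of \cite{FangFawzi19}, not a matter of checking that $x\mapsto x^\alpha$ is ``operator-convex enough,'' and the semidefinite representation in Step 3 --- including why the variables $K,\{Z_i\},\{W_i\}$ and the two families of $2\times 2$ block constraints arise, and why the input optimization can be folded in so that the whole max-min becomes the single conic program $T_\alpha(\BC)$ --- is exactly the technical content that produces \eqref{eq:pc-conic} and is not reconstructed. More importantly, your attribution of the affine constants is wrong: $l2^l-(2^l+1)\log(2^l+1)=\tfrac{1}{\alpha-1}\log\tfrac{1}{\alpha-1}-\tfrac{\alpha}{\alpha-1}\log\tfrac{\alpha}{\alpha-1}$ comes from the variational reformulation that linearizes the $\tfrac{\alpha}{\alpha-1}\log(\cdot)$ objective so that the maximization over $\rho_a$ and the auxiliary variables is jointly conic; it has nothing to do with ``optimising the smoothing in $g(\varepsilon,n,\alpha)$.'' Indeed the Fang--Fawzi bound is a strong-converse-type bound in which the $\varepsilon$-dependence is handled through R\'enyi monotonicity and disappears from the final statement, so no smoothing residue can survive into $\hat E_\alpha(\BC)$. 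Without the chain rule, the SDP representation, and the correct origin of these constants, the sketch does not yet constitute a proof of the proposition, though it is aimed along the right lines.
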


In the above, $X\herm = X+X^\dagger$, and an operator $X_{ab}$ is
block-positive (with respect to the bipartition $a : b$) if
$(\langle\psi_a|\otimes \langle \phi_b|) X_{ab} (|\psi_a\rangle \otimes |
\phi_b\rangle) \geq 0$ for all $|\psi_a\rangle\in\HC_a$ and
$|\phi_b\rangle\in\HC_b$.  Hence, block-positive bipartite states are the %Choi
states Choi-Jamio\l{}kowski operators of positive maps (see, e.g.,
\cite{Wolf12}).

The conic program in Proposition~\ref{prop:P-ub-FF} only reduces to an SDP if
$d_ad_b \leq 6$, whereas our channel $\WC_{1/2,\mu}$ has qutrit input and
output, $d_a=d_b=3$.  However, the following strategy suggested to us by Hamza
Fawzi \cite{Fawzi20} may be employed to obtain an (SDP-computable) upper bound
on the quantity $T_\alpha(\BC)$ (and hence $\hat{E}_\alpha(\BC)$), in turn
giving an upper bound on the private capacity of a channel $\BC$ via
Proposition~\ref{prop:P-ub-FF}:

\begin{lemma}[{\cite{FangFawzi19,Fawzi20}}]
	\label{lem:P-ub}
    With the same notation as in Proposition~\ref{prop:P-ub-FF}, we have the
    following bound on the private capacity of a quantum channel $\BC\colon
    \hat{\HC}_a\to \hat{\HC}_b$:
    \begin{align}
        \PC(\BC) \leq \min_{\lbrace \phi_a^i\rbrace_{i=1}^N} F_\alpha(\BC,\lbrace \phi_a^i\rbrace_{i=1}^N),
    \end{align}
    where $N\in\mathbb{N}$ is some fixed natural number, the minimization is over
    sets of pure states $\phi_a^i\in\HC_a$, $i=1,\dots,N$, and the quantity
    $F_\alpha(\BC,\lbrace \phi_a^i\rbrace_{i=1}^N)$ is defined as
    \begin{align}
    F_{\alpha}(\BC) = l 2^l -(2^l+1)\log (2^l+1) + (2^l+1)\log U_\alpha(\BC,\lbrace \phi_a^i\rbrace_{i=1}^N).
    \end{align}
    In the above, $U_\alpha(\BC,\lbrace \phi_a^i\rbrace_{i=1}^N)$ is the solution
    of the following semidefinite program:
    \begin{align}
    \begin{aligned} 
    U_\alpha(\BC,\lbrace \phi_a^i\rbrace_{i=1}^N) = \text{\normalfont max.~} & \tr \left[J_{ab}^\BC \left( K\herm - \sum\nolimits_{i=1}^l W_i\right)\right]\\
    \text{\normalfont s.t.~} & \lbrace W_i\rbrace_{i=1}^l \in\Herm(\HC_a\otimes\HC_b)\\
    & K,\lbrace Z_i\rbrace_{i=0}^l \in \LC(\HC_{a}\otimes \HC_b)\\
    & \rho_a\in \Herm(\HC_a), \tr\rho_a = 1\\
    &\begin{pmatrix}
    \rho_a \otimes \one_b & K\\ K^\dagger &Z_l\herm 
    \end{pmatrix} \geq 0\\
    & \begin{pmatrix}
    W_i & Z_i\\ Z_i^\dagger & Z_{i-1}\herm 
    \end{pmatrix}\geq 0 \quad\text{\normalfont for all $i=1,\dots,l$}\\
    & \tr_a \left( \sigma_{ab} \left( (\phi_a^i)^T \otimes \one_b\right) \right) \geq 0 \quad\text{for $i=1,\dots,N$,}
    \end{aligned}
    \label{eq:pc-sdp}
    \end{align}
    where $\sigma_{ab} = \rho_a \otimes \one_b - Z_0\herm$.
\end{lemma}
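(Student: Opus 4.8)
The plan is to derive Lemma~\ref{lem:P-ub} directly from Proposition~\ref{prop:P-ub-FF} by relaxing the one constraint of the conic program \eqref{eq:pc-conic} that is not already a semidefinite constraint, namely the requirement that $\sigma_{ab} = \rho_a\otimes\one_b - Z_0\herm$ be block-positive. First I would record the standard fact that an operator $\sigma_{ab}$ on $\HC_a\otimes\HC_b$ is block-positive with respect to the cut $a:b$ if and only if $\langle\phi|_a\,\sigma_{ab}\,|\phi\rangle_a \geq 0$ as an operator on $\HC_b$ for every unit vector $|\phi\rangle\in\HC_a$, and that this partial expectation can be rewritten as $\tr_a\big(\sigma_{ab}\,((\phi_a)^T\otimes\one_b)\big)$, where $(\phi_a)^T$ denotes the transpose of $|\phi\rangle\langle\phi|$ in the fixed basis; since we range over all unit vectors, the complex conjugation introduced by the transpose is immaterial. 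Hence the block-positivity constraint in \eqref{eq:pc-conic} is equivalent to the infinite family of linear constraints $\tr_a\big(\sigma_{ab}\,((\phi_a)^T\otimes\one_b)\big)\geq 0$ indexed by pure states $\phi_a$, each of which is exactly the type of constraint appearing in \eqref{eq:pc-sdp}.

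Next, for any fixed collection of pure states $\{\phi_a^i\}_{i=1}^N$, I would replace the block-positivity constraint by only the $N$ linear constraints $\tr_a\big(\sigma_{ab}\,((\phi_a^i)^T\otimes\one_b)\big)\geq 0$, $i=1,\dots,N$. Every other constraint of \eqref{eq:pc-conic} is already a genuine semidefinite constraint and is kept verbatim, so what remains is precisely the SDP \eqref{eq:pc-sdp} defining $U_\alpha(\BC,\{\phi_a^i\}_{i=1}^N)$. Since we have only dropped constraints from a maximization problem, the feasible set grows and therefore $T_\alpha(\BC)\leq U_\alpha(\BC,\{\phi_a^i\}_{i=1}^N)$. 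Because $\hat{E}_\alpha(\BC)$ and $F_\alpha(\BC,\{\phi_a^i\}_{i=1}^N)$ are obtained by applying one and the same function $t\mapsto l2^l - (2^l+1)\log(2^l+1) + (2^l+1)\log t$, which is strictly increasing since $2^l+1>0$, to $T_\alpha(\BC)$ and to $U_\alpha(\BC,\{\phi_a^i\}_{i=1}^N)$ respectively, it follows that $\hat{E}_\alpha(\BC)\leq F_\alpha(\BC,\{\phi_a^i\}_{i=1}^N)$. Combining this with the bound $\PC(\BC)\leq\hat{E}_\alpha(\BC)$ of Proposition~\ref{prop:P-ub-FF} yields $\PC(\BC)\leq F_\alpha(\BC,\{\phi_a^i\}_{i=1}^N)$ for every such collection, and minimizing over all collections of $N$ pure states gives the claimed inequality.

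I expect the only delicate point to be the bookkeeping in the first step: one must verify that the transpose and complex-conjugation conventions are arranged so that the constraint $\tr_a\big(\sigma_{ab}\,((\phi_a^i)^T\otimes\one_b)\big)\geq 0$ in \eqref{eq:pc-sdp} is genuinely the restriction of the block-positivity condition to the vector $|\phi^i\rangle$ and not to some other vector. Everything else is routine: the remaining constraints transcribe unchanged, the relaxation inequality $T_\alpha\leq U_\alpha$ is immediate, and monotonicity of $\hat{E}_\alpha$ in its underlying optimal value is clear from the formula. One should also note in passing that the relevant optimal values are finite (as in \cite{FangFawzi19,Fawzi20}) so that $F_\alpha$ is well defined, and that the minimum over the compact set of $N$-tuples of pure states is attained, or else one reads it as an infimum.
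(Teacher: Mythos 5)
Your proposal is correct and follows essentially the same route as the paper: it relaxes the block-positivity constraint of the conic program in Proposition~\ref{prop:P-ub-FF} to the $N$ linear constraints indexed by the chosen pure states, concludes $T_\alpha(\BC)\leq U_\alpha(\BC,\lbrace\phi_a^i\rbrace_{i=1}^N)$ from the enlarged feasible set, and uses monotonicity of the outer function before minimizing over the state collections. Your explicit remark that the transpose only reshuffles which unit vector each constraint corresponds to is a correct and slightly more careful rendering of the paper's Choi-isomorphism phrasing of the same step.
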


\begin{proof}
    The block positivity constraint on $\sigma_{ab} = \rho_a \otimes \one_b -
    Z_0\herm$ translates via the Choi isomorphism to positivity of the map
    $\Psi\colon \hat{\HC}_a\to \hat{\HC}_b$ whose Choi-Jamio\l{}kowski operator
    is $\sigma_{ab}$, i.e., $\Psi(\chi_a)\geq 0$ for all pure state
    $|\chi\rangle_a\in\HC_a$.  Relaxing this positivity constraint to only
    requiring $\Psi(\phi_a^i)\geq 0$ for some fixed pure states $\phi_a^i$,
    $i=1,\dots,N$ now yields a maximization over a larger set of
    Choi-Jamio\l{}kowski operators resp.~maps $\Psi$, and hence we obtain
    $U_\alpha(\BC,\lbrace \phi_a^i\rbrace_{i=1}^N)\geq T_\alpha(\BC)$, where
    $T_\alpha(\BC)$ is defined in Proposition~\ref{prop:P-ub-FF}.
    Strengthening this bound by minimizing over sets of pure states $\lbrace
    \phi_a^i\rbrace_{i=1}^N$ for fixed $N\in\mathbb{N}$ finishes the proof.
\end{proof}

Any choice of $N\in\mathbb{N}$ and pure states $\lbrace\phi_a^i\rbrace_{i=1}^N$
yields a feasible point in the minimization in Lemma~\ref{lem:P-ub}, and hence
an upper bound on the private capacity of $\BC$.  The states
$\lbrace\phi_a^i\rbrace_{i=1}^N$ can for example be sampled from the Haar
measure.  For our purposes, choosing (a union of) mutually unbiased bases
yields a tighter upper bound.  More precisely, we choose the computational
basis $\lbrace |i\rangle_a\rbrace_{i=0}^2$ and the $X$-eigenbasis $\lbrace
|\psi_i\rangle_a\rbrace_{i=0}^2$, defined in \eqref{eq:dephasing-basis}.  The
resulting bound on $\PC(\WC_{1/2,\mu})$ is plotted in Fig.~\ref{fig:wsm}
(dotted orange line).  We also performed a similar numerical analysis for the
capacities of the complementary channel $\WC^c_{s,\mu}$ in
Figure~\ref{fig:wsm-comp}.

Figure~\ref{fig:wsm} reveals a number of interesting properties of the
1-parameter channel family $\WC_{1/2,\mu}$:
\begin{itemize}
    \item For $\mu\lesssim 0.8$, the coherent information (solid blue line in
        Fig.~\ref{fig:wsm}) and private information (solid orange line in
        Fig.~\ref{fig:wsm}) coincide.  The channel is degradable for
        $\mu\lesssim 0.65$ and anti-degradable for $\mu=0$, see
        Fig.~\ref{fig:deg-antideg}.
	
    \item For $\mu \gtrsim 0.8$, the private information (solid orange line in
        Fig.~\ref{fig:wsm}) is strictly larger than the coherent information
        (solid blue line in Fig.~\ref{fig:wsm}).  For $\mu \gtrsim 0.9$ the
        private information exceeds the SDP upper bound on the quantum capacity
        (dashed blue line in Fig.~\ref{fig:wsm}), hence giving a provable
        separation between quantum and private capacity.
	
    \item The upper bound on $\PC(\WC_{1/2,\mu})$ derived via
        Lemma~\ref{lem:P-ub} (dotted orange line in Fig.~\ref{fig:wsm}) clearly
        separates the private capacity from the classical capacity for all
        $\mu<1$.
	
    \item At $\mu\approx 0.8$ the Holevo information (solid green line in
        Fig.~\ref{fig:wsm}) has an inflection point, changing from concave to
        convex.  For $\mu \lesssim 0.8$ the optimal Holevo information is
        achieved by an ensemble of three pure states, whereas for $\mu\gtrsim
        0.8$ four pure states are needed.  This may be a signature of
        super-additivity of Holevo information, and will be further
        investigated in future work.
\end{itemize}

\begin{figure}
	\centering
	\begin{tikzpicture}
	\begin{axis}[
	xlabel=$\mu$,
	xmin = 0,
	xmax = 1,
	ymin = 0,
	ymax = 2.2,
	scale=1.6,
	every axis plot/.append style={line width=1.5pt},
	legend cell align={left},
	legend columns = 2,
	transpose legend,
	legend style={at = {(0.5,1.075)},anchor = south,/tikz/every even column/.append style={column sep=.65em}},
	grid = both,
	extra y ticks={0.585,1.585,2.17},
	extra y tick labels={$\log\dfrac{3}{2}$,$\log 3$,$\log\dfrac{9}{2}$},
	extra y tick style={ticklabel pos=right, grid style={thick,dashed,color=plotgray}}
	]
	
	\addplot[mark=none,color=plotorange] table[x=mu,y=pi] {cap_bounds.dat};
	\addplot[mark=none,color=plotorange,dotted] table[x=mu,y=pc] {cap_bounds.dat};
	\addplot[mark=none,color=plotblue] table[x=mu,y=ci] {cap_bounds.dat};
	\addplot[mark=none,color=plotblue,dashed] table[x=mu,y=qc] {cap_bounds.dat};
	\addplot[mark=none,color=plotgreen] table[x=mu,y=hi] {cap_bounds.dat};
	\addplot[mark=none,color=plotgreen,dashdotted] table[x=mu,y=cc] {cap_bounds.dat};
	\addplot[mark=none,color=plotmagenta] table[x=mu,y=cea] {cap_bounds.dat};
	\legend{$\PC^{(1)}(\WC_{1/2,\mu})$,UB on $\PC(\WC_{1/2,\mu})$,$\QC^{(1)}(\WC_{1/2,\mu})$,UB on $\QC(\WC_{1/2,\mu})$,$\chi(\WC_{1/2,\mu})$,UB on $\CC(\WC_{1/2,\mu})$,$\CC_{E}(\WC_{1/2,\mu})$};
	\end{axis}
	\end{tikzpicture}
    \caption{ Lower and upper bounds (UB) on the capacities of the quantum
    channel $\WC_{1/2,\mu}$ defined via the isometry \eqref{intIso}.  The
    quantum capacity $\QC(\WC_{1/2,\mu})$ is bounded from below by the
    single-letter coherent information $\QC^{(1)}(\WC_{1/2,\mu})$ (solid blue)
    and from above by the SDP bound \eqref{eq:gamma-sdp} derived in
    \cite{WangFangEA18} (dashed blue).  The private capacity
    $\PC(\WC_{1/2,\mu})$ is bounded from below by the private information
    $\PC^{(1)}(\WC_{1/2,\mu})$ (solid orange) and from above by the bound given
    by Lemma~\ref{lem:P-ub} (dotted orange).  The classical capacity
    $\CC(\WC_{1/2,\mu})$ is bounded from below by the Holevo information
    $\chi(\WC_{1/2,\mu})$ (solid green) and from above by the SDP bound
    \eqref{eq:beta-sdp} derived in \cite{WangXieEA17} (dash-dotted green).  We
    also plot the entanglement-assisted classical capacity
    $\CC_{E}(\WC_{1/2,\mu})$ (solid magenta), computed using the technique in
    \cite{FawziFawzi18}.  For $\mu=\frac{1}{2}$ the channel $\WC_{1/2,1/2}$ is
    a dephasing channel, for which the special values of the capacities from
    \eqref{eq:caps-dephasing} are marked on the right-hand side.  }
	\label{fig:wsm}
\end{figure}

\begin{figure}
	\centering
	\begin{tikzpicture}
	\begin{axis}[
	xlabel=$\mu$,
	xmin = 0,
	xmax = 1,
	ymin = -0.05,
	ymax = 2.05,
	scale=1.6,
	every axis plot/.append style={line width=1.5pt},
	legend cell align={left},
	legend columns = 2,
	transpose legend,
	legend style={at = {(0.5,1.075)},anchor = south,/tikz/every even column/.append style={column sep=.65em}},
	grid = both,
	extra y ticks={1.585},
	extra y tick labels={$\log 3$},
	extra y tick style={ticklabel pos=right, grid style={thick,dashed,color=plotgray}}
	]
	
	\addplot[mark=none,color=plotorange] table[x=mu,y=pi] {comp_cap_bounds.dat};
	\addplot[mark=none,color=plotorange,dotted] table[x=mu,y=pc] {comp_cap_bounds.dat};
	\addplot[mark=none,color=plotblue] table[x=mu,y=ci] {comp_cap_bounds.dat};
	\addplot[mark=none,color=plotblue,dashed] table[x=mu,y=qc] {comp_cap_bounds.dat};
	\addplot[mark=none,color=plotgreen] table[x=mu,y=hi] {comp_cap_bounds.dat};
	\addplot[mark=none,color=plotmagenta] table[x=mu,y=cea] {comp_cap_bounds.dat};
	\legend{$\PC^{(1)}(\WC_{1/2,\mu})$,UB on $\PC(\WC_{1/2,\mu})$,$\QC^{(1)}(\WC_{1/2,\mu})$,UB on $\QC(\WC_{1/2,\mu})$,$\CC(\WC_{1/2,\mu})$,$\CC_{E}(\WC_{1/2,\mu})$};
	\end{axis}
	\end{tikzpicture}
    \caption{ Lower and upper bounds (UB) on the capacities of the
    complementary channel $\WC^c_{1/2,\mu}$ defined via the isometry
    \eqref{intIso}.  The quantum capacity $\QC(\WC^c_{1/2,\mu})$ is bounded
    from below by the single-letter coherent information
    $\QC^{(1)}(\WC^c_{1/2,\mu})$ (solid blue) and from above by the SDP bound
    \eqref{eq:gamma-sdp} derived in \cite{WangFangEA18} (dashed blue).  The
    private capacity $\PC(\WC^c_{1/2,\mu})$ is bounded from below by the
    private information $\PC^{(1)}(\WC^c_{1/2,\mu})$ (solid orange) and from
    above by the bound given by Lemma~\ref{lem:P-ub} (dotted orange).  The
    Holevo information $\chi(\WC^c_{1/2,\mu})$ coincides with the SDP bound
    \eqref{eq:beta-sdp} derived in \cite{WangXieEA17}, and is hence
    (numerically) equal to the classical capacity $\CC(\WC^c_{1/2,\mu})$ (solid
    green).  We also plot the entanglement-assisted classical capacity
    $\CC_{E}(\WC_{1/2,\mu})$ (solid magenta), computed using the technique in
    \cite{FawziFawzi18}.  }
	\label{fig:wsm-comp}
\end{figure}
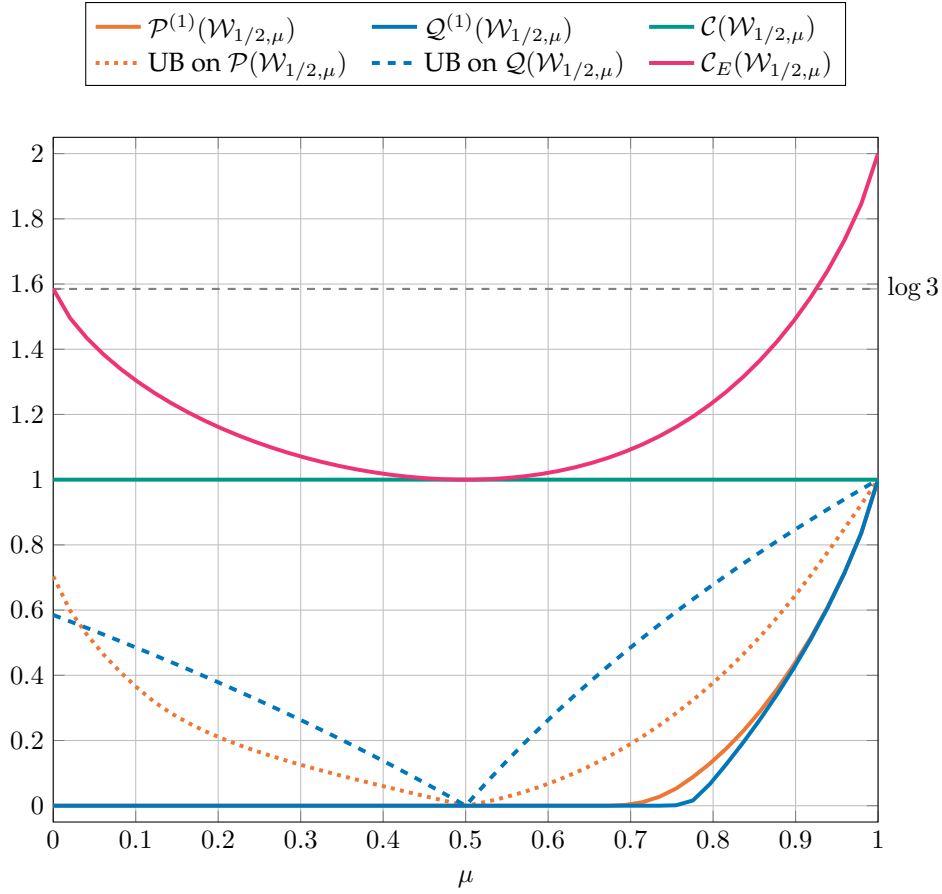

\begin{figure}
	\centering
	\begin{tikzpicture}
	\begin{axis}[
	xlabel=$\mu$,
	xmin = 0,
	xmax = 1,
	scale=1,
	every axis plot/.append style={line width=1.5pt},
	legend cell align={left},
	legend columns = 2,
	legend style={at = {(0.5,1.075)},anchor = south,/tikz/every even column/.append style={column sep=.65em}},
	grid = both,
	ytick distance = 0.25,
	]
	
	\addplot[mark=none,color=plotblue] table[x=mu,y=dg] {degantideg.dat};
	\addplot[mark=none,color=plotmagenta] table[x=mu,y=adg] {degantideg.dat};
	\legend{$\operatorname{dg}(\WC_{1/2,\mu})$,$\operatorname{adg}(\WC_{1/2,\mu})$};
	\end{axis}
	\end{tikzpicture}
    \caption{ Degradability parameter $\operatorname{dg}(\WC_{1/2,\mu})$ (blue)
    and antidegradability parameter $\operatorname{adg}(\WC_{1/2,\mu})$
    (magenta), which are computed using the SDPs from \cite{SutterScholzEA17}.
    A quantum channel $\BC$ is degradable iff $\operatorname{dg}(\BC)=0$, and
    anti-degradable iff $\operatorname{adg}(\BC)=0$.  }
	\label{fig:deg-antideg}
\end{figure}

%\clearpage

\section{Concluding remarks}
\label{sec:discussion}

We have studied families of channels that are very simple, yet still
nontrivial in terms of their capacities.  Our goal is to better
understand the boundary between trivially solvable and
incomprehensibly complex behavior.  Our main results demonstrate how
intricately narrow this boundary can be---with complex quantum
effects arising in seemingly innocent and generic settings.
We conclude this paper by highlighting some of these results.

The primary example of quantum channel in this study is obtained by
combining two very simple channels.
As mentioned at the end of Sec.~\ref{sec:NsChan}, we construct the
channel of interest $\NC_s$ by ``hybridizing'' a degradable channel
$\NC_2$ and a completely useless channel $\NC_1$.
The quantum, private, and classical capacities of $\NC_1$ are all
zero.
Meanwhile, the coherent information and various capacities of the
degradable channel $\NC_2$ can be evaluated: $\QC^{(1)}(\NC_2) =
\QC(\NC_2) = \PC(\NC_2) < 1 =  \CC(\NC_2)$.
Both channels have 2-dimensional inputs, and nothing extraordinary
on their own.
We ``stitch'' these channels together, via a common joint output
state to the receiver and the environment.  
The resulting hybrid channel $\NC_s$ has capacities bearing interesting
relationships with those of $\NC_2$: their coherent informations are identical,
$\QC^{(1)}(\NC_s) = \QC^{(1)}(\NC_2)$, and conditioned on the spin alignment
conjecture, $\QC(\NC_s) = \QC(\NC_2)$~(see
Fig.~\ref{fig:qcap-bounds-comparison}).  Meanwhile, both the private and
classical capacity of $\NC_s$ are equal to the classical capacity of $\NC_2$.
In other words, starting from $\NC_2$ and ``stitching'' onto it a completely
useless channel $\NC_1$ boosts the private capacity of $\NC_2$ from its quantum
capacity to its strictly larger classical capacity, while all other quantities
remain the same.

Both the classical and private capacities of $\NC_s$ are, quite intuitively,
equal to 1.  This value as an upper bound comes from the uselessness of the
input state $\ket{2}$ for sending classical information in addition to what can
already be sent using the states $\ket{0}$ and $\ket{1}$, a property inherited
from the uselessness of $\NC_1$.  The remaining input space is $2$-dimensional
so the classical capacity cannot exceed 1.
This value as a lower bound comes from a simple, single-letter, perfect,
private code using two signalling states, $\ket{0}$, along with a mixture of
$\ket{1}$ and $\ket{2}$.  These give rise to respective orthogonal states to
the output but identical states to the environment.
Note that this private code is made possible by the stitching
of $\NC_1$ to $\NC_2$; in fact, this quantum ``stitch'' contributes
to a minimal shield in the p-bit framework discussed in
Sec.~\ref{sec:cap-discussion}.  
Since the private classical rate cannot exceed the classical capacity,
both private and classical capacities must be $1$.
It is highly non-trivial to evaluate these capacities rigorously;
$\NC_s$ does not belong to any known class of channels with additive
Holevo and private informations.
These capacity calculations also prove that that channel's Holevo and
private information are additive in the sense $\chi(\NC_s)
= \CC(\NC_s)$ and $\PC^{(1)}(\NC_s) =
\PC(\NC_s)$.  This additivity however is shown using arguments that
are very different from those used in prior works.

Furthermore, in Sec.~\ref{sec:pccap}, we not only find the classical
and private capacity of $\NC_s$ and its higher dimensional analogue
$\MC_d$, we also provide a strong converse bound for each of these
capacities.  The bound shows that classical or private transmission
rates exceeding the capacity attacts an error that converges to 1
exponentially.  Such strong converse bounds are unavailable for most
quantum channels even when one can compute their capacities.

The quantum capacity of $\NC_s$ is somewhat more complicated, but apparently
can also be understood.  In particular, by restricting to the input space of
$\NC_2$, the quantum capacities are related as $\QC(\NC_2) \leq \QC(\NC_s)$.
Relative to the spin alignment conjecture, we find that $\QC(\NC_s) =
\QC(\NC_2)$ where $\QC(\NC_2)=\QC^{(1)}(\NC_2)$ because of its degradability.
The capacity of $\NC_s$ can be understood this way even though it is neither
degradable nor antidegradable.
A rigorous proof that the capacity of $\NC_s$ is additive would follow from
the spin-alignment conjecture. Such a proof would be qualitatively different
from prior additivity proofs. Our hope is that this spin-alignment conjecture,
which is at heart about the geometric structure of states minimizing
entropy, will lead to further progress on additivity questions in information
theory.

We finally have a family of channels that are not of the usual tractable types
and yet for which we know the classical, private, and quantum capacities.
Underlying this superficial simplicity, the private capacity is much higher
than the quantum capacity, a signature for novel quantum effects at play.  
These channels, and their higher dimensional generalizations, continue to
surprise.  
One may have thought that the weak additivities observed here would
portend strong additivity.  But nothing could be further from the
truth.  In a companion paper \cite{paper2PRL}, we find that the coherent
information of the channel $\NC_s$ tensored with an assisting channel
is super-additive, for a large swath of values of $s$ and for some generically
chosen assisting channel.
The super-additivity can be lifted to quantum capacity for degradable
assisting channels and if the spin alignment conjecture holds.
The assisting channel can have positive or vanishing quantum capacity.
The mechanism behind this superadditivity is novel and in particular
differs from the known explanation of super-activation \cite{SmithYard08,Oppenheim08}.
Additional super-additivity of quantum capacity that is unconditional
on the spin alignment conjecture can be proved for the $d$-dimensional
generalization $\MC_d$ of $\NC_{1/2}$, when it is used with a
$(d{-}1)$-dimensional erasure channel for all nontrivial values of the erasure
probability!
In contrast, the pan-additivity of our channels is a fascinating and
amazing progress.

\paragraph*{Acknowledgments}
We thank Mark M.~Wilde for helpful comments regarding feedback-assisted capacities.
% This work was supported by NSF grants CCF 1652560 and PHY 1915407, and an NSERC discovery grant.
This work was partially supported by ARO MURI Quantum Network Science under
contract number W911NF2120214, NSF grants CCF 1652560, PHY 1915407, and 2137953, and an NSERC discovery grant.

\appendix

\section{Representations of the platypus channel family}

\subsection{$\NC_s$ channel}
Let $\HC_a\cong\HC_b\cong \mathbb{C}^3$ and $\HC_c\cong\mathbb{C}^{2}$, and $s\in[0,1/2]$.
The platypus channel $\NC_s\colon \hat{\HC}_a\to \hat{\HC}_b$ can be defined as follows:

\paragraph{Isometry} $\NC_s(X_a) = \tr_c F_sX_a F_s^\dagger$ with $F_s\colon \HC_a \mapsto \HC_b \ot \HC_c$ defined as
	\begin{align}
		\begin{aligned}
			F_s\colon \ket{0}_a &\longmapsto \sqrt{s} \; \ket{0}_b \ket{0}_c + \sqrt{1-s} \; \ket{1}_b  \ket{1}_c\\
		\ket{1}_a & \longmapsto \ket{2}_b \ket{0}_c\\
		\ket{2}_a & \longmapsto \ket{2}_b \ket{1}_c.
		\end{aligned}
		\label{isoDef1-app}
	\end{align}

\paragraph{Choi operator} $J^{\NC_s}_{ab} = d_a(\IC_a\otimes \NC_s)([\phi])\colon \HC_a\otimes \HC_b \to \HC_a\otimes \HC_b$ ($0$'s are replaced by $.$'s for readability),

\begin{align}
	J^s_{ab} &= \begin{pmatrix}
%		            00         01             02  10  11  12         20  21  22
	\iffalse 00 \fi	s        & .            & . & . & . & \sqrt{s} & . & . & .\\
	\iffalse 01 \fi	.        & 1-s          & . & . & . & .        & . & . & \sqrt{1-s}\\
	\iffalse 02 \fi	.        & .            & . & . & . & .        & . & . & .\\
	\iffalse 10 \fi	.        & .            & . & . & . & .        & . & . & .\\
	\iffalse 11 \fi	.        & .            & . & . & . & .        & . & . & .\\
	\iffalse 12 \fi	\sqrt{s} & .            & . & . & . & 1        & . & . & .\\
	\iffalse 20 \fi	.        & .            & . & . & . & .        & . & . & .\\
	\iffalse 21 \fi	.        & .            & . & . & . & .        & . & . & .\\
	\iffalse 22 \fi	.        & \sqrt{1-s}   & . & . & . & .        & . & . & 1\\
	\end{pmatrix}
	\label{eq:Ns-choi-app}
\end{align}

\paragraph{Kraus operators} $\NC_s(X_a) = N_0 X_a N_0^\dagger + N_1 X_a N_1^\dagger$ with $N_i\colon \HC_a\to \HC_b$ defined as
\begin{align}
	N_0 &= \begin{pmatrix}
		\sqrt{s} & 0 & 0\\ 0 & 0 & 0\\ 0 & 1 & 0
	\end{pmatrix} & 
	N_1 &= \begin{pmatrix}
		0 & 0 & 0\\ \sqrt{1-s} & 0 & 0\\ 0 & 0 & 1
	\end{pmatrix}
\end{align}

\subsection{$\MC_d$ channel}
For $d\in\mathbb{N}$, $d\geq 3$ let $\HC_a\cong\HC_b\cong \mathbb{C}^d$ and $\HC_c\cong\mathbb{C}^{d-1}$.
The platypus channel $\MC_d\colon \hat{\HC}_a\to \hat{\HC}_b$ can be defined as follows:

\noindent\paragraph{Isometry} $\MC_d(X_a) = \tr_c GX_a G^\dagger$ with $G\colon \HC_a\to \HC_b\otimes \HC_c$ defined as
\begin{align}
	\begin{aligned}
		G\colon |0\rangle_a &\longmapsto \frac{1}{\sqrt{d-1}} \sum_{j=0}^{d-2} |j\rangle_b|j\rangle_c\\
		|j\rangle_a &\longmapsto |d-1\rangle_b |j-1\rangle_c\quad\text{for }j=1,\dots,d-1.
	\end{aligned}
	\label{eq:Md-app}
\end{align}

\paragraph{Choi operator} $J_{ab}^{\MC_d} = d_a(\IC_a\otimes \MC_d)([\phi])\colon \HC_a\otimes \HC_b \to \HC_a \otimes \HC_b$,

\begin{multline}
	J_{ab}^{\MC_d} = \sum_{j=0}^{d-2} \left(\frac{1}{d-1} [0j] + [j+1,d-1] \right.\\ 
	{}+ \left.	\frac{1}{\sqrt{d-1}}\left(|0,j\rangle\langle j+1,d-1| +
	|j+1,d-1\rangle\langle 0,j|\right)\right),\label{eq:Md-Choi-app}
\end{multline}
where $[\psi] \equiv |\psi\rangle\langle\psi|$.

\paragraph{Kraus operators} $\MC_d(X_a) = \sum_{k=0}^{d-2} M_k X_a M_k^\dagger$ with $M_k\colon \HC_a\to\HC_b$ for $k=0,\dots d-2$ defined as

\begin{align}
	M_k &= \frac{1}{\sqrt{d-1}} |k\rangle_b \langle 0|_a + |d-1\rangle_b \langle k+1|_a.
\end{align}

\subsection{$\OC$ channel}
For $d\in\mathbb{N}$, $d\geq 2$ let $\HC_a\cong\HC_b\cong \mathbb{C}^d$ and $\HC_c\cong\mathbb{C}^{d-1}$.
For $0\leq j\leq d-2$ let $\mu_j\in\mathbb{C}$ with $\sum_{j=0}^{d-2} |\mu_j|^2 = 1$.
The platypus channel $\OC\colon \hat{\HC}_a\to \hat{\HC}_b$ can be defined as follows:

\paragraph{Isometry} $\OC(X_a) = \tr_c HX_a H^\dagger$ with $H\colon \HC_a \to \HC_b\otimes \HC_c$ defined as
\begin{align}
	\begin{aligned}
		H\colon \ket{0}_a &\longmapsto \sum_{j=0}^{d-2} \mu_j \ket{j}_b\ket{j}_c\\
		\ket{j}_a &\longmapsto \ket{d-1}_b \ket{j-1}_c\quad\text{for }j=1,\dots,d-1.
	\end{aligned}
	\label{isoDef3-app}
\end{align}

\paragraph{Kraus operators} $\OC(X_a) = \sum_{k=0}^{d-2} O_k X_a O_k^\dagger$ with $O_k\colon HC_a\to\HC_b$ for $k=0,\dots d-2$ defined as

\begin{align}
	O_k &= \mu_k |k\rangle_b \langle 0|_a + |d-1\rangle_b \langle k+1|_a.
\end{align}
                                           
\printbibliography[heading=bibintoc]

\end{document}